\documentclass[sigconf, nonacm]{acmart}
\usepackage{marginnote}

\usepackage[nameinlink, noabbrev, capitalize]{cleveref}
\usepackage{thm-restate}
\usepackage{booktabs} 
\usepackage{color,soul}
\usepackage{fancyvrb}
\usepackage{amsmath}
\usepackage{mathtools}
\usepackage{float}
\usepackage{bbm}
\usepackage{tcolorbox}
\usepackage{graphicx}
\usepackage{lipsum}
\usepackage{xfrac}
\usepackage{colortbl}
\usepackage{multirow}
\usepackage{balance}
\usepackage{paralist}
\usepackage{tabularx}
\usepackage{etoolbox}
\usepackage{caption}
\usepackage{subcaption}
\usepackage{algorithm}
\usepackage{algpseudocode}
\usepackage{enumitem}
\usepackage{thmtools, thm-restate} 
\usepackage{xspace}
\usepackage{tikz}
\usepackage{array}
\usepackage{xcolor}
\usepackage{multicol}
\usepackage{subcaption}
\usepackage{enumitem}
\usepackage{makecell}
\usepackage{tikz}
\usepackage{booktabs}
\usepackage{subcaption}
\usepackage[table]{xcolor}

{\typeout{#1}}
\newcommand\plabel[1]{\phantomsection\label{#1} } 

\newtheorem{theorem}{Theorem}[section]

\theoremstyle{remark}
\newtheorem{remark}[theorem]{Remark}
\newtheorem{example}[theorem]{Example}

\newtheorem*{problem*}{Problem}

\algrenewcommand\algorithmicensure{\textbf{Output:}}
\algrenewcommand\algorithmicrequire{\textbf{Input:}}

\def\OPT{\ensuremath{\mathrm{OPT}}}

\def\R{\ensuremath{\mathbb{R}}}

\def\cA{\ensuremath{\mathcal{A}}}
\def\cP{\ensuremath{\mathcal{P}}}
\def\cD{\ensuremath{\mathcal{D}}}
\def\cM{\ensuremath{\mathcal{M}}}

\def\cR{\ensuremath{\mathcal{R}}}

\def\cG{\ensuremath{\mathcal{G}}}

\def\cV{\ensuremath{\mathcal{V}}}
\def\cI{\ensuremath{\mathcal{I}}}

\def\Exp{\mathbb{E}}

\renewcommand{\Pr}{\mathop{\bf Pr\/}}

\newcommand{\set}[1]{\ensuremath{\{ #1 \}}}

\newcommand{\eps}{\varepsilon}

\newcommand{\eqdef}{\triangleq}

\newcommand{\norm}[1]{\left\|#1\right\|}
\newcommand{\abs}[1]{\left\vert#1\right\vert}

\newcommand{\E}[1]{\underset{#1}{\mathbb{E}}}

\newcommand{\nbr}{{\sim}}

\newcommand{\paratitle}[1]{\vspace{1mm}\noindent\textbf{{#1}.}}
\newcommand{\paren}[1]{\left( #1 \right)}

\newcommand{\Esymb}{\mathbb{E}}

\makeatletter

\def\Ex#1{%
    \ProbabilityRender{\Esymb}{#1}%
}

\def\ProbabilityRender#1#2{
  \@ifnextchar\bgroup%
  {\renderwithdist{#1}{#2}}
   {\singlervrender{#1}{#2}}
}
\def\singlervrender#1#2{%
   \ensuremath{\mathchoice
       {{#1}\left[ #2 \right]}
       {{#1}[ #2 ]}
       {{#1}[ #2 ]}
       {{#1}[ #2 ]}
   }
}
\def\renderwithdist#1#2#3{%
   \@ifnextchar\bgroup
   {\superfancyrender{#1}{#2}{#3}}
   {\ensuremath{\mathchoice
      {\underset{#2}{#1}\left[ #3 \right]}
      {{#1}_{#2}[ #3 ]}
      {{#1}_{#2}[ #3 ]}
      {{#1}_{#2}[ #3 ]}
     }
   }
}
\def\superfancyrender#1#2#3#4#5{
   \ensuremath{\mathchoice
      {\underset{#1}{{#1}}\left#4 #3 \right#5}
      {{#1}_{#2}#4 #3 #5}
      {{#1}_{#2}#4 #3 #5}
      {{#1}_{#2}#4 #3 #5}
   }
}
\makeatother

\newcommand{\rk}{k}

\newcommand{\MSD}{{MSD}\xspace}

\newcommand{\Greedy}{{\texttt{Greedy}}\xspace}
\newcommand{\DPG}{{\texttt{DP-Greedy}}\xspace}
\newcommand{\DPOSG}{{\texttt{DP-OSG}}\xspace}
\newcommand{\DPNOSG}{{\texttt{DP-NOSG}}\xspace}
\newcommand{\LS}{{\texttt{LS}}\xspace}
\newcommand{\DPSLS}{{\texttt{DP-SLS}}\xspace}
\newcommand{\Random}{{\texttt{Random}}\xspace}
\newcommand{\ExpMech}{{\texttt{EM}}\xspace}

\newcommand{\DPGRel}{{\texttt{DP-RelG}}\xspace}
\newcommand{\DPMix}{{\texttt{DP-Mix}}\xspace}

\newif\ifpaper
\paperfalse
\begin{document}

\title{Fast and Private Max-Sum Diversification}
\author{Ron Zadicario}
\affiliation{%
  \institution{Tel Aviv University}
  }  
\email{ronzadicario@mail.tau.ac.il}

\author{Tova Milo}
\affiliation{%
  \institution{Tel Aviv University}
  }
\email{milo@post.tau.ac.il}

\begin{abstract}
Result diversification is crucial for generating informative, non-redundant data summaries and query outputs. 
Although its various formulations have been extensively studied across an array of  data-driven disciplines, existing methods  fail to address the privacy concerns that arise when the underlying data is sensitive.
In this work, we initiate the study of result diversification under \emph{differential privacy}, focusing on the \emph{max-sum diversification} (\MSD) problem, a widely adopted model with the objective of maximizing a linear combination of a submodular function, quantifying relevance, and the sum of pairwise distances between selected items, quantifying diversity. We propose differentially private algorithms for \MSD under both cardinality and matroid constraints, achieving nearly optimal utility guarantees. At the same time, we design more efficient algorithms that maintain strong guarantees. Notably, the proposed algorithms are faster than existing non-private methods, making them appealing even in non-private settings. Experimental evaluations on real-world datasets demonstrate that the proposed approach achieves utility comparable to that of non-private baselines even under strong privacy guarantees, and significantly improves execution times for cardinality constraints.
\end{abstract}

\maketitle

\section{Introduction}
Result diversification is the problem of selecting a subset of items that is both relevant and diverse, subject to feasibility constraints. It underlies numerous data summarization and selection tasks, where capturing data variety, avoiding redundancy, and providing informative outputs are crucial. Consequently, its various formulations have been extensively studied across multiple domains, including databases, machine learning, and operations research.
 Among these formulations, the \emph{max-sum diversification} (\MSD) problem is one of the most well-studied and widely adopted diversity models in the database literature. It has been applied to a range of tasks, including query result diversification~\cite{deng2014complexity,borodin2017max,fraternali2012top,wang2018rc,agarwal2024computing}, graph summarization~\cite{song2018mining,guo2020diversified}, rule mining~\cite{bian2024discovering,han2026fast}, spatial keyword search~\cite{zhang2014diversified}, as well as recommender systems~\cite{castells2021novelty,puthiya2016coverage,coppolillo2024relevance} and feature selection~\cite{zadeh2017scalable,ghadiri2019distributed}.
 
In this problem, we are given a set of items, referred to as the \emph{ground set}, together with pairwise distances and a monotone submodular set function $f$ that quantifies the \emph{relevance} of each subset. The objective is to select a subset $S$ that maximizes a linear combination of $f(S)$ and the sum of pairwise distances among the items in $S$, subject to given feasibility constraints~\cite{borodin2017max}. Prior work has primarily focused on cardinality constraints, which limit the size of the selected subset, and more generally on \emph{matroid} constraints (defined in \Cref{sec:prelim}), which capture a broad range of practical feasibility criteria. A notable special case of this model is that of diverse Top-$k$ queries~\cite{gollapudi2009axiomatic}, where each item $v$ is associated with a score $q(v)$, the relevance of a subset $S$ is given by $f(S)=\sum_{v \in S} q(v)$, and the constraint is cardinality $|S|\le k$. By allowing arbitrary monotone submodular relevance functions and  matroid constraints, \MSD supports a substantially more expressive class of data summarization tasks, as illustrated in \Cref{example:motivating}.

Yet, many compelling applications of result diversification, and of \MSD in particular, involve sensitive individual data, such as purchase histories, locations, and search patterns. Improper handling of such data can pose significant privacy risks~\cite{zhang2014privacy,weinsberg2012blurme}, and therefore the maximization procedure must provide formal privacy guarantees for the individuals contributing to the dataset.

\begin{example}\label{example:motivating}
An online retailer wishes to select a set $S$ of $k$ popular products from a dataset filtered by $\texttt{category}=\text{``Health Care''}$, for purposes such as catalog design, promotion, or trend analysis. Selecting the Top-$k$ most purchased items may primarily reflect the preferences of a narrow group of highly active users. To instead maximize user reach, the retailer can maximize the coverage function $f(S)=\frac{1}{m}\left|\bigcup_{v\in S}\mathcal{U}(v)\right|$, where $\mathcal{U}(v)$ denotes the set of users who purchased item $v$ and $m$ is the total number of users. This objective is monotone submodular and often used for summarization~\cite{dasgupta2013summarization}. However, optimizing coverage alone may still yield a redundant set of items: \Cref{tab:motivating_redundant} shows results on the Amazon Reviews dataset~\cite{amazon}, where three of the six selected items (indicated by shaded rows) are essential oils. Selecting overly similar products may neglect users’ varied interests and reduce long-term satisfaction~\cite{Castells2015}.

To encourage diversity, each product is associated with a set of subcategories, and pairwise distances are defined using the Jaccard distance~\cite{leskovec2020mining} between these sets. The retailer can then maximize a linear combination of $f(S)$ and the sum of pairwise distances among items in $S$, producing a relevant and diverse set of products. Additional constraints can be incorporated naturally. For example, limiting the number of selected products within each price category induces a matroid constraint.

Critically, the selection procedure must also preserve users' privacy. Since the objective is derived from individual transaction logs, an exact solution could inadvertently reveal sensitive information. For instance, the inclusion of a niche product could signal the presence of a user with a rare medical condition. Preventing such leakage necessitates a mechanism that provides formal privacy guarantees. Using the algorithms proposed in this work, the retailer can generate a relevant, diverse, and privacy-preserving summary, as illustrated in \Cref{tab:motivating_diverse}, where the shaded rows indicate two products added by the private and diversity-aware algorithm.
\end{example}

\begin{figure}
    \centering
    \setlength{\tabcolsep}{2pt} 
  
    \begin{subtable}[t]{.485\columnwidth}
        \centering
        \fontsize{7pt}{8pt}\selectfont
        \begin{tabular}{ll}
            \toprule
            \textbf{Product} & \textbf{Subcategory} \\ \midrule
            Muscle Cream & Treatments \\
            \rowcolor{gray!15}Cassia Oil &  Essential Oils \\
            Oil Diffuser & Diffusers \\
            Heating Pad & Hot \& Cold Therapies \\
            \rowcolor{gray!15} Garlic Oil &  Essential Oils \\
            \rowcolor{gray!15} Rosemary Oil &  Essential Oils \\ 
            \midrule
            \multicolumn{2}{c}{\textbf{Coverage:} 0.15 \quad \textbf{Diversity:} 0.20} \\
            \bottomrule
        \end{tabular}
       \caption{Relevance-only selection. Three of six items are from the ``Essential Oils'' subcategory (shaded).}
        \label{tab:motivating_redundant}
    \end{subtable}
    \hfill
    \begin{subtable}[t]{.485\columnwidth}
        \centering
        \fontsize{7pt}{8pt}\selectfont
        \begin{tabular}{ll}
            \toprule
            \textbf{Product} & \textbf{Subcategory} \\ \midrule
            Muscle Cream & Treatments \\
            Cassia Oil & Essential Oils \\
            Oil Diffuser & Diffusers \\
            Heating Pad & Hot \& Cold Therapies \\
            \rowcolor{blue!10} Shoe Insoles & Foot Health \\
            \rowcolor{blue!10} Ear Otoscope & Ear Care \\ 
            \midrule
            \multicolumn{2}{c}{\textbf{Coverage:} 0.145 \quad \textbf{Diversity:} 0.31} \\
            \bottomrule
        \end{tabular}
        \caption{Selection balancing relevance and diversity. Shaded items are newly added.}
        \label{tab:motivating_diverse}
    \end{subtable}
    \caption{Representative products from the Amazon Health Care category. Product and subcategory names are simplified for readability.}
    \label{tab:motivating_example}
\end{figure}

Differential Privacy~\cite{dwork2006differential,dwork2014algorithmic} is a rigorous notion that has become the gold standard for analyzing sensitive data under strong privacy guarantees, and has been adopted by multiple companies~\cite{Abowd18,dwork2019differential} and governmental organizations~\cite{erlingsson2014rappor,ding2017collecting,tang2017privacy}.
Intuitively, the output distribution of a differentially private (DP) algorithm changes only minimally when a single individual’s data is modified, thereby ensuring that individual-level information is not disclosed (see \Cref{sec:prelim} for the formal definition).  Privacy is typically achieved by injecting noise into the computation process, which may result in some degradation in utility. In our setting, achieving stronger privacy often means computing a set of items with a slightly lower quality score.
Although DP submodular maximization has received considerable attention in recent years~\cite{gupta2010differentially,mitrovic2017differentially,chaturvedi2021differentially,rafiey2020fast}, the design of DP algorithms for the more general \MSD problem has yet to be addressed. Existing techniques for DP submodular maximization do not directly apply to \MSD,  as its objective function is not submodular.

In this work, we study the \MSD problem under differential privacy. We  propose DP algorithms that provide near-optimal utility guarantees under both cardinality and general matroid constraints. Our proposed algorithms also improve time complexity compared to existing non-private methods, making them valuable even in non-private settings. Experimental evaluation on two real-world applications, Amazon product and Uber pickup location summarization, shows that our approach achieves utility comparable to non-private baselines while significantly reducing the number of objective evaluations. We now state our contributions, with a summary of the main results in \Cref{tab:results}. To provide context, \Cref{sec:related} reviews relevant lower bounds from prior work.

\subsection{Our Results}\label{subsec:our_results}
 In the DP setting, the \MSD objective function depends on a sensitive dataset $D$ and is a linear combination of a submodular relevance term and a diversity term. It is called \emph{decomposable} if it can be expressed as a sum of functions, each 
 depending on a single record of $D$. The formal definitions are provided in \Cref{subsec:problem_formulation}.

We begin with a direct DP adaptation of the greedy algorithm of \citet{borodin2017max}, which serves as a natural baseline. Building on this, our main contribution is a faster DP algorithm whose analysis goes well beyond the original work, simultaneously addressing subsampling, privacy, and the absence of submodularity.

\begin{table*}[t]
\caption{Expected utility guarantees of $(\eps, \delta)$-DP algorithms and their query complexity. Here $\Delta$ is the sensitivity (equivalent to the decomposability parameter for decomposable objectives), $n$ is the ground set size, $\rk$ is the maximal feasible solution size, and $\gamma$ controls the utility–complexity trade-off. For general $\Delta$-sensitive functions with cardinality constraints, additive errors incur an additional  $\sqrt{\rk/\log(1/\delta)}$ factor. Setting $\eps=\infty$ yields non-private algorithms with the same multiplicative approximation ratios. The $1/2$-approximation ratio is tight under standard complexity-theoretic assumptions (see~\Cref{sec:related}).}
\footnotesize
\centering
\renewcommand{\arraystretch}{1.4} 
\setlength{\tabcolsep}{8pt}
\begin{tabular}{@{}llclll@{}}
\toprule
\textbf{\makecell[t]{Constraint}} & \textbf{\makecell[t]{Approx.}} & \textbf{\makecell[t]{Additive Error \\ {\scriptsize ($\Delta$-decomposable)}}} & \textbf{\makecell[t]{Oracle Calls}} & \textbf{\makecell[t]{Algorithm}} & \\
\midrule
$k$-Cardinality & $1/2$ & $O(\Delta k \eps^{-1} \log \delta^{-1} \log n)$ & $O(nk)$ & \DPG (\Cref{alg:greedy}) & (i) \\
\addlinespace[2pt]
& $1/2 - \gamma$ & $O(\Delta k \eps^{-1} \log \delta^{-1} \log n)$ & $O(n\log k\log \gamma^{-1})$ &  \DPNOSG (\Cref{alg:sample_greedy}) & (ii) \\
\addlinespace[2pt]
& $1 - 2/e - \gamma -1/\rk$ & $O(\Delta k \eps^{-1} \log \delta^{-1} \log n)$  & $O(n\log \gamma^{-1})$ & \DPOSG (\Cref{alg:sample_greedy}) & (iii) \\
\midrule
Matroid (rank $k$) & $1/2 - \gamma$ &    
 $O\left( \frac{\Delta \sqrt{k \log k \log \delta^{-1}}}{\eps\sqrt{\gamma}} \big( k \log n + \log \gamma^{-1} \big) \right)$
 & $O(\gamma^{-1}nk\log k)$ & \DPSLS (\Cref{alg:sample_local_search}) & (iv) \\
\bottomrule
\end{tabular}
\label{tab:results}
\end{table*}
\paratitle{Greedy algorithm for cardinality constraints}
Cardinality constraints, which capture the fundamental Top-$k$ query class, are among the most commonly studied constraint types. \citet{borodin2017max} proposed a \emph{non-oblivious}\footnote{An optimization algorithm is non-oblivious if it selects the next element with respect to an auxiliary function rather than the true objective.} greedy algorithm for cardinality-constrained \MSD, achieving a tight $1/2$-approximation. We extend their analysis to account for the error introduced by replacing greedy selections with privatized ones. For decomposable objectives, we show that the technique of~\citet{gupta2010differentially} provides improved privacy guarantees. The precise results, corresponding to row \textit{(i)} of \Cref{tab:results}, are discussed in \Cref{sec:greedy}. 

However, this algorithm makes $\Theta(nk)$ value oracle calls\footnote{Number of evaluations of the objective function; see \Cref{sec:prelim}.}, where $k$ is the cardinality bound and $n$ is the ground set size. For large $k$ (e.g., $k = \Omega(n)$), this complexity reaches $\Omega(n^2)$ oracle calls. Such scaling becomes a significant bottleneck in high-concurrency environments or interactive systems where each execution of the algorithm corresponds to a single user query and many such queries must be processed simultaneously. Even in well-resourced systems, the cumulative computational overhead of $\Theta(nk)$ evaluations per instance limits total system throughput and increases latency, underscoring the need for more efficient algorithms that better suit high-volume workloads.

\paratitle{Faster Greedy Algorithm via Subset Sampling}
Algorithms for  submodular maximization are often accelerated by ``sample greedy'' methods, where the next element is chosen from a smaller random subset rather than the entire ground set, improving efficiency at the cost of a slightly weaker approximation guarantee~\cite{buchbinder2017comparing,mirzasoleiman2015lazier,mitrovic2017differentially}. While these analyses rely on submodularity, which does not hold in our setting, we nonetheless propose an efficient ``sample greedy'' algorithm for the DP \MSD problem with cardinality constraints.
\emph{To our knowledge, this is the fastest known algorithm for \MSD, making it valuable even in non-private settings.}

We present two instantiations of the algorithm, each illustrating a distinct trade-off between complexity and utility. Furthermore, we extend the technique of \citet{gupta2010differentially} to incorporate the subsampling step and prove stronger guarantees for decomposable objectives.
Following established literature in both private \cite{mitrovic2017differentially} and non-private \cite{buchbinder2017comparing, mirzasoleiman2015lazier} submodular maximization, our utility guarantees are provided in expectation, as the randomness inherent in the subsampling process typically precludes the strong concentration required for meaningful high probability bounds.
The precise results, corresponding to rows \textit{(ii)} and \textit{(iii)} of \Cref{tab:results}, are detailed in \Cref{sec:sample_greedy}.

\paratitle{Local Search for Matroid Constraints}
Some data summarization tasks require more general constraints than simple cardinality, and matroids are among the  most studied such constraints.
Since the greedy paradigm does not provide any constant-factor approximation for the \MSD problem under matroid constraints, \citet{borodin2017max} showed that a natural single-swap local search algorithm achieves a $(1/2-\gamma)$-approximation. However, checking for local optimality or ensuring the selection of improving swaps is infeasible under differential privacy. To address this, we propose a DP local search algorithm for \MSD under matroid constraints that performs a predefined number of local search steps and selects swaps privately. 
The analysis bounds the loss incurred by allowing non-improving swaps, yielding near-optimal guarantees. We further improve efficiency by sampling a subset of candidate swaps in each iteration, reducing to $O(\gamma^{-1}\rk\log \rk)$ oracle calls, a substantial improvement over the $O(n^2 + \gamma^{-1}n\rk^2\log \rk)$ by \citet{borodin2017max}. \emph{To our knowledge, this algorithm achieves the best known complexity for \MSD under matroid constraints.}

Importantly, even without privacy, the algorithm achieves (in expectation) the same approximation ratio as \citet{borodin2017max} with improved complexity. The precise statements of our results, corresponding to row \textit{(iv)} of \Cref{tab:results}, are given in \Cref{sec:matroid_local_search}.

\section{Related Work}\label{sec:related}
We provide a brief survey of relevant literature, starting with non-private settings, followed by lower bounds, and finally covering additional privacy-related contributions.

\paratitle{Non-Private Result Diversification}
The \MSD problem generalizes the \emph{max-sum dispersion} problem (also called \emph{remote-clique}), which aims to maximize $\sum_{\{u,v\} \subseteq S} d(u,v)$ and has attracted sustained attention for over three decades~\cite{hansen1988dispersing, ravi1994heuristic, hassin1997approximation, 2003Fekete}. \citet{gollapudi2009axiomatic} first studied \MSD for modular relevance and cardinality constraints, obtaining a $1/2$-approximation by reduction to max-sum dispersion, which was later extended to matroid constraints by \citet{abbassi2013diversity}. \citet{borodin2017max} substantially generalized this framework to submodular relevance and matroid constraints, providing a greedy algorithm for cardinality constraints and a local search algorithm for general matroid constraints with $1/2$ and $(1/2-\gamma)$ approximations, respectively. For Euclidean distances with modular relevance, \citet{indyk2014composable} introduced composable coresets for streaming and distributed settings. \citet{ceccarello2017mapreduce} later generalized these constructions to metrics with bounded doubling dimension, and \citet{ceccarello2018fast} extended the framework to matroid constraints.  A fully dynamic variant for doubling metrics was studied by \citet{pellizzoni2025fully}.
\citet{agarwal2020efficient} proposed near-linear-time algorithms using efficient indexes for Euclidean distances. Additionally, \citet{cevallos2017local} studied \MSD under negative-type distances, a setting that is incomparable to general metric distances. Despite the extensive literature, none of these works provides formal privacy guarantees.

\paratitle{DP Submodular Maximization}
As the \MSD problem has not been studied in the context of differential privacy, we review the most closely related line of work, which focuses on DP submodular maximization.
The work of \citet{gupta2010differentially} was the first to address the problem, focusing on decomposable objective functions under cardinality constraints. \citet{mitrovic2017differentially} considered the more general case of submodular objectives with bounded sensitivity, proposing algorithms for both monotone and non-monotone functions under matroid and $p$-extendable systems constraints. 
\citet{chaturvedi2021differentially} revisited decomposable objectives, extending the results of \citet{gupta2010differentially} from cardinality to matroid constraints and from monotone to non-monotone functions.
\citet{rafiey2020fast} studied private submodular and $k$-submodular maximization under matroid constraints.
Other notable advances include \citet{salazar2021differentially}, who studied the problem in the online setting under cardinality constraints; \citet{sadeghi2021differentially}, who considered bounded-curvature objectives; \citet{zadicario2026differentially}, who addressed knapsack constraints for both monotone and non-monotone objectives; and \citet{chaturvedi2023streaming}, who proposed a streaming algorithm for cardinality constraints.
 Yet, all of these approaches apply only to submodular objectives and do not directly extend to \MSD.

\paratitle{Hardness and Lower Bounds}
Cardinality-constrained max-sum dispersion is known to be NP-hard even with metric distances~\cite{hansen1988dispersing}; hence, the more general \MSD problem is also NP-hard. Moreover, \citet{borodin2017max} showed that achieving an approximation factor better than $1/2$ is hard under the planted-clique hardness assumption~\cite{alon2011inapproximability}. As summarized in \Cref{tab:results}, the guarantees in rows (i), (ii), and (iv) match or nearly match this tight approximation factor. In contrast, row (iii) achieves improved complexity while providing a slightly weaker approximation guarantee.

Additionally, the DP \MSD problem in \Cref{subsec:problem_formulation} generalizes DP monotone submodular maximization, and hence known lower bounds for that problem extend to our setting. In particular, \citet{gupta2010differentially} proved that any $\eps$-DP algorithm for maximizing a submodular function under a cardinality constraint $\rk$ over a ground set of size $n$ must incur expected additive error $\Omega(\Delta \rk \eps^{-1} \log(n/\rk))$. \citet{chaturvedi2023streaming} extended this to the $(\eps,\delta)$-DP setting, showing that any such algorithm achieving multiplicative factor $c$ must incur additive error $\Omega(\Delta \rk \eps^{-1} c \log(\eps/\delta))$, assuming $n \ge \delta^{-1} \rk (e^\eps - 1)$ and $c \ge 4\delta/(e^\eps - 1)$.
In both cases, the hard instances are $1$-decomposable, implying that our results in rows (i)–(iii) of \Cref{tab:results} match the corresponding lower bounds up to logarithmic factors. These results recover the best-known additive error bounds for submodular maximization, and note that even in this restricted setting the tightness of logarithmic factors remains open~\cite{gupta2010differentially}.
For general bounded-sensitivity functions (row (iv), and rows (i)–(iii) with an additional $\sqrt{\rk/\log(1/\delta)}$ factor), our guarantees incur an additional $\sqrt{\rk}$-factor compared to the known lower bounds, which are not known to be tight in this more general setting. A similar $\tilde{O}(\sqrt{\rk})$ gap also appears in the state-of-the-art results for bounded-sensitivity submodular maximization under cardinality constraints~\cite{mitrovic2017differentially}.

\section{Preliminaries}
\label{sec:prelim}
In this section, we present the basic definitions and notation, introduce submodular set functions and matroids, and review  key concepts from differential privacy relevant to this work.

\paratitle{Notation and Set Functions} Let $V$ denote a finite ground set of size $n$. For a set function $f:2^V \to \mathbb{R}$, the \emph{marginal contribution} of an element $u\in V$ to a set $S\subseteq V$ is $f(u\mid S) = f(S\cup \{u\}) - f(S)$. A function $d: V\times V \to \mathbb{R}_{\ge 0}$ is a \emph{pseudometric} if it is symmetric, satisfies the triangle inequality, and $d(u,u)=0$ for all $u\in V$. Extend $d$ to sets via $d(S) = \sum_{\{u,v\} \subseteq S} d(u,v)$ and $d(S,T) = \sum_{u\in S, v\in T} d(u,v)$ for disjoint sets $S,T\subseteq V$. For $u\in V$, write $d(u\mid S) = d(S\cup\{u\}) - d(S) = \sum_{v\in S} d(u,v)$.

\paratitle{Submodular Functions and Matroids} A set function $f:2^V \to \mathbb{R}$ is \emph{monotone} if $f(S)\le f(T)$ for all $S\subseteq T$, \emph{non-negative} if $f(S)\ge 0$ for all $S\subseteq V$, and \emph{submodular}~\cite{nemhauser1978analysis} if $f(u\mid S) \ge f(u\mid T)$ for all $S\subseteq T$ and $u\in V\setminus T$. A \emph{matroid}~\cite{whitney1992abstract} is a pair $(V,\cI)$, where $\cI\subseteq 2^V$ satisfies \textit{(i)} if $A\subseteq B$ and $B\in \cI$ then $A\in \cI$, and \textit{(ii)} for $A,B\in\cI$ with $|A|<|B|$, there exists $u\in B\setminus A$ such that $A\cup\set{u}\in\cI$. Subsets in $\cI$ are called \emph{independent sets}, and inclusion-wise maximal independent sets are called \emph{bases}. All bases have the same cardinality, known as the \emph{rank} of the matroid.
See
\ifpaper
{Appendix~A in~\cite{full_version}}
\else
{\Cref{appendix:matroids}}
\fi
for additional background and examples of matroids.

\paratitle{Complexity Model}
We adopt the oracle complexity model, the standard abstraction in set function optimization, as it captures algorithmic complexity independently of application-specific details such as the cost of evaluating the objective function. Thus, we assume access to $f$, $d$, and $\cI$ via oracles: a \emph{value oracle} returns $f(S)$ or $d(S)$ for a given $S$, and an \emph{independence oracle} determines whether $S \in \cI$. Complexity is measured by the number of oracle calls. While each marginal gain $d(u\mid S)$ may require up to $k$ pairwise evaluations $d(u,v)$, we treat $d$ as a set function and count each evaluation of $d(S)$ as a single oracle call for consistency with the submodular objective. Although the number of oracle calls is generally a reliable proxy for runtime, discrepancies may arise in practice, as evaluating the oracle on larger sets can be slower in some applications.

\paratitle{Differential Privacy} 
A dataset $D$ is a finite multiset of records from a domain $\mathcal{X}$. The number of records in $D$ is denoted by $m = |D|$.  We let $\mathcal{D}$ denote the space of all possible datasets over this domain.
Two datasets $D$ and $D'$ are called \emph{neighboring} (denoted $D\nbr D'$) if they differ in one record.  Intuitively, differential privacy ensures that the distribution of outputs of a randomized algorithm  does not significantly change when the data of a single individual is changed.
\begin{definition}[Differential Privacy~\cite{dwork2006differential, dwork2014algorithmic}]\label{def:DP}
    A randomized algorithm $\cA$ is $(\eps,\delta)$-differentially private (DP) if for any neighboring datasets $D\nbr D'$  and any set of possible outputs $S\subseteq Range(\cA)$, 
    \[
       \Pr[\cA(D)\in S] \le e^\eps \cdot \Pr[\cA(D')\in S] +\delta.
    \]
    If $\delta=0$, we say that $\cA$ is $\eps$-DP.
\end{definition}
In our privacy analysis, we use the following standard composition results, which capture how privacy loss accumulates when multiple DP algorithms are applied sequentially. A more general statement appears in \cite{dwork2014algorithmic}.
\begin{theorem}[Composition~\cite{dwork2010boosting,dwork2014algorithmic}]\label{thm:composition}
Let $\cA_1,\dots,\cA_k$ be $\eps$-DP algorithms. Their $k$-fold adaptive composition $\cA_{[k]}$, which outputs $y_i=\cA_i(D, y_1,\dots,y_{i-1})$ for $i=1,\dots,k$, satisfies:
\begin{enumerate}[leftmargin=*,label=(\roman*)]
    \item  $k\eps$-DP (Basic),
    \item $(\sqrt{2k\log(1/\delta)}\eps + k\eps(e^\eps-1),\, \delta)$-DP for any $\delta>0$ (Advanced).
\end{enumerate}
In particular, to achieve $(\eps, \delta)$-DP for $\eps\in (0,1)$, it suffices that each $\cA_i$ satisfies $\frac{\eps}{2\sqrt{2k\log(1/\delta)}}$-DP.
\end{theorem}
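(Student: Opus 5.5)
The plan is to prove the two composition bounds by analysing the \emph{privacy loss random variable}, and then derive the ``in particular'' clause by direct substitution.

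Fix neighboring $D\nbr D'$ and write $Y=(y_1,\dots,y_k)$ for the output of $\cA_{[k]}(D)$. By the chain rule, the output density factors as $\prod_i p_i(y_i\mid D,y_{<i})$. Hence the log-likelihood ratio decomposes as
\[
L(Y)=\sum_{i=1}^k L_i, \qquad L_i=\log\frac{p_i(y_i\mid D,y_{<i})}{p_i(y_i\mid D',y_{<i})}.
\]
(For general output spaces I would work with densities with respect to a common dominating measure; I treat this measure-theoretic step as routine.)

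\emph{Basic composition.} Each $\cA_i$ is $\eps$-DP for every fixed prefix $y_{<i}$, so $|L_i|\le\eps$ pointwise. Therefore $|L|\le k\eps$. Integrating over any event $S$ gives $\Pr[\cA_{[k]}(D)\in S]\le e^{k\eps}\Pr[\cA_{[k]}(D')\in S]$, which is part (i).

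\emph{Advanced composition.} I would prove two facts about $L_i$ conditional on $y_{<i}$.
\begin{itemize}
\item $|L_i|\le\eps$, as above.
\item $\Ex{L_i\mid y_{<i}}\le\eps(e^\eps-1)$. This is the KL-divergence bound between two distributions whose likelihood ratio lies in $[e^{-\eps},e^{\eps}]$. To prove it, use $\mathrm{KL}(P\|Q)\le \mathrm{KL}(P\|Q)+\mathrm{KL}(Q\|P)=\Ex{(r-1)\log r}$ under $Q$, where $r=dP/dQ$, and bound $(r-1)\log r\le \eps(e^\eps-1)$ pointwise together with the fact that $r-1$ has mean zero under $Q$.
\end{itemize}
Then $M_j=\sum_{i\le j}\bigl(L_i-\Ex{L_i\mid y_{<i}}\bigr)$ is a martingale with increments bounded in an interval of length $2\eps$. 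Azuma--Hoeffding gives
\[
\Pr\bigl[L> k\eps(e^\eps-1)+\eps\sqrt{2k\log(1/\delta)}\bigr]\le\delta .
\]
Let $B$ be this bad event and $\eps'$ the threshold. For any $S$,
\[
\Pr[Y\in S]\le \Pr[Y\in S\setminus B]+\delta\le e^{\eps'}\Pr[\cA_{[k]}(D')\in S]+\delta,
\]
which is part (ii).

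I expect the KL bound $\eps(e^\eps-1)$ to be the main technical obstacle. The remaining steps are a standard martingale argument.

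\emph{The ``in particular'' clause.} Set $\eps_0=\eps/(2\sqrt{2k\log(1/\delta)})$. The first term of (ii) is then exactly $\eps/2$. For the second term, since $\eps_0<1$ we have $e^{\eps_0}-1\le 2\eps_0$, so
\[
k\eps_0(e^{\eps_0}-1)\le 2k\eps_0^2=\frac{\eps^2}{4\log(1/\delta)} .
\]
This is at most $\eps/2$ provided $\eps<1$ and $\log(1/\delta)\ge 1/2$. The latter holds in the regime of interest where $\delta$ is small, and I would state this mild assumption explicitly. Summing the two terms gives $(\eps,\delta)$-DP.
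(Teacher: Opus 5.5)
Your proof is correct and is exactly the standard Dwork--Rothblum--Vadhan argument that the paper invokes by citation rather than reproving: privacy-loss random variable, the bound $\mathrm{KL}\le\eps(e^{\eps}-1)$, and Azuma--Hoeffding with increments in a predictable interval of length $2\eps$; incidentally, the pointwise bound $(r-1)\log r\le\eps(e^{\eps}-1)$ on $[e^{-\eps},e^{\eps}]$ already suffices, so the mean-zero remark is unnecessary. Your added hypothesis $\log(1/\delta)\ge 1/2$ in the final clause is genuinely needed: for $k=1$, $\eps=0.99$, $\delta=e^{-0.001}$ the prescribed per-step budget exceeds $11$, and randomized response at that level violates $(\eps,\delta)$-DP, so the paper's statement tacitly assumes $\delta$ is bounded away from $1$.
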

DP algorithms are typically calibrated to the sensitivity of the underlying function with respect to single-record modifications of the dataset, defined formally as follows.
 
\begin{definition}[Sensitivity \cite{dwork2014algorithmic}]\label{def:sensitivity}
The \emph{sensitivity} of a function $q_D: \cR \to \R$ that depends on a dataset $D$ is defined as
$
\Delta_q = \sup_{r\in\cR}\sup_{D\nbr D'} |q_D(r)-q_{D'}(r)|.
$
In particular, the sensitivity of a set function $q_D: 2^V \to \R$ is
$\sup_{S \subseteq V}\sup_{D \sim D'} |q_D(S) - q_{D'}(S)|$.
\end{definition}
The exponential mechanism \cite{mcsherry2007mechanism} is a primitive for privately selecting an approximately highest-scoring element from a candidate set according to a quality function that depends on the sensitive dataset. 
\begin{definition}[The Exponential Mechanism \cite{mcsherry2007mechanism}]\label{def:exp-mech}
Given $D\in\cD$, a finite set of candidates $\cR$, a quality function  $q_D:\cR \to\R$, and a privacy parameter $\eps$, the exponential mechanism  $\ExpMech(D,q,\cR,\eps)$ outputs $r\in \cR$ with probability proportional to 
    $\exp\Big( \frac{\eps \cdot q_D(r)}{2\Delta_q} \Big)$.
\end{definition}
We assume that the sensitivity bound $\Delta_q$ is provided alongside $q$ and do not consider it as a separate input. 
\begin{theorem}
[\cite{mcsherry2007mechanism, bassily2016algorithmic}]\label{thm:exponential_mech}
The exponential mechanism is $\eps$-DP.
Moreover, 
\[
\Ex{\ExpMech(D,q,\cR,\eps)} \ge  \max_{r\in \cR}q_D(r) -  \frac{2\Delta_q \log\abs{\cR}}{\eps}.
\]
\end{theorem}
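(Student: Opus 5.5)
The plan is to prove the two claims of \Cref{thm:exponential_mech} separately: first privacy, then the expected-utility bound. Throughout, write $\Delta=\Delta_q$, and for a fixed dataset $D$ let $p_D(r)=\exp(\eps q_D(r)/(2\Delta))/Z_D$ with $Z_D=\sum_{r'\in\cR}\exp(\eps q_D(r')/(2\Delta))$.

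\emph{Privacy.} Fix neighboring datasets $D\nbr D'$ and a candidate $r\in\cR$. By \Cref{def:sensitivity}, $|q_D(r)-q_{D'}(r)|\le\Delta$. Hence the numerator $\exp(\eps q_D(r)/(2\Delta))$ is at most $e^{\eps/2}$ times the corresponding numerator for $D'$. The same bound applies to every summand of the normalizer, so $Z_{D'}\le e^{\eps/2}Z_D$. Together these give $p_D(r)\le e^{\eps}p_{D'}(r)$ for every $r$. Summing over any set $S\subseteq\cR$ of outputs then yields \Cref{def:DP} with $\delta=0$.

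\emph{Utility.} Let $\OPT=\max_r q_D(r)$ and set $t=\frac{2\Delta}{\eps}(\log|\cR|+s)$ for $s\ge 0$. Call a candidate bad if $q_D(r)\le\OPT-t$. The normalizer is at least the term of the maximizer, so $Z_D\ge \exp(\eps\OPT/(2\Delta))$. Each bad candidate has weight at most $\exp(\eps(\OPT-t)/(2\Delta))$, and there are at most $|\cR|$ of them. This gives the tail bound
\[
\Pr[q_D(\ExpMech)\le \OPT-t]\le |\cR|e^{-\eps t/(2\Delta)}=e^{-s}.
\]

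To turn the tail bound into an expectation bound, let $X=\OPT-q_D(\ExpMech)\ge 0$. Write $\Ex{X}=\int_0^\infty\Pr[X>u]\,du$ and split the integral at $u_0=\frac{2\Delta}{\eps}\log|\cR|$. On $[0,u_0]$, bound the probability by $1$, which contributes $u_0$. On $[u_0,\infty)$, substitute $u=\frac{2\Delta}{\eps}(\log|\cR|+s)$ and use the tail bound; this contributes at most $\frac{2\Delta}{\eps}\int_0^\infty e^{-s}\,ds=\frac{2\Delta}{\eps}$. Together, $\Ex{X}\le\frac{2\Delta}{\eps}(\log|\cR|+1)$.

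The main obstacle is this additive $+1$: the statement has $\frac{2\Delta\log|\cR|}{\eps}$ with no extra term. I would first try to absorb it using the constant in the cited source \cite{bassily2016algorithmic}, which gives a bound of the form $\frac{2\Delta}{\eps}\log|\cR|$ directly. One route is a sharper argument: compare the Gibbs distribution's expected value against the uniform distribution through the free-energy (variational) identity
\[
\Ex{q}=\frac{2\Delta}{\eps}\bigl(\log Z_D-H(p_D)\bigr)\ \ge\ \OPT-\frac{2\Delta}{\eps}H(p_D),
\]
which uses $\log Z_D\ge\eps\OPT/(2\Delta)$. Since the entropy satisfies $H(p_D)\le\log|\cR|$, this gives exactly the stated bound with no additive constant. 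I would use this entropy argument as the primary proof of the utility claim, and keep the tail computation only as a sanity check.
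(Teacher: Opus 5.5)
Your proposal is correct. The numerator/normalizer ratio bound gives $\eps$-DP, and the Gibbs identity $\mathbb{E}[q_D]=\frac{2\Delta}{\eps}\bigl(\log Z_D-H(p_D)\bigr)$, combined with $\log Z_D\ge \eps\OPT/(2\Delta)$ and $H(p_D)\le\log|\cR|$, gives exactly the stated expectation bound. The paper has no proof of its own; it only cites McSherry--Talwar and Bassily et al., and your arguments are the standard ones behind those citations. You can drop the tail-integration detour, which loses an additive $2\Delta/\eps$, and the remark about ``absorbing the constant'' from the cited source, since the entropy argument alone suffices.
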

 The utility guarantee in expectation, which we use in this work, is due to \cite{bassily2016algorithmic}. The privacy guarantee is due to \cite{mcsherry2007mechanism}, as is the high-probability utility guarantee, which we omit here.

\section{Problem Formulation}\label{subsec:problem_formulation}
The DP max-sum diversification (\MSD) problem studied in this work extends the setting of \citet{borodin2017max} to incorporate differential privacy requirements, and is formalized below.

\begin{definition}[DP max-sum diversification]
An instance of DP \MSD is a tuple $\langle D, V, \cI, f_D, d_D \rangle$ where:
\begin{itemize}[leftmargin=*, label=--]
    \item $D$ is a sensitive dataset.
    \item $V$ is a (public) dataset of items, called the \emph{ground set}.
    \item $\cI \subseteq 2^V$ is a collection of feasible subsets.
    \item $f_D: 2^V \to \R^+$ is a monotone submodular function that depends on the sensitive dataset $D$.
    \item $d_D: V \times V \to \R$ is a pseudometric (possibly depending on $D$).
    and we write $d_D(S) \eqdef \sum_{\{u,v\} \subseteq S} d_D(u,v)$.
\end{itemize}
The goal is to find $S \in \cI$ that approximately maximizes  
\[
\phi_D(S) \eqdef  f_D(S) + \lambda \cdot d_D(S)
\] 
while satisfying differential privacy with respect to $D$.
\end{definition}
The parameter $\lambda \ge 0$ controls the trade-off between relevance and diversity. The subscript $D$ is omitted when it is clear from context.

\begin{remark}\label{remark:submodular}
The relevance function $f$ is submodular and the diversity term $d(S)$, being a sum of pairwise distances, is \emph{supermodular}. Hence, the objective $\phi$ is not necessarily submodular.
\end{remark}

\begin{example}\label{example:notation_example} 
In the setting of \Cref{example:motivating}, the sensitive dataset $D$ consists of purchase records where each entry specifies the products purchased by a user. The ground set $V$ contains Amazon products along with attributes such as product name, price, and category. The relevance function is defined as $f_D(S) = \frac{1}{m} \bigl|\bigcup_{v \in S} \mathcal{U}_D(v)\bigr|$, where $\mathcal{U}_D(v)$ is the set of users in $D$ who purchased item $v$. Let $C(v)$ denote the set of categories of item $v \in V$, and the diversity is quantified by the Jaccard distance $d(v, v') = 1 - \frac{|C(v) \cap C(v')|}{|C(v) \cup C(v')|}$ between category sets of products $v$ and $v'$.
\end{example}

\paratitle{Decomposable Objective Functions}
A set function $\phi_D$ is called \emph{$\Delta$-decomposable}~\cite{gupta2010differentially,mitrovic2017differentially} if it can be written as $\phi_D(S) = \sum_{x \in D} \phi_x(S)$, where each  $\phi_x: 2^V \to [0, \Delta]$ depends only on the individual record $x$. A $\Delta$-decomposable function has sensitivity $\Delta$.
Prior work has studied DP maximization of decomposable submodular functions~\cite{gupta2010differentially,chaturvedi2023streaming,chaturvedi2021differentially}, as they allow stronger privacy guarantees compared to the general case. Our algorithms for cardinality constraints similarly leverage this structure to provide privacy guarantees that nearly match known lower bounds for decomposable objectives.

\begin{example}\label{example:decomposable}
In  the setting of \Cref{example:motivating}, associate each user $x \in D$ with a monotone, submodular utility $f_x(S) = \min\{|S \cap \cP(x)|, 1\}$, where $\cP(x)$ is the set of products purchased by user $x$. This function indicates whether user $x$ is covered by $S$ and depends solely on the data of that individual. 
For $\lambda\in [0,1]$, define the per-user objective
\[
\phi_x(S) = \frac{1-\lambda}{m} f_x(S) + \frac{2\lambda}{mk(k-1)}d(S)\]
where $m=|D|$. The normalization ensures that $\phi_x(S) \in [0,1/m]$. The total objective is the average of these individual contributions: \[\phi(S) = \sum_{x \in D} \phi_x(S) = \frac{1-\lambda}{m} \sum_{x \in D} f_x(S) + \frac{2\lambda}{k(k-1)} d(S).\] Thus, $\phi$ is $1/m$-decomposable.
We now derive the sensitivity bound for $\phi$. If $D = (D' \setminus \{x'\}) \cup \{x\}$, then for any $S \subseteq V$,
$\abs{\phi_D(S) - \phi_{D'}(S)} = \abs{\phi_x(S) - \phi_{x'}(S)} \le 1/m$,
and thus $\phi$ has sensitivity bounded by $1/m$.
\end{example}
\section{Greedy Algorithm for Cardinality Constraints}
\label{sec:greedy}

We start by presenting \DPG, a DP adaptation of the greedy algorithm from \cite{borodin2017max}, replacing each selection step with the exponential mechanism. Although it achieves high utility, as also shown empirically in \Cref{sec: exp}, its complexity scales linearly in $k$, which can become prohibitive when $k$ is large. \emph{Addressing this limitation constitutes a primary contribution of this work} and is the focus of the subsequent section. \DPG, stated as  \Cref{alg:greedy}, is \emph{non-oblivious}: it selects elements according to the auxiliary function $\phi_D'(S) = \frac{1}{2} f_D(S) + \lambda d_D(S)$ rather than the objective $\phi_D$.

\begin{algorithm}
\caption{\DPG}
\label{alg:greedy}
\begin{algorithmic}[1]
\Require 
Dataset $D$,
ground set $V$,
monotone submodular $f_D$,
pseudometric $d_D$,
cardinality $\rk$,
 privacy parameter $\eps_0$,
diversity parameter $\lambda$
\Ensure Size $\rk$ subset of $V$
\State Define $\phi'_D(\cdot) = \frac{1}{2} f_D(\cdot) + \lambda d_D(\cdot)$
\State Set $S_0 \gets \emptyset$
\For{$i=1,\dots,\rk$}
\State Let $N_i \gets V\setminus S_{i-1}$
\State For all $u\in N_i,\,$ define $q^i_D(u) = \phi'_D(u\mid S_{i-1})$
\State Compute $u_i \gets \ExpMech(D,q^i, N_i,\eps_0)$ 
\State $S_i \gets S_{i-1}\cup\set{u_i}$
\EndFor
\end{algorithmic}
\end{algorithm}
\normalsize
\begin{theorem}\label{thm:greedy_guarantee}
Suppose $\phi_D$ has sensitivity $\Delta$. Given parameter $\eps_0 > 0$, \DPG  is $\rk\eps_0$-DP and $(\eps,\delta)$-DP for all $\delta>0$ with $\eps = \sqrt{2\rk \log(1/\delta)}\,\eps_0 + \rk \eps_0 (e^{\eps_0}-1)$. Moreover, it outputs a feasible set $S$ such that $\Exp[\phi_D(S)] \ge \frac{1}{2} \phi_D(\OPT) - O(\frac{\rk \Delta \log n}{\eps_0})$, and makes $O(n \rk)$ oracle  calls.
\end{theorem}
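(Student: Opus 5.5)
Privacy and oracle complexity are the routine parts, so I handle them first. In iteration $i$ the quality $q^i_D(u)=\phi'_D(u\mid S_{i-1})=\frac12 f_D(u\mid S_{i-1})+\lambda d_D(u\mid S_{i-1})$ has sensitivity $O(\Delta)$ once $S_{i-1}$ is fixed. If $\Delta$ bounds the sensitivity of $f_D$ and $\lambda d_D$ separately (the case I assume, as in the decomposable example), each marginal changes by at most $2\Delta$ per term. Rescaling the mechanism by this constant factor makes each call $\eps_0$-DP by \Cref{thm:exponential_mech}, and the constant is absorbed into the $O(\cdot)$ of the utility bound. Adaptive composition over the $\rk$ iterations (\Cref{thm:composition}, basic and advanced) gives both stated privacy guarantees. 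Each iteration evaluates $q^i$ on at most $n$ candidates with $O(1)$ oracle calls each, for $O(n\rk)$ calls in total. Feasibility is immediate, since $|S|=\rk$.

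For utility I would follow the non-oblivious greedy analysis of \citet{borodin2017max}, but with an additive slack per step. By \Cref{thm:exponential_mech}, conditioned on $S_{i-1}$,
\[
\Exp[q^i(u_i)\mid S_{i-1}] \ge \max_{u\in N_i} q^i(u) - \eta, \qquad \eta = O(\Delta\log n/\eps_0).
\]
Fix $O=\OPT$ with $|O|=\rk$. The max is at least any average over $O\setminus S_{i-1}$, so I would bound $\frac{1}{|O\setminus S_{i-1}|}\sum_{o\in O\setminus S_{i-1}}\phi'(o\mid S_{i-1})$ from below. For the $f$ part, submodularity and monotonicity give $\sum_o f(o\mid S_{i-1})\ge f(O)-f(S_{i-1})$. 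For the $d$ part, $\sum_{o} d(o\mid S_{i-1}) = d(O\setminus S_{i-1},S_{i-1})$, and Borodin et al.'s triangle-inequality lemma gives the bound $d(X,Y)\ge$ something like $\frac{|Y|}{|X|-1}d(X)$ type relations between $d(O)$, $d(S_i)$ and cross terms. I would reproduce their per-step inequality exactly, replacing the greedy choice by the expected value minus $\eta$.

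Their argument is not a clean per-step recurrence. It sums over all $i$: they show $\sum_i \phi'(u_i\mid S_{i-1})$, which telescopes to $\frac12 f(S)+\lambda d(S)$, dominates $\frac12 f(O)\cdot(\ldots)$ plus $\lambda d(O)/2$ after combining the cross-distance bounds, yielding $\phi(S)\ge\frac12\phi(O)$. The plan is to take expectations, using linearity and the tower property. The per-step inequalities are linear in the quantities involved, and the error terms accumulate additively to $\rk\eta$. One then gets $\Exp[\phi(S)]\ge \Exp[\phi'(S)]\ge \frac12\phi(O)-\rk\eta$; the first inequality holds because $\phi\ge\phi'$ when $f\ge0$.

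The main obstacle is checking that Borodin et al.'s final combination uses only inequalities that are linear (or concave in the right direction) in the random sets. Then replacing exact greedy by approximate greedy costs just an additive $\eta$ per step, and the expectation passes through without needing the exact-maximizer property anywhere else. Some of their distance arguments compare $d(S_i)$ to $d(O)$ using $|S_i|=i$, which is deterministic, so I expect this to go through. Where a bound depends on the greedy being exactly optimal against a specific $o\in O$, I would use the averaged form above instead.
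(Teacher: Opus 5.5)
\textbf{The gap is in your final step.} You conclude $\Exp[\phi(S)]\ge\Exp[\phi'(S)]\ge\frac12\phi(\OPT)-\rk\eta$, and the second inequality is false even for exact greedy ($\eta=0$). Take $\lambda=0$. Then $\phi'=f/2$, the algorithm is ordinary greedy on $f$, and your inequality would say $f(S)\ge f(\OPT)$, i.e., greedy is optimal for monotone submodular maximization. It is not; small max-coverage instances with $\rk=2$ already have greedy at about $3/4$ of optimum.

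The per-step bound actually carries a negative term. With $f'=f/2$ and conditioned on $S_{i-1}$,
\[
\Exp[\phi'(u_i\mid S_{i-1})]\ge \tfrac{f'(\OPT)-f'(S_{i-1})}{\rk}+\tfrac{\lambda(i-1)}{\rk(\rk-1)}d(\OPT)-\eta .
\]
You may divide by $\rk$ instead of $|C_i|$ because marginal gains are nonnegative. Monotonicity lets you replace $-f'(S_{i-1})$ by $-f'(S_\rk)$. Taking expectations and summing over $i$ gives
\[
\Exp[\phi'(S_\rk)]\ge f'(\OPT)-\Exp[f'(S_\rk)]+\tfrac{\lambda}{2}d(\OPT)-\rk\eta .
\]
Now move $\Exp[f'(S_\rk)]$ to the left-hand side. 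Since $\phi'+f'=\phi$, this is exactly $\Exp[\phi(S_\rk)]\ge\frac12\phi(\OPT)-\rk\eta$. That identity is the whole reason for the coefficient $\tfrac12$ in $\phi'$. Your step $\phi\ge\phi'$ throws away precisely the $f'(S)$ term the argument needs.

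\textbf{Secondary points.} You leave the distance part at ``something like'' Borodin et al.'s relations. The per-step inequality you need is \Cref{lemma:dist_lowerbound}, $d(C_i,S_{i-1})\ge\frac{(i-1)|C_i|}{\rk(\rk-1)}d(\OPT)$. After dividing by $|C_i|$ and summing over $i$, it yields the $\frac{\lambda}{2}d(\OPT)$ term. Borodin et al.'s derivation of it divides by $|C_i|-1$, so the case $|C_i|=1$ (possible only at $i=\rk$) needs a separate direct triangle-inequality argument. Once you have this bound, it depends on $S_{i-1}$ only through $|S_{i-1}|=i-1$, and the $|C_i|$ factor cancels. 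So the nonlinearity you worried about never arises: the argument is a per-step linear bound summed over iterations, exactly the structure you proposed.

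Your privacy and complexity arguments are fine. Your explicit assumption that $f_D$ and $\lambda d_D$ have separately bounded sensitivity is, if anything, more careful than the paper's bare assertion that $q^i_D$ has sensitivity $2\Delta$.
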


In the special case where $\phi_D$ is decomposable (and hence so is~$\phi'_D$), \DPG provides a stronger privacy guarantee  compared to the general case, as stated next.
The assumption that $\phi_D$ is $1$-decomposable is without loss of generality, since we can apply our algorithms to the normalized function $\phi_D / \Delta$, and the guarantees for $\phi_D$ incur an additive error scaled by $\Delta$.
\begin{theorem}\label{thm:greedy_decomposable}
    Suppose $\phi'_D$ is $1$-decomposable. Then given parameter $\eps_0\in(0,1]$ , \DPG (\Cref{alg:greedy}) is $(\eps,\delta)$-DP for every $\delta>0$ and $\eps=(e^{\eps_0/2}-1)(4+\log(1/\delta))$.
\end{theorem}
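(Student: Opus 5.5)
We follow the Gupta et al. (2010) argument for private greedy on decomposable functions. The idea is to fix neighboring datasets $D$ and $D'=D\cup\{x\}$; the replacement case then follows by going through an intermediate dataset, which changes only constants. We also fix an output sequence $(u_1,\dots,u_\rk)$ and bound the privacy-loss ratio $\Pr[\DPG(D)=(u_i)]/\Pr[\DPG(D')=(u_i)]$ as a product over the $\rk$ steps. Since $\phi'_{D'}=\phi'_D+\phi'_x$ with $\phi'_x$ taking values in $[0,1]$, the marginal $q^i_{D'}(u)=q^i_D(u)+\phi'_x(u\mid S_{i-1})$.

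Because $\phi'_x$ is monotone (its $f_x$ part is monotone and the $d$ part is nonnegative and supermodular, so increasing), each added term $\phi'_x(u\mid S_{i-1})$ is nonnegative. At step $i$ the ratio equals
\[
\exp\!\big(-\tfrac{\eps_0}{2}\phi'_x(u_i\mid S_{i-1})\big)\cdot \Ex{u\sim p^i_D}{\exp\!\big(\tfrac{\eps_0}{2}\phi'_x(u\mid S_{i-1})\big)},
\]
where $p^i_D$ denotes the exponential-mechanism distribution at step $i$ under $D$. In the reverse direction the roles are symmetric. Using $e^{a t}\le 1+(e^{a}-1)t$ for $t\in[0,1]$, the expectation is at most $\exp\big((e^{\eps_0/2}-1)\,\Ex{}{\phi'_x(u\mid S_{i-1})}\big)$. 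The first factor is at most $1$. Multiplying over $i$, the log-ratio is bounded by $(e^{\eps_0/2}-1)\sum_i \Ex{u\sim p^i_D}{\phi'_x(u\mid S_{i-1})}$.

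The key step is to relate this sum of expected marginals to the realized telescoping sum $\sum_i \phi'_x(u_i\mid S_{i-1}) = \phi'_x(S_\rk)-\phi'_x(\emptyset)\le 1$. The obstacle is that the realized total is bounded but the expectations are not a priori. We handle this with a martingale argument as in Gupta et al. Let $a_i=\Ex{}{\phi'_x(u\mid S_{i-1})}$ (conditional on history) and $b_i=\phi'_x(u_i\mid S_{i-1})$, with $a_i,b_i\in[0,1]$. The process $\exp\big(\sum_i (b_i - c\,a_i)\big)$ is a supermartingale for a suitable constant $c$, namely $c=1-e^{-1}$, using $\Ex{}{e^{b}}\le e^{(e-1)a}$, adjusted appropriately. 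Hence, with probability at least $1-\delta$ over outputs, $\sum_i a_i\le O\big(\sum_i b_i+\log(1/\delta)\big)\le 4+\log(1/\delta)$ after tuning constants.

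Finally, define the bad event $B$ of outputs where $\sum_i a_i>4+\log(1/\delta)$, which has probability at most $\delta$ under $D$. Outside $B$ the pointwise log-ratio is at most $(e^{\eps_0/2}-1)(4+\log(1/\delta))=\eps$. The standard conversion of a pointwise bound outside a $\delta$-probability set then gives $(\eps,\delta)$-DP, and the other direction is handled symmetrically. I expect the main difficulty to be getting the constant $4$ exactly in the concentration step. I would first try bounding $\Pr[\sum a_i - \sum b_i > t]$ via the supermartingale $\exp(\sum b_i-\tfrac12\sum a_i)$, or a variant of it, using $b_i\le 1$ and $\sum b_i\le 1$.
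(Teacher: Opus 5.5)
Your skeleton matches the paper's, which follows Gupta et al. You fix an output sequence, factor the likelihood ratio per step, bound each expectation factor by $\exp\big((e^{\eps_0/2}-1)a_i\big)$ via convexity, and control $\sum_i a_i$ outside a $\delta$-probability set of outputs. The gap is in that last concentration step, which carries the whole theorem, and it fails as written.

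\textbf{Your supermartingales do not work.}
\begin{itemize}
\item For $c<1$, the process $\exp(\sum_i(b_i-c\,a_i))$ is not a supermartingale. Since $e^{b}\ge 1+b$, its conditional increment is at least $e^{-ca_i}(1+a_i)>1$ for small $a_i>0$. In particular this fails for $c=1-e^{-1}$.
\item For $c\ge e-1$ it is a supermartingale. But it then bounds $\sum_i b_i$ from \emph{above}, which is vacuous because you already know $\sum_i b_i\le 1$.
\item Your fallback $\exp(\sum_i b_i-\tfrac12\sum_i a_i)$ has both defects.
\end{itemize}

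\textbf{Even the corrected version gives the wrong constant.} What you need is a lower-tail bound for $\sum_i b_i$. One example is the supermartingale $\exp\big((1-e^{-1})\sum_i a_i-\sum_i b_i\big)$, which follows from $\mathbb{E}[e^{-b}]\le 1-(1-e^{-1})a$. This only gives $\sum_i a_i\le \tfrac{e}{e-1}(1+\log(1/\delta))$, i.e.\ a coefficient of about $1.58$ on $\log(1/\delta)$. More generally, a supermartingale $\exp(\theta\sum_i a_i-\lambda\sum_i b_i)$ requires $\theta\le 1-e^{-\lambda}$. The best resulting bound is $\log(1/\delta)+\log\log(1/\delta)+O(1)$. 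So for small $\delta$, no ``tuning of constants'' yields $3+\log(1/\delta)$, nor your target $4+\log(1/\delta)$. You would prove a DP bound of the same order, but not the stated $\eps$.

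\textbf{The missing ingredient} is the \emph{multiplicative} structure used by Gupta et al.\ (\Cref{lemma:concentration}, Claim 3.8 of Chaturvedi et al.). Let $Z_i=1-\phi'_x(S_{i-1})$ be the utility of record $x$ not yet collected. Let $R_i=b_i/Z_i\in[0,1]$ be the fraction collected at step $i$. Then $Z_{i+1}=Z_i(1-R_i)$ and $a_i=Z_i\,\mathbb{E}[R_i]$. The lemma gives $\Pr[\sum_i Z_i\,\mathbb{E}[R_i]\ge qZ_1]\le e^{3-q}$ with $Z_1\le 1$. Hence $\sum_i a_i\le 3+\log(1/\delta)$ with probability at least $1-\delta$.

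\textbf{How the paper budgets the constant.} The $4$ in the theorem is this $3$ plus $1$. The extra $1$ absorbs, via $\eps_0/2\le e^{\eps_0/2}-1$, the factor $e^{\eps_0/2}$ from the direction where the numerator's dataset contains $x$. That factor enters when you pass through the intermediate dataset for replacement neighbors. Note that this direction is not ``symmetric'' to yours but strictly easier. Its normalizer ratio is at most $1$, and its numerators telescope to a pure $e^{\eps_0/2}$ bound with no bad event. If you spend all of $4+\log(1/\delta)$ on the concentration step and let the replacement step ``change only constants'', you overshoot the claimed $\eps$.
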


\Cref{thm:greedy_guarantee,thm:greedy_decomposable} correspond to row \textit{(i)} in \Cref{tab:results}.
Informally, advanced composition implies that setting $\varepsilon_0 = O(\varepsilon / \sqrt{\log(1/\delta)\rk})$  ensures $(\varepsilon, \delta)$-DP. 
 However, \Cref{thm:greedy_decomposable} implies that setting $\varepsilon_0 = O(\varepsilon/ \log(1/\delta))$ suffices to ensure $(\varepsilon, \delta)$-DP, avoiding the dependence on $\rk$ that arises from $\rk$ compositions in the general case. Substituting this into the utility guarantee of \Cref{thm:greedy_guarantee} yields the stated additive error in \Cref{tab:results}.
\Cref{thm:greedy_decomposable} is proved by the same steps as in the privacy proof for the CPP problem in~\cite{gupta2010differentially}. 
Since this result follows as a special case of our more general analysis
 in \Cref{sec:sample_greedy},
 it is omitted. In the remainder of this section, we prove~\Cref{thm:greedy_guarantee}.

Let $\OPT$ be an optimal solution, i.e.,  a subset of size at most $\rk$ that maximizes $\phi$. Since $\phi$ is monotone, we may assume that $\abs{\OPT}=\rk$. Let $S_i$ be the solution at the end of iteration $i$, and define
$C_i=\OPT\setminus S_{i-1}$. 
The approximation guarantee of the greedy algorithm in \cite{borodin2017max} relies on a lemma which relates the overall distance between the sets $S_{i-1}$ and $C_i$ to the value $d(\OPT)$. An adaptation of this lemma is stated next.
The proof is deferred to \Cref{appendix:greedy}.

\begin{restatable}[Adapted from \cite{borodin2017max}]{lemma}{distlowerbound}
\label{lemma:dist_lowerbound}
    For all $i\in [\rk] \eqdef \set{1,\dots, \rk}$,  $d(C_i,S_{i-1}) \ge \frac{(i-1)\abs{C_i}}{\rk(\rk-1)}d(\OPT)$.
\end{restatable}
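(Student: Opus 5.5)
Let $A = \OPT\cap S_{i-1}$, $B = S_{i-1}\setminus \OPT$ and $C = C_i = \OPT\setminus S_{i-1}$, with sizes $a$, $b$, $c$. Then $a+c=\rk$ and $a+b=i-1\le \rk-1$. We have $d(C_i,S_{i-1}) = d(A,C)+d(B,C)$ and $d(\OPT)=d(A)+d(A,C)+d(C)$. The plan is to turn the triangle inequality into a few linear inequalities among the quantities $x=d(A)$, $y=d(A,C)$, $z=d(C)$, $w=d(B,C)$, and then combine them linearly. The degenerate cases $c=0$ and $i=1$ are trivial, since the right-hand side is $0$, so assume $c\ge 1$ and $i\ge 2$.

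\emph{Step 1 (basic triangle inequalities).} For $u,v\in A$ and $t\in C$ we have $d(u,v)\le d(u,t)+d(t,v)$. Summing over all pairs $\{u,v\}\subseteq A$ and all $t\in C$ gives $c\,x\le (a-1)\,y$. In the same way, taking pairs in $C$ and a pivot in $A$ gives $a\,z\le (c-1)\,y$. For each $s\in B$, taking pairs $\{u,v\}\subseteq C$ with pivot $s$ gives $(c-1)\,d(s,C)\ge z$. Summing over $s\in B$ gives $w\ge \frac{b}{c-1}z$. As a sanity check, the case $b=0$ (so $i-1=a$) already works. Substituting the first two bounds, $x+y+z\le y\bigl(1+\frac{a-1}{c}+\frac{c-1}{a}\bigr)=y\,\frac{\rk(\rk-1)-ac}{ac}$. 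Hence $y\ge \frac{ac}{\rk(\rk-1)-ac}\,d(\OPT)\ge \frac{ac}{\rk(\rk-1)}d(\OPT)$, which is the claim.

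\emph{Step 2 (the contribution of $B$).} For $b>0$ I additionally need $w\gtrsim \frac{bc}{\rk(\rk-1)}(x+y+z)$. The bound $w\ge \frac{b}{c-1}z$ alone does not control $x=d(A)$. I would therefore use a mixed triangle inequality for each $s\in B$ that involves all of $\OPT$. For $u\in A$ and $t\in C$, $d(u,t)\le d(u,s)+d(s,t)$; for pairs inside $A$, I route through points of $C$ as in Step 1. The $d(s,A)$ terms that appear should be absorbed using the slack in Step 1: there $y$ beat the target by the factor $\frac{\rk(\rk-1)}{\rk(\rk-1)-ac}$. Alternatively, I would write the required inequality as a nonnegative combination of the available triangle-type inequalities, which is an LP duality argument in the variables $x,y,z,w$, and check feasibility using $a+b\le \rk-1$ and $a+c=\rk$.

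\emph{Main obstacle.} I expect Step 2 to be the main obstacle: finding the right combination when $B\neq\emptyset$, including the edge case $c=1$, where $z=0$ and the $B$-term must be bounded through $A$. If the direct combination fails, I would try the pairing argument of \citet{borodin2017max}. In that argument, each $s\in B$ is charged against an average pair of $\OPT$. Since $|S_{i-1}|=i-1$, this gives an average-distance bound $d(s,C)\ge \frac{c}{\rk(\rk-1)}\cdot d(\OPT)$ on average over $s$, after which summing over $s\in S_{i-1}$ gives the claimed factor $(i-1)|C_i|$.
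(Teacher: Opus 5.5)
Your Step 1 is correct. The three inequalities you derive there, $c\,x\le(a-1)\,y$, $a\,z\le(c-1)\,y$ and $b\,z\le(c-1)\,w$, together with $d(\OPT)=x+y+z$, are exactly what the paper's proof uses. The gap is Step 2. For $B\neq\emptyset$, which is the substantive case, you give only candidate strategies and no argument. Your diagnosis that these inequalities cannot control $x=d(A)$ is also wrong: no mixed triangle inequality is needed.

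Your fallback does not close the gap either. The bound ``$d(s,C)\ge\frac{c}{k(k-1)}d(\OPT)$ on average over $s\in S_{i-1}$'' is just the lemma restated. Charging each $s\in B$ individually fails: put $C$ and some $s\in B$ at one point and $A\neq\emptyset$ elsewhere; then $d(s,C)=0$ while $d(\OPT)>0$. The edge case $c=1$ you worry about cannot occur with $B\neq\emptyset$. Since $c-b=k-(i-1)\ge 1$, having $b\ge 1$ forces $c\ge 2$.

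What is missing is the right weighting, which uses the slack you noticed in Step 1. Using $a+c=k$ (twice), split
\[
y=\tfrac{c-b}{k}\,y+\tfrac{(a+b)(a-1)}{k(k-1)}\,y+\tfrac{(a+b)c}{k(k-1)}\,y .
\]
Bound the pieces as follows:
\begin{itemize}
\item the first piece via $a\,z\le(c-1)\,y$, where the weight $\frac{c-b}{k(c-1)}$ is nonnegative precisely because $c-b=k-i+1\ge 1$;
\item the second piece via $c\,x\le(a-1)\,y$;
\item and $w$ via $b\,z\le(c-1)\,w$.
\end{itemize}
Since $\frac{a(c-b)}{k(c-1)}+\frac{b}{c-1}=\frac{(a+b)c}{k(c-1)}$ (because $b(k-a)=bc$), this gives
\[
y+w\ \ge\ \tfrac{(a+b)c}{k(k-1)}\,(x+y)+\tfrac{(a+b)c}{k(c-1)}\,z\ \ge\ \tfrac{(i-1)\,c}{k(k-1)}\,d(\OPT).
\]
The last step uses $c\le k$ and $z\ge 0$. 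These are exactly the paper's multipliers $\frac{1}{c-1}$, $\frac{c-b}{k(c-1)}$, $\frac{i-1}{k(k-1)}$, $\frac{(i-1)c}{k(k-1)}$. Your LP-duality idea would have found them if you had restricted it to the inequalities you already had.
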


\begin{proof}[Proof Sketch of \Cref{thm:greedy_guarantee}]
For query complexity, note that the algorithm performs $\rk$ iterations of $O(n)$ oracle  calls. The privacy follows from the $\eps_0$-DP of the exponential mechanism and composition (\Cref{thm:composition}). 
For utility, let $f'(S)=f(S)/2$. Consider iteration $i\in [\rk]$ and condition on $S_{i-1}$. The
submodularity and monotonicity of $f'$ imply that $\sum_{u\in C_i} f'(u\mid S_{i-1})\ge  f'(\OPT)-f'(S_{i-1})$.
Furthermore, by \Cref{lemma:dist_lowerbound}, we have  $\sum_{u\in C_i} d(u\mid S_{i-1}) \ge \frac{{(i-1)}\abs{C_i}}{\rk(\rk-1)}d(\OPT)$. 
Thus, the exponential mechanism, executed with the marginal gain quality functions $q^i_D$ of sensitivity $2\Delta$, guarantees that
\begin{align*}
    \Exp[\phi'(u_i\mid S_{i-1})] &\ge  \max_{u\in V\setminus S_{i-1}} \phi'(u\mid S_{i-1}) - \alpha  \ge \tfrac{\sum_{u\in C_i }\phi'(u \mid S_{i-1})}{\abs{C_i}} -\alpha  \\
    & \ge \tfrac{ f'(\OPT)-f'(S_{i-1})}{\rk} + \tfrac{\lambda(i-1)}{\rk(\rk-1)}d(\OPT) - \alpha 
\end{align*}
for $\alpha = O(\eps_0^{-1} \Delta \log n)$.
Monotonicity, together with first removing the conditioning on $S_{i-1}$
 and then taking expectation over all its realizations, yields
\[\Exp[\phi'(u_i | S_{i-1})] \ge \tfrac{ f'(\OPT)- \Exp[f'(S_{\rk})]}{\rk} + \tfrac{\lambda(i-1)}{\rk(\rk-1)}d(\OPT) - \alpha.\]
Summing over $i\in[\rk]$ gives \[Exp[\phi'(S_{\rk})] \ge f'(\OPT)- \Exp[f'(S_{\rk})] + \tfrac{\lambda}{2}d(\OPT) - \rk\alpha.\] Plugging the definitions for $f',\phi'$ and rearranging completes the proof.
\end{proof}
\section{Faster Greedy Algorithm via Subset Sampling}
\label{sec:sample_greedy}

In this section, we present our main contribution for the cardinality-constraint setting. The Sample Greedy algorithm for monotone submodular maximization~\citet{buchbinder2017comparing,mirzasoleiman2015lazier} achieves a \mbox{$(1-1/e-\gamma)$}-approximation in expectation using only $O(n\log\gamma^{-1})$ oracle calls. \citet{mitrovic2017differentially} proposed a variant for non-monotone submodular functions, guaranteeing a $(1-1/e)/e$-approximation, and adapted it to satisfy differential privacy. However, these analyses rely on submodularity, which does not hold in our setting. 

Nevertheless, we propose a Sample Greedy algorithm for DP \MSD under cardinality constraints, stated as  \Cref{alg:sample_greedy}. It achieves strong utility guarantees while using substantially fewer oracle calls than \DPG. To the best of our knowledge, this is the fastest known algorithm for \MSD with cardinality constraints, which is of interest even without privacy. Experimental results (\Cref{sec: exp}) show that our method significantly improves running time over \DPG while preserving utility. We further extend the analysis of \citet{gupta2010differentially} to account for the subsampling step and obtain improved privacy guarantees for decomposable objectives.

\begin{algorithm}
\caption{\DPNOSG (DP Non-Oblivious Sample Greedy) \;/\; \DPOSG (DP Oblivious Sample Greedy)}
\label{alg:sample_greedy}
\begin{algorithmic}[1]
\Require 
Dataset $D$,
ground set $V$,
submodular $f_D$,
pseudometric $d_D$,
cardinality $\rk$,
 privacy parameter $\eps_0$,
utility parameter~$\gamma$,
diversity parameter $\lambda$.
\Ensure Size $\rk$ subset of $V$

\State In \DPNOSG:
\Statex \quad  $\phi'_D(\cdot ) = \frac{1}{2-\gamma} f_D(\cdot)+ \lambda d_D(\cdot)$,\,  $g(i)=\rk-i+1$ for all $i\in[\rk]$
\State In \DPOSG:
\Statex \quad  $\phi'_D(\cdot) = f_D(\cdot)  + \lambda d_D(\cdot) $,\,  $g(i) = \min\{\rk, n-i+1\}$ for all $i\in[\rk]$
\State Set $S_0 \gets \emptyset$
\For{$i=1,\dots,\rk$}
         \State Set $N_i\gets V\setminus S_{i-1}$ 
        \State Sample a subset $V_i\subseteq N_i$ of size $\left\lceil\abs{N_i}\cdot \min\left\{\frac{\log(1/\gamma)}{g(i)}, 1\right\}\right\rceil$  uniformly at random. \plabel{cardinality_sampling}
        \State For all $u\in V_i,\,$ define $q^i_D(u)= \phi'_D(u\mid S_{i-1})$
\State Compute $u_i \gets \ExpMech(D,q^i_D, V_i, \eps_0)$
\State $S_i \gets S_{i-1}\cup\set{u_i}$
\EndFor
\State \Return $S_\rk$
\end{algorithmic}
\end{algorithm}
\normalsize

\Cref{alg:sample_greedy} performs greedy selections with respect to a function $\phi'$ that can be either the objective $\phi$ or an auxiliary function, and a utility function $g \colon [\rk] \to \mathbb{N}$, where $g(i)$ determines the size of the sampled candidate set in iteration~$i$.
We study two instantiations corresponding to different choices of $g$ and $\phi'$, yielding different trade-offs between running time and approximation guarantees. 

\subsection{Non-Oblivious Sample Greedy}\label{subsec:non-oblivious-sample}
The first instantiation we consider, referred to as \DPNOSG,  selects the next element according to the auxiliary function $\phi'_D:2^V\to\R$ given by  
$
    \phi'_D(S) = \frac{1}{2 -\gamma}\cdot f_D(S)+ \lambda d_D(S), 
$
and uses the utility function $g(i)=\rk-i+1$ for all $i\in[\rk]$. 
We obtain the following result.
\begin{theorem}\label{thm:sample_greedy_guarantee_non_oblivious}
Suppose $\phi_D$ has sensitivity $\Delta$. Then,  \DPNOSG (\Cref{alg:sample_greedy}) with parameters $\eps_0 > 0$ and $\gamma \in (0,1)$ is $\rk\eps_0$-DP and $(\eps,\delta)$-DP for all $\delta > 0$, where $\eps = \sqrt{2\rk\log(1/\delta)}\,\eps_0 + \rk\eps_0 (e^{\eps_0}-1)$. Moreover, it outputs a feasible set $S$ such that $\Exp[\phi_D(S)] \ge (\tfrac{1}{2}- \gamma) \cdot \phi_D(\OPT) - O\Big(\frac{\rk\Delta \log n}{\eps_0}\Big)$ and makes $O(n \log \rk\log\gamma^{-1})$ oracle  calls.
\end{theorem}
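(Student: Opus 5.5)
Privacy and query complexity are the easy parts, so I would dispose of them first. Each iteration runs the exponential mechanism once on a candidate set $V_i$. The sampling of $V_i$ does not depend on $D$: it depends only on $S_{i-1}$, which was produced by earlier private steps. The quality function $q^i_D(u)=\phi'_D(u\mid S_{i-1})$ has sensitivity at most $2\Delta$, because the coefficients of $\phi'$ are dominated by those of $\phi$. Hence each round is $\eps_0$-DP, and \Cref{thm:composition} gives both the basic and the advanced statements.

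For the oracle count, iteration $i$ evaluates $|V_i|\le |N_i|\log(1/\gamma)/(\rk-i+1)+1$ marginals. Summing over $i$ and using $|N_i|\le n$ gives
\[
\sum_{i=1}^{\rk}\frac{n\log(1/\gamma)}{\rk-i+1}=O(n\log\rk\log\gamma^{-1}).
\]

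The utility proof follows the sketch of \Cref{thm:greedy_guarantee}, with an extra step that accounts for sampling. Fix iteration $i$ and condition on $S_{i-1}$. Let $C_i=\OPT\setminus S_{i-1}$, so $|C_i|\ge \rk-i+1=g(i)$. The key claim is
\[
\Exp\Bigl[\max_{u\in V_i}\phi'(u\mid S_{i-1})\Bigr]\ \ge\ (1-\gamma)\,\frac{1}{|C_i|}\sum_{u\in C_i}\phi'(u\mid S_{i-1}).
\]
To prove it, I first argue that $V_i\cap C_i\neq\emptyset$ with probability at least $1-(1-\tfrac{\log(1/\gamma)}{g(i)})^{|C_i|}\ge 1-\gamma$, using $|C_i|\ge g(i)$. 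Conditioned on $V_i\cap C_i\ne\emptyset$, symmetry of uniform sampling makes a uniformly random element of $V_i\cap C_i$ uniform over $C_i$. So its expected marginal is the average over $C_i$, and the maximum over $V_i$ is at least this value because marginals are nonnegative. The average is then lower bounded exactly as in \Cref{thm:greedy_guarantee}: submodularity and monotonicity handle the $f$ part, and \Cref{lemma:dist_lowerbound} handles the $d$ part. This gives
\[
\Exp[\phi'(u_i\mid S_{i-1})]\ge (1-\gamma)\Bigl[\tfrac{f'(\OPT)-f'(S_{i-1})}{\rk}+\tfrac{\lambda(i-1)}{\rk(\rk-1)}d(\OPT)\Bigr]-\alpha,
\]
where $f'=f/(2-\gamma)$ and $\alpha=O(\Delta\log n/\eps_0)$ comes from \Cref{thm:exponential_mech}. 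The additive error only depends on $\log|V_i|\le\log n$.

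Next I remove the conditioning, use monotonicity to replace $f'(S_{i-1})$ by $f'(S_\rk)$, and sum over $i$. This yields
\[
\Exp[\phi'(S_\rk)]\ge(1-\gamma)\bigl[f'(\OPT)-\Exp f'(S_\rk)\bigr]+(1-\gamma)\tfrac{\lambda}{2}d(\OPT)-\rk\alpha.
\]
Substituting $\phi'(S)=f(S)/(2-\gamma)+\lambda d(S)$ gives
\[
\frac{2-\gamma}{2-\gamma}\,\Exp f(S_\rk)+\lambda\,\Exp d(S_\rk)\ \ge\ \frac{1-\gamma}{2-\gamma}f(\OPT)+\frac{1-\gamma}{2}\lambda d(\OPT)-\rk\alpha.
\]
The left-hand side is $\Exp\phi(S_\rk)$. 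On the right, $(1-\gamma)/(2-\gamma)\ge \tfrac12-\gamma$ and $(1-\gamma)/2\ge \tfrac12-\gamma$, so the bound becomes $(\tfrac12-\gamma)\phi(\OPT)-O(\rk\Delta\log n/\eps_0)$. This explains the choice of the coefficient $1/(2-\gamma)$: it makes the $f(S_\rk)$ terms on the two sides combine with coefficient exactly $1$.

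The main obstacle is the sampling step. We need to compare the exponential mechanism's output on the random set $V_i$ against the average over $C_i$ without submodularity of $\phi'$. The point to check carefully is that the only property used is nonnegativity of the marginals of $\phi'$, which holds because $f$ is monotone and $d\ge0$. The comparison is per-iteration given $S_{i-1}$, so supermodularity of $d$ never enters. A secondary subtlety is the ceiling and the $\min\{\cdot,1\}$ in the sample size. When $g(i)\le\log(1/\gamma)$, the whole of $N_i$ is sampled and the bound is immediate. Otherwise, since the sample is drawn without replacement, the hit probability is at least the with-replacement bound $1-\gamma$.
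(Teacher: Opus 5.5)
Your proposal is correct and follows the paper's proof essentially step for step: privacy by composition, a harmonic-sum query count, a per-iteration bound from the exponential mechanism plus a hitting/symmetry argument and \Cref{lemma:dist_lowerbound}, then summation and the $1/(2-\gamma)$ rearrangement. The only deviation is that you apply the hitting argument directly to $C_i$, which is legitimate here because $|C_i|\ge g(i)$; the paper instead hits the top-$g(i)$ set $M_i$ (\Cref{lemma:gain_lowerbound}) and then uses that the average over $M_i$ dominates the average over $C_i$, a formulation it reuses for \DPOSG, where $|C_i|$ may be smaller than $g(i)$.
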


By replacing the exponential mechanism with the true maximum, we get a faster, non-private algorithm for \MSD, with the following guarantees.
\begin{corollary}
Instantiated with $\mathrm{MAX}$ instead of $\ExpMech$, \DPNOSG outputs a feasible set $S$ such that $\mathbb{E}[\phi(S)] \ge (\tfrac{1}{2}- \gamma) \phi(\OPT)$ and makes $O(n \log k \log\gamma^{-1})$ oracle  calls.
\end{corollary}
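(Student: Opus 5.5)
The corollary follows from the proof of \Cref{thm:sample_greedy_guarantee_non_oblivious}, read with the exponential mechanism replaced by an exact argmax. The plan is to establish that theorem's utility argument with the selection error $\alpha$ as a parameter. Then $\mathrm{MAX}$ corresponds to $\alpha=0$, so the additive term $O(\rk\Delta\log n/\eps_0)$ disappears. The privacy part is simply dropped.

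\textbf{Oracle calls.} Iteration $i$ evaluates $\phi'$ on $|V_i| = \lceil |N_i|\min\{\log(1/\gamma)/(\rk-i+1),1\}\rceil \le n\log(1/\gamma)/(\rk-i+1)+1$ candidates. Summing over $i$ gives the harmonic sum
\[
n\log\gamma^{-1}\sum_{j=1}^{\rk} 1/j + \rk = O(n\log \rk\log\gamma^{-1}).
\]

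\textbf{Per-step inequality.} Let $\OPT$ have size $\rk$ and let $C_i=\OPT\setminus S_{i-1}$. I would show that, conditioned on $S_{i-1}$, the sampled set $V_i$ meets $C_i$ with good probability. Since $|V_i|/|N_i|\ge \log(1/\gamma)/(\rk-i+1)$ and $|C_i|\ge \rk-i+1$,
\[
\Pr[V_i\cap C_i=\emptyset]\le (1-|V_i|/|N_i|)^{|C_i|}\le \gamma .
\]
Moreover, by symmetry of uniform sampling, each $u\in C_i$ is equally likely to be the best element of $V_i\cap C_i$. This gives
\[
\Exp[\max_{u\in V_i}\phi'(u\mid S_{i-1})] \ge (1-\gamma)\tfrac{1}{|C_i|}\sum_{u\in C_i}\phi'(u\mid S_{i-1}).
\]
The right-hand side is then bounded using submodularity of $f$ and \Cref{lemma:dist_lowerbound}, exactly as in the proof sketch of \Cref{thm:greedy_guarantee}.

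\textbf{Main obstacle.} The weights no longer telescope cleanly. The distance gain is $\frac{\lambda(i-1)}{\rk(\rk-1)}d(\OPT)$ but scaled by $(1-\gamma)$. The $f$-part involves $|C_i|$ rather than $\rk$, and if $|C_i|>\rk-i+1$ the distance lower bound must be tracked with $|C_i|$. I would first try dividing by $\rk$ via $|C_i|\le \rk$ together with monotonicity, as the sketch does. Summing over $i$ should then give
\[
\Exp[\phi'(S_\rk)]\ge (1-\gamma)\bigl(f'(\OPT)-\Exp f'(S_\rk)+\tfrac{\lambda}{2}d(\OPT)\bigr).
\]
The factor $\frac{1}{2-\gamma}$ in $\phi'$ is chosen so that, after rearranging, $\Exp\phi(S_\rk)\ge(\frac12-\gamma)\phi(\OPT)$. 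Checking this algebra, including the $\gamma$-loss on the $f$ term, is the delicate step.
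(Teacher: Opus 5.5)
Your outline is correct, and it departs from the paper's proof at one step. The paper proves \Cref{lemma:gain_lowerbound} by showing that $V_i$ hits $M_i$, the $g(i)=\rk-i+1$ elements of largest marginal gain, with probability at least $1-\gamma$. Only afterwards does it compare the average over $M_i$ with the average over $C_i$, using $|M_i|\le |C_i|$ and the maximality of $M_i$. You skip $M_i$ and bound the probability of missing $C_i$ directly. That is valid here precisely because $|C_i|\ge \rk-i+1=g(i)$, and it is slightly shorter for \DPNOSG. The paper's detour through $M_i$ is what lets the same lemma also serve \DPOSG. There $g(i)=\min\{\rk,n-i+1\}$ can exceed $|C_i|$, so $C_i$ itself need not be hit with probability $1-\gamma$.

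One justification is false as written. It is not true that each $u\in C_i$ is equally likely to be the best element of $V_i\cap C_i$: the element of $C_i$ with the largest gain is the best whenever it is sampled. The correct symmetry fact is this: conditioned on $G=\{V_i\cap C_i\neq\emptyset\}$, a \emph{uniformly random} element of $V_i\cap C_i$ is uniform on $C_i$. On $G$, the maximum over $V_i$ is at least the average over $V_i\cap C_i$. On $\overline{G}$, use $\phi'(u\mid S_{i-1})\ge 0$, which follows from monotonicity of $f$ and nonnegativity of $d$. Together these give your displayed inequality.

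The ``main obstacle'' you flag does not arise. \Cref{lemma:dist_lowerbound} already carries the factor $|C_i|$, so dividing by $|C_i|$ gives exactly $\frac{\lambda(i-1)}{\rk(\rk-1)}d(\OPT)$, whatever $|C_i|$ is. For the $f$-part, use $\sum_{u\in C_i}f'(u\mid S_{i-1})\ge f'(\OPT\cup S_{i-1})-f'(S_{i-1})\ge 0$ together with $|C_i|\le \rk$.

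Replace $\Exp f'(S_{i-1})$ by $\Exp f'(S_\rk)$ and sum, using $\sum_{i=1}^{\rk}\frac{i-1}{\rk(\rk-1)}=\frac12$ and $\phi'(\emptyset)\ge 0$. This gives $(2-\gamma)\Exp f'(S_\rk)+\lambda\Exp d(S_\rk)\ge(1-\gamma)\bigl(f'(\OPT)+\frac{\lambda}{2}d(\OPT)\bigr)$. With $f'=f/(2-\gamma)$ this reads $\Exp\phi(S_\rk)\ge\frac{1-\gamma}{2-\gamma}f(\OPT)+\frac{1-\gamma}{2}\lambda d(\OPT)$. Both coefficients are at least $\frac12-\gamma$ for $\gamma\in(0,1)$, which finishes the proof.
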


The next theorem states the improved privacy guarantees for decomposable objective functions. Note that under our definition of $\phi'$, if $\phi$ is 1-decomposable, then so is $\phi'$.
\begin{theorem}\label{thm:sample_greedy_decomposable_non_oblivious}
    Suppose $\phi_D:2^V\to \R$ is $1$-decomposable. Then given parameter $\eps_0\in(0,1]$, \Cref{alg:sample_greedy}  is $(\eps,\delta)$-DP for every $\delta>0$ and $\eps=(e^{\eps_0/2}-1)(4+\log(1/\delta))$.
\end{theorem}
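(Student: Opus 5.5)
We follow the privacy analysis of the Combinatorial Public Projects (CPP) problem in \citet{gupta2010differentially}, extended to account for the random subsets $V_i$. Fix neighboring datasets $D=D_0\cup\{x\}$ and $D'=D_0\cup\{x'\}$. Since $\phi'$ is $1$-decomposable, each quality function splits as $q^i_D(u)=q^i_{D_0}(u)+\phi'_x(u\mid S_{i-1})$, and similarly for $D'$.

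\textbf{Step 1: reduce to a fixed output sequence.} The output $S_\rk$ is a deterministic function of the ordered sequence $(u_1,\dots,u_\rk)$. It therefore suffices to bound the privacy loss of the random sequence. The sampling of the sets $V_i$ does not depend on the data. Marginalizing over them, the probability of a sequence under $D$ is
\[
\Pr_D[u_1,\dots,u_\rk]=\prod_{i=1}^{\rk}\Pr\left[u_i\in V_i\right]\cdot \Ex{\frac{\exp(\eps_0 q^i_D(u_i)/2)}{\sum_{v\in V_i}\exp(\eps_0 q^i_D(v)/2)} \,\Big|\, u_i\in V_i}.
\]
The inclusion probabilities are identical under $D$ and $D'$. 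The privacy loss therefore comes only from the exponential-mechanism factors, averaged over $V_i$.

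\textbf{Step 2: bound the per-step log-ratio.} Follow the CPP argument. The numerator changes by the factor $\exp\bigl(\eps_0(\phi'_x(u_i\mid S_{i-1})-\phi'_{x'}(u_i\mid S_{i-1}))/2\bigr)$. Taking the product over $i$, the marginals telescope to $\phi'_x(S_\rk)-\phi'_{x'}(S_\rk)\in[-1,1]$, which contributes only $O(\eps_0)$ to the privacy loss.

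For the denominators, write the ratio for a fixed $V_i$ as
\[
\frac{Z^i_{D'}}{Z^i_D}=\sum_{v\in V_i}p^i_D(v)\,e^{\eps_0(\phi'_{x'}(v\mid S_{i-1})-\phi'_x(v\mid S_{i-1}))/2}\le 1+(e^{\eps_0/2}-1)\,\Ex{\phi'_{x'}(v\mid S_{i-1})}[v\sim p^i_D],
\]
using convexity of $t\mapsto e^{\eps_0 t/2}$ on $[0,1]$ and nonnegativity of the $x$-marginals. Here $p^i_D$ is the exponential-mechanism distribution on $V_i$ under $D$. Averaging over $V_i$ conditioned on $u_i\in V_i$ requires care. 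I would handle it by bounding the ratio of the conditional expectations of $1/Z$ through the pointwise bound for each fixed $V_i$. This gives a factor of at most $1+(e^{\eps_0/2}-1)\,\Ex{\phi'_{x'}(u_i\mid S_{i-1})}[\text{next step under }D]$, where the expectation is over the actual randomized next choice, including the subsampling.

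\textbf{Step 3: concentration of the accumulated loss.} Summing the logarithms bounds the total privacy loss by
\[
O(\eps_0)+(e^{\eps_0/2}-1)\sum_i \Ex{\phi'_{x'}(u_i\mid S_{i-1})}[i\text{-th step}].
\]
The realized marginals $\phi'_{x'}(u_i\mid S_{i-1})$ telescope to $\phi'_{x'}(S_\rk)\le 1$. A martingale tail bound then shows that the sum of conditional expectations exceeds $1+\log(1/\delta)$ with probability at most $\delta$. This is exactly the step of \cite{gupta2010differentially} that yields $\eps=(e^{\eps_0/2}-1)(4+\log(1/\delta))$. Outside this bad event, the loss ratio is bounded by $e^\eps$, which gives $(\eps,\delta)$-DP.

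\textbf{Main obstacle.} The hardest point is Step 2 with subsampling. The normalizer depends on the random $V_i$, and conditioning on $u_i\in V_i$ changes the distribution of $V_i$. So one cannot simply average the pointwise bound without checking that the resulting expectation is the one that appears in the telescoping martingale of Step 3. I would first try to define the per-step mechanism as ``sample $V_i$, then run $\ExpMech$'' and prove the bound directly on its output distribution. If Jensen-type inequalities go the wrong way when averaging over $V_i$, I would instead couple the subsets $V_i$ to be identical under $D$ and $D'$ and treat each $V_i$ as public randomness. That makes the CPP analysis apply verbatim conditioned on $(V_1,\dots,V_\rk)$, after which integrating out the $V_i$ preserves $(\eps,\delta)$-DP.
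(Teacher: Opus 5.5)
\paragraph{Step 2 does not go through as written.} It fails exactly where you flagged it. Fix the history and a candidate set $T\ni u_i$. Let $p^T_D$ be the exponential-mechanism distribution on $T$, let $Z^i_D(T)=\sum_{v\in T}e^{\eps_0 q^i_D(v)/2}$, let $\mu_T=\Exp_{v\sim p^T_D}[\phi'_{x'}(v\mid S_{i-1})]$, and let $w_T=\Pr[V_i=T]/Z^i_D(T)$. The normalizer part of the step-$i$ ratio is $\sum_{T\ni u_i}\Pr[V_i=T]/Z^i_D(T)$ divided by the same sum under $D'$. Your pointwise bound $Z^i_{D'}(T)\le Z^i_D(T)\bigl(1+(e^{\eps_0/2}-1)\mu_T\bigr)$ makes this at most a $w$-weighted harmonic mean of the $1+(e^{\eps_0/2}-1)\mu_T$, hence at most
\[
1+(e^{\eps_0/2}-1)\,\frac{\sum_{T\ni u_i}w_T\,\mu_T}{\sum_{T\ni u_i}w_T}=1+(e^{\eps_0/2}-1)\,\Exp\bigl[\mu_{V_i}\mid S_{i-1},u_i\bigr].
\]
The equality holds because $w_T$ is proportional to the posterior probability that $V_i=T$ given the realized $u_i$ under $D$. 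This is not the predictable quantity $\Exp[\mu_{V_i}\mid S_{i-1}]=\Exp[\phi'_{x'}(u_i\mid S_{i-1})\mid S_{i-1}]$ whose realizations telescope in Step 3. It depends on $u_i$ and can be much larger. A low-score $u_i$ tends to be chosen only on the rare draws of $V_i$ that miss all high-score elements, and these may be exactly the draws on which $\mu_{V_i}$ is large. So the concentration lemma does not apply to the product of these factors.

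\paragraph{Your fallback is correct and should be the proof.} Condition on sampling seeds rather than on $(V_1,\dots,V_\rk)$ themselves, because $V_i\subseteq N_i$ depends on $S_{i-1}$. For example, draw independent uniform permutations $\pi_i$ of $V$ and let $V_i$ be the first $|V_i|$ elements of $N_i$ in the order $\pi_i$. This has the correct law. For fixed seeds, $V_i$ is a data-independent function of $S_{i-1}$, so $u_i$ is a genuine sample from $p^{V_i}_D$ and the CPP computation applies verbatim. Equivalently, you can release the $V_i$ with the output and post-process.

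Fix the constants as follows. The numerators contribute at most $\eps_0/2\le e^{\eps_0/2}-1$. The cited concentration lemma gives $\sum_i\mu_{V_i}<3+\log(1/\delta)$ with probability at least $1-\delta$, not $1+\log(1/\delta)$. Together these give $(e^{\eps_0/2}-1)(4+\log(1/\delta))$. Finally, a mixture of $(\eps,\delta)$-DP mechanisms over data-independent seeds is $(\eps,\delta)$-DP.

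\paragraph{Comparison with the paper.} This is a different route from the paper's. The paper keeps the uniform mixture over $T\ni u_i$ and bounds the ratio of sums by the ratio at a single worst-case set $T_i$. It then applies the concentration lemma to the exponential-mechanism distribution $P_i$ on $T_i$. That avoids introducing seeds. However, $T_i$ is chosen as a function of the realized $u_i$, and $u_i$ is drawn from the mixture rather than from $P_i$, so the lemma's hypothesis is not literally met there.

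The worst-case choice can also be loose. Suppose one element carries all of the differing record's marginal gain and has dominant score. Then the worst-case $T_i$ includes it in every round, accruing about $\eps_0/2$ per round until that element is actually sampled. Your conditioning on the sampling randomness makes the process match the lemma exactly and recovers the stated $\eps$.
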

\Cref{thm:sample_greedy_guarantee_non_oblivious,thm:sample_greedy_decomposable_non_oblivious} correspond to row \textit{(ii)} in \Cref{tab:results}.
As in the previous section, 
 \Cref{thm:greedy_decomposable} implies that for decomposable objectives, setting $\varepsilon_0 =O( \varepsilon/\log(1/\delta))$ suffices to ensure $(\varepsilon, \delta)$-DP. Substituting this into the utility guarantee of \Cref{thm:sample_greedy_guarantee_non_oblivious} yields the stated additive error in \Cref{tab:results}.

Before proceeding to the proof, we introduce additional notation. Let $\OPT$ be an optimal solution and $S_i$ the set obtained at the end of iteration $i$. Recall that $N_i = V \setminus S_{i-1}$ contains all unselected elements after iteration $i-1$, and $C_i = \OPT \setminus S_{i-1}$ denotes the subset of these elements belonging to $\OPT$. Let $M_i \subseteq N_i$ be a set of size $g(i)$ maximizing $\sum_{v \in M_i} \phi'(v \mid S_{i-1})$. That is, $M_i$ consists of the $g(i)$ elements with the largest marginal contributions during iteration $i$ relative to the auxiliary function $\phi'$. We prove the following.

\begin{lemma}\label{lemma:gain_lowerbound}
    For every iteration $i\in [\rk]$, 
    conditioned on having selected a set $S_{i-1}$ after the first $i-1$ iterations, the following holds.
    $
        \Exp[ \phi'(u_i \mid S_{i-1})] \ge \frac{1-\gamma}{\abs{M_i}}\sum_{v\in M_i} \phi'(v\mid S_{i-1}) - \frac{4\Delta  \log n}{\eps_0}
    $
\end{lemma}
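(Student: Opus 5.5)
We condition on $S_{i-1}$ throughout, so that the marginals $\phi'(v\mid S_{i-1})$ are fixed numbers. The only remaining randomness comes from the sampled set $V_i$ and from the exponential mechanism. We proceed in two stages.

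\textbf{Stage 1: the exponential mechanism.} Fix $V_i$. By \Cref{thm:exponential_mech}, applied to $q^i_D$ on the candidate set $V_i$, we get
\[
\Exp[\phi'(u_i\mid S_{i-1})\mid V_i]\ge \max_{u\in V_i}\phi'(u\mid S_{i-1})-\frac{2\Delta_{q}\log|V_i|}{\eps_0}.
\]
The sensitivity of the marginal $q^i_D(u)=\phi'_D(S_{i-1}\cup\{u\})-\phi'_D(S_{i-1})$ is at most $2\Delta$. This uses that $\phi'$ has sensitivity at most $\Delta$, which holds since the coefficient $1/(2-\gamma)\le 1$ only shrinks the $f$ part. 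Indeed $\phi'_D-\phi'_{D'}$ is a combination of $f_D-f_{D'}$ and $d_D-d_{D'}$; if one only knows the sensitivity of $\phi$, a little care is needed here, and I would assume, as the paper implicitly does, that the bound transfers. Since $|V_i|\le n$, the error term is at most $4\Delta\log n/\eps_0$. Taking expectation over $V_i$, it suffices to show
\[
\Exp_{V_i}\Bigl[\max_{u\in V_i}\phi'(u\mid S_{i-1})\Bigr]\ge \frac{1-\gamma}{|M_i|}\sum_{v\in M_i}\phi'(v\mid S_{i-1}).
\]

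\textbf{Stage 2: the sampling argument, in the style of \citet{mirzasoleiman2015lazier}.} Marginals of $\phi'$ may be negative for $f$ only if $f$ is nonmonotone; here $f$ is monotone and $d\ge0$, so all marginals are nonnegative. Order the elements of $M_i$ as $v_1,\dots,v_{g}$ with nonincreasing marginals, where $g=g(i)=|M_i|$ (assuming $|N_i|\ge g(i)$, which holds since $|N_i|=n-i+1\ge k-i+1$).

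\emph{Hitting probability.} If $\log(1/\gamma)/g\ge1$, then $V_i=N_i$ and the bound is trivial. Otherwise, the sampled set of size $s=\lceil |N_i|\log(1/\gamma)/g\rceil$ misses all of $M_i$ with probability
\[
\prod_{j=0}^{s-1}\Bigl(1-\frac{g}{|N_i|-j}\Bigr)\le \Bigl(1-\frac{g}{|N_i|}\Bigr)^{s}\le e^{-sg/|N_i|}\le\gamma .
\]
\emph{Uniformity.} Conditioned on $V_i\cap M_i\neq\emptyset$, symmetry implies that the $\ell$-th best element of $M_i$ is at least as likely as any worse one to be the best element of $M_i$ present in $V_i$. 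More precisely, $\Pr[v_1\in V_i]\ge$ the probability that $v_2$ is the top element of $M_i$ hit, and so on; these probabilities are nonincreasing in rank.

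\emph{Chebyshev's sum inequality.} Pairing the nonincreasing probabilities with the nonincreasing values gives
\[
\Exp\Bigl[\max_{u\in V_i\cap M_i}\phi'(u\mid S_{i-1})\Bigr]\ge \Pr[V_i\cap M_i\ne\emptyset]\cdot\frac1g\sum_{v\in M_i}\phi'(v\mid S_{i-1})\ge\frac{1-\gamma}{g}\sum_{v\in M_i}\phi'(v\mid S_{i-1}).
\]
Since $\max_{u\in V_i}\ge\max_{u\in V_i\cap M_i}$ (with nonnegative marginals, the empty case contributes $0$), this completes the argument.

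The main obstacle is this rank-monotonicity/Chebyshev step. If it proves awkward, an alternative is to write $\max_{u\in V_i}\phi'(u\mid S_{i-1})\ge\sum_{\ell}\phi'(v_\ell\mid S_{i-1})\,\mathbf 1[v_\ell \text{ is the first hit}]$ and bound the expectation directly by exchangeability. Either route yields the lemma.
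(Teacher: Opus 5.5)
Your proof is correct. Stage 1 and the hitting bound match the paper: the exponential mechanism guarantee with sensitivity $2\Delta$ for $q^i_D$ and $\log|V_i|\le\log n$, and then $\Pr[V_i\cap M_i\neq\emptyset]\ge 1-\gamma$.

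You differ in how you pass from $\max_{v\in V_i\cap M_i}\phi'(v\mid S_{i-1})$ to the average over $M_i$.

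\emph{The paper's route.} It first weakens the maximum to the average over $V_i\cap M_i$, i.e.\ the value of a uniformly random element of $V_i\cap M_i$. Conditioned on $G=\{V_i\cap M_i\neq\emptyset\}$, the symmetry of uniform sampling makes this element uniform over $M_i$. Each $v\in M_i$ therefore receives weight exactly $\Pr[G]/|M_i|$, and the bound follows with no ordering at all.

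\emph{Your route.} You keep the maximum and sort $M_i$ by marginal value. You then show, by a transposition/exchangeability argument, that the probabilities $q_\ell$ that $v_\ell$ is the best element of $M_i$ hit by $V_i$ are nonincreasing in $\ell$. Chebyshev's sum inequality with $\sum_\ell q_\ell=\Pr[G]$ and nonnegative marginals then finishes the proof. This is valid, and it proves a slightly stronger intermediate statement: it bounds the expected best hit rather than a random hit. However, this extra strength is not used in the final bound, so the paper's version is simply the more economical of the two.

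\emph{Your fallback.} The ``first hit'' argument gives equal weights only if ``first'' refers to an order independent of the marginal values, such as the sampling order; that is essentially the paper's argument. With the ranking order, it just reproduces your main argument, where the weights are nonincreasing rather than equal.

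\emph{Sensitivity caveat.} Your caveat about the sensitivity of $\phi'$ is justified. The paper likewise simply asserts a $2\Delta$ bound for $q^i_D$. This bound is immediate when $d$ does not depend on $D$, as in all of the paper's applications, since then $\phi'_D-\phi'_{D'}=\frac{1}{2-\gamma}(\phi_D-\phi_{D'})$.
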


Intuitively, the proof leverages the guarantee of the \ExpMech to show that the expected marginal gain of the selected $u_i$ is at least the average gain across elements in $V_i \cap M_i$ up to a bounded additive error. Since this intersection is non-empty with probability at least $1-\gamma$, and each element in $M_i$ has an equal probability of being sampled into $V_i$, the law of total expectation yields the desired bound. This follows from the definition of $M_i$ as the set of highest contributing elements. The formal proof is provided in \Cref{appendix:sample_greedy}.
\begin{proof}[Proof of \Cref{thm:sample_greedy_guarantee_non_oblivious}]
The complexity bound holds due to the fact that \DPNOSG evaluates $O(\frac{n}{\rk-i} \log \gamma^{-1})$ candidates in iteration $i$, the number of oracle calls across $\rk$ iterations is $O(n \log \rk \log \gamma^{-1})$ using the standard Harmonic sum approximation. 
The full argument is deferred to \Cref{appendix:sample_greedy}.
Privacy follows from the composition theorems (\Cref{thm:composition}) and the $\eps_0$-DP guarantee of the exponential mechanism. We thus focus on the utility guarantee.

Let $f' = \frac{1}{2 - \gamma} f$ and $\phi' = f' + \lambda d$. Define $\alpha = 4\eps_0^{-1}\Delta \log n$. Condition on having selected $S_{i-1}$ prior to iteration $i \in [k]$. By \Cref{lemma:gain_lowerbound}, we have
\begin{align*}
     \Exp[ \phi'(u_i \mid S_{i-1})] &\ge \tfrac{1-\gamma}{|M_i|}\sum_{v\in M_i} \phi'(v\mid S_{i-1}) - \alpha\\
     &\ge \tfrac{1-\gamma}{|C_i|}\sum_{v\in C_i} \phi'(v\mid S_{i-1}) - \alpha \\ 
     &\ge (1-\gamma) \big(\tfrac{f'(\OPT)-f'(S_{i-1})}{\rk} + \tfrac{\lambda(i-1)}{\rk(\rk-1)}d(\OPT)\big) - \alpha.
\end{align*}
The second inequality holds because $|M_i| = k-i+1 \le |C_i|$ , and $M_i$ contains the elements with the largest marginal gains. The third follows from submodularity of $f'$ and \Cref{lemma:dist_lowerbound} for $d$. Taking an expectation over $S_{i-1}$ gives
\begin{align*}
     \Exp[ \phi'(u_i \mid S_{i-1})] \ge  (1-\gamma) \paren{\tfrac{f'(\OPT)-\Exp[f'(S_k)]}{\rk} + \tfrac{\lambda(i-1)}{\rk(\rk-1)}d(\OPT)} - \alpha.
\end{align*}
where we use the monotonicity of $f'$ and the fact that $S_{i-1} \subseteq S_\rk$ to bound  $\Exp[f'(S_\rk)] \ge \Exp[f'(S_{i-1})]$. Summing over $i \in [\rk]$  yields
\[
    \Exp[\phi'(S_\rk)] \ge (1-\gamma)(f'(\OPT)-\Exp[f'(S_\rk)] + \tfrac{\lambda}{2}d(\OPT)) - \rk\alpha.
\]
Rearranging terms and substituting $f'$ gives:
\begin{align*}
    \mathbb{E}[\phi(S_k)] \ge (1-\gamma)\paren{\tfrac{f(\OPT)}{2-\gamma} + \tfrac{\lambda d(\OPT)}{2}} - \rk\alpha \ge (\tfrac{1}{2}-\gamma)\phi(\OPT) - \rk\alpha.
\end{align*}
\end{proof}

 Next, we address the stronger guarantee of~\Cref{thm:sample_greedy_decomposable_non_oblivious}, which applies when the function $\phi$ is decomposable.
We build on the technique of~\cite{gupta2010differentially}, which bounds the privacy loss of the greedy algorithm for submodular maximization by the sum of expected marginal gains of a single individual’s function $f_x$ across all rounds, where the expectation is taken over a distribution induced by the functions of the other individuals. This quantity is bounded with high probability by the sum of realized marginal gains $\sum_{i=1}^{\rk} [f_x(S_i) - f_x(S_{i-1})] = f_x(S_\rk) - f_x(S_0) \leq \Delta$, which telescopes to a constant  independent of $\rk$.
We extend this technique to \Cref{alg:sample_greedy} by providing a unified analysis for the subsampling steps and the \ExpMech. Specifically, we prove that for any deterministic and adaptive choice of $V_i$, the privacy loss can be bounded by a sum of expected marginal gains, which is bounded by a telescoping sum with high probability, analogous to~\cite{gupta2010differentially}. This extension enables our utility bound to outperform a naive composition-based analysis of repeated \ExpMech, and approach existing lower bounds.
We provide a proof sketch, with the full proof given in \Cref{appendix:sample_greedy}.
\begin{proof}[Proof Sketch of \Cref{thm:sample_greedy_decomposable_non_oblivious}]
Let $D$ and $D'$ be two datasets such that $(D\setminus D')\cup (D'\setminus D)=\{x\}$. We first bound the output probability ratio for this add/remove pair, and then apply a reduction to derive the bound for neighboring datasets under the substitution neighboring relation.
Suppose that instead of a set, \Cref{alg:sample_greedy} outputs the sequence of picked elements in their picking order. Let $U= (u_1, \dots, u_\rk)$ be such any sequence.
 By the definition of the exponential mechanism, the probability of selecting element $u$ at iteration $i$ given dataset $D$ and a sampled set of candidates $V_i$ is:
\[
 \Pr[\ExpMech(q^i_D, V_i, D, \eps_0) = u] = \frac{\exp\left(\frac{\eps_0}{2} \beta_D^i(u)\right)}{\sum_{u' \in V_i} \exp\left(\frac{\eps_0}{2} \beta_D^i(u')\right)}.
\]
where we define $\beta_D^i(u) =  \sum_{x\in D} \phi'_x(u\mid \set{u_1,\dots,u_{i-1}})$.
For brevity, denote \Cref{alg:sample_greedy} by $\cG$, and drop irrelevant parameters. 
By the chain rule and the law of total probability over the uniform sampling of $V_i$, the probability ratio $\frac{\Pr[\cG(D)=U]}{\Pr[\cG(D')=U]}$ is bounded by:
\begin{align}
\Bigg( \prod_{i=1}^\rk\tfrac{\exp(\frac{\eps_0}{2}\cdot \beta^i_D(u_i))}{\exp(\frac{\eps_0}{2}\cdot \beta^i_{D'}(u_i))} \Bigg) \Bigg(  \prod_{i=1}^\rk \tfrac{ \sum_{u\in T_i}\exp(\frac{\eps_0}{2}\cdot \beta^i_{D'}(u))}{ \sum_{u\in T_i}\exp(\frac{\eps_0}{2}\cdot \beta^i_D(u))}\Bigg). \label{eq:total_ratio}
\end{align}
 where $T_i$ is a  worst-case realization of $V_i$ containing $u_i$. We consider two cases.

\textbf{Case 1: $D = D' \cup \{x\}$.} By the $1$-decomposability of $\phi'$, the first factor in \eqref{eq:total_ratio} is bounded by $\exp(\frac{\eps_0}{2}\sum_i (\beta^i_D(u_i) - \beta^i_{D'}(u_i))) \le \exp(\eps_0/2)$. The second factor is at most $1$ since $\beta^i_x \ge 0$.

\textbf{Case 2: $D' = D \cup \{x\}$.} The first factor in \eqref{eq:total_ratio} is at most $1$ . The second factor is a product of expectations $\prod_{i=1}^\rk \E{u\sim P_i}[\exp(\frac{\eps_0}{2} \beta^i_x (u))]$, where $P_i(v) \propto \exp(\varepsilon_0 \beta^i_D(v))$ for $v \in T_i$. Using the concentration bounds from \cite{gupta2010differentially,chaturvedi2021differentially}, the product of expectations is bounded by $\exp((e^{\eps_0/2}-1)(3+\log(1/\delta)))$ with probability $1-\delta$.

Combining these add/remove cases and applying a standard reduction establishes an $((e^{\eps_0/2}-1)(4+\log(1/\delta)), \delta)$-DP guarantee for neighboring datasets under the substitution relation.
\end{proof}

So far, we have analyzed \DPNOSG, an instantiation of \Cref{alg:sample_greedy} that makes  $O_{\gamma}(n \log \rk)$ oracle calls. In the following section, we introduce a linear-time algorithm where the number of oracle calls is independent of $\rk$.

\subsection{Oblivious Sample Greedy}\label{para:obl_sample_greedy}
We present \DPOSG, an efficient algorithm for DP \MSD that makes only $O_\gamma(n)$ oracle calls, albeit with a weaker approximation factor. Our experiments in \Cref{sec: exp} demonstrate that \DPOSG achieves high utility under tight privacy constraints. Furthermore, it is significantly more scalable than \DPG, making it better suited for interactive settings.
\DPOSG uses the utility function $g(i) = \min\set{\rk ,\,  n-i+1}$ for all $i\in[\rk]$.
A key component in the analysis of \DPNOSG is \Cref{lemma:dist_lowerbound}, which lower bounds the total distance between the current solution and the remaining unselected elements of $\OPT$ as a function of $d(\OPT)$. While this yields tight guarantees for \DPG and \DPNOSG, applying the same technique with the current choice of $g$ leads only to a $1/6-\gamma$ approximation.
We therefore take a different approach and instantiate \Cref{alg:sample_greedy} with $\phi'=\phi$, i.e., in an oblivious manner. Instead of relying on \Cref{lemma:dist_lowerbound}, we bound the total distance to the remaining elements of $\OPT$ using the marginal gain $d(\OPT\cup S_i)-d(S_i)$ (\Cref{lemma:dist_marginal_gain}). This yields a new progress inequality for the oblivious variant, leading to a tighter approximation factor.

One can verify that the privacy guarantees stated in \Cref{thm:sample_greedy_decomposable_non_oblivious,thm:sample_greedy_guarantee_non_oblivious}
hold for \Cref{alg:sample_greedy} with the current choice of $g$ and $\phi'$. 

\begin{restatable}{theorem}{ObliviousSampleGreedy}\label{thm:sample_greedy_guarantee_oblivious}
Suppose $\phi$ has sensitivity $\Delta$. Then,  \DPOSG (\Cref{alg:sample_greedy}) outputs a feasible set $S$ such that 
\[\Exp[\phi(S)] \ge (1-(\tfrac{2}{e})^{(1-\gamma)(1-1/\rk)}) \cdot \phi(\OPT) - O(\frac{\rk \Delta \log n}{\eps_0})\] and makes $O(n\log\gamma^{-1})$ oracle  calls.
\end{restatable}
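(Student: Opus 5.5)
Oracle complexity first. In iteration $i$, the sampled set has size $\lceil |N_i|\log(1/\gamma)/\min\{\rk,n-i+1\}\rceil$. Here $|N_i| = n-i+1$, so this size is at most $O((n-i+1)\log\gamma^{-1}/\rk + \log\gamma^{-1})$ when $g(i)=\rk$, and at most $O(\log\gamma^{-1})$ when $g(i)=n-i+1$. Summing over $\rk\le n$ iterations gives $O(n\log\gamma^{-1})$ calls. The privacy claims carry over verbatim, so only utility needs work.

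\textbf{Utility, per iteration.} The plan is a progress inequality of Sample Greedy type, $\Exp[\phi(S_i)-\phi(S_{i-1})] \ge \frac{1-\gamma}{\rk}\,\text{(gap)} - \alpha$, with $\alpha = 4\Delta\log n/\eps_0$. Lemma~\ref{lemma:gain_lowerbound} was stated for general $g$ and $\phi'$, and we apply it with $\phi'=\phi$. Since $M_i$ holds the top $g(i)$ marginals, $|C_i|\le g(i)$, and marginals are nonnegative, this gives
\[
\Exp[\phi(u_i\mid S_{i-1})] \ge \tfrac{1-\gamma}{\rk}\sum_{v\in C_i}\phi(v\mid S_{i-1}) - \alpha .
\]
For the $f$-part, submodularity and monotonicity give $\sum_{v\in C_i} f(v\mid S_{i-1}) \ge f(\OPT)-f(S_{i-1})$.

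\textbf{Utility, distance part (the main obstacle).} This is the step recorded as Lemma~\ref{lemma:dist_marginal_gain}. Write $C=C_i$ and $S=S_{i-1}$. Then $\sum_{v\in C} d(v\mid S) = d(C,S)$, while $d(\OPT\cup S)-d(S) = d(C,S)+d(C)$. So we must absorb $d(C)\le d(\OPT)$, and we aim for
\[
\sum_{v\in C} d(v\mid S) \ge d(\OPT\cup S) - d(S) - d(\OPT) \ge d(\OPT) - d(S) - d(\OPT)\cdot c_i .
\]
The crude version loses a full $d(\OPT)$, so I would refine it with the triangle inequality. For each pair $a,b\in C$ and $s\in S$, $d(a,b)\le d(a,s)+d(s,b)$. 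Averaging over $s\in S$, with $|S|=i-1$, gives $d(C)\le \frac{|C|-1}{i-1}\,d(C,S)$. Hence $d(C,S)\ge \frac{i-1}{|C|+i-2}\,(d(C,S)+d(C))$, which is roughly $\frac{i-1}{\rk}\,(\phi\text{-gap in } d)$.

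\textbf{Recurrence and where $2/e$ should come from.} Combining the two parts gives roughly
\[
\Exp[\phi(S_i)] - \Exp[\phi(S_{i-1})] \ge \tfrac{1-\gamma}{\rk}\,\big(\phi(\OPT) - 2\,\Exp[\phi(S_{i-1})]\big) - \alpha ,
\]
where the factor $2$ comes from charging $d(S_{i-1})$ twice; equivalently, the $d(\OPT)$ loss is controlled by the triangle-inequality weight. I would set $a_i = \phi(\OPT) - 2\,\Exp[\phi(S_i)]$, giving $a_i \le (1-\frac{2(1-\gamma)}{\rk})\,a_{i-1}+2\alpha$, though the exact form must be checked. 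Unrolling over $\rk$ steps (and the $\rk-1$ effective steps that account for the $1-1/\rk$ in the exponent, since $S_0=\emptyset$ contributes no distance) should give a factor $(1-(2/e)^{(1-\gamma)(1-1/\rk)})$, up to the $O(\rk\alpha)$ additive term. I expect the delicate step to be tuning this distance inequality so the constants produce exactly the stated base $2/e$. If a direct recurrence fails, I would instead track $f$ and $d$ with separate potentials.
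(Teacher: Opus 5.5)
\textbf{Verdict: there is a genuine gap, in the recurrence step.}

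Several of your steps match the paper's proof exactly:
\begin{itemize}
\item the complexity count;
\item the use of Lemma~\ref{lemma:gain_lowerbound} with $\phi'=\phi$ (with $|C_i|\le g(i)\le\rk$);
\item the $f$-bound;
\item the triangle-inequality estimate $d(C_i)\le\frac{|C_i|-1}{i-1}\,d(C_i,S_{i-1})$, which is Lemma~\ref{lemma:dist_marginal_gain}.
\end{itemize}

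The inequality $\Exp[\phi(S_i)]-\Exp[\phi(S_{i-1})]\ge\frac{1-\gamma}{\rk}\bigl(\phi(\OPT)-2\Exp[\phi(S_{i-1})]\bigr)-\alpha$ does not follow from your two bounds, and it is false. Take $f\equiv 0$, $\lambda>0$, $V=\OPT$ with all pairwise distances equal to $1$, and the non-private version ($\alpha=0$). Then $\phi(S_1)=0$ and $\phi(S_2)=\lambda$ in every run. Your right-hand side at $i=2$ is $\frac{(1-\gamma)\lambda(\rk-1)}{2}$, which exceeds $\lambda$ once $\rk\ge 4$ and $\gamma<1/3$.

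The underlying reason is a scale mismatch. In iteration $i$ the diversity term grows by at most $(i-1)\max_{u,v}d(u,v)$, while $d(\OPT)$ can be of order $\rk^2\max_{u,v}d(u,v)$. So no uniform rate of order $1/\rk$ is available in early iterations. Your factor $2$ mimics the local-search inequality of Lemma~\ref{lemma:sample_progress}, which relies on the current solution already being a base of size $\rk$. As a result, your plan gives no route to the constant $2/e$, as you suspected.

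\textbf{The missing idea} is to merge the two bounds under one iteration-dependent factor. Since $\sum_{v\in C_i}f(v\mid S_{i-1})\ge f(\OPT\cup S_{i-1})-f(S_{i-1})\ge 0$ and $1+\frac{\rk-1}{i-1}\ge 1$, you may divide the $f$-bound by the same factor as the $d$-bound. Using also $f(\OPT\cup S_{i-1})\ge f(\OPT)$ and $d(\OPT\cup S_{i-1})\ge d(\OPT)$, this gives
\[
\sum_{v\in C_i}\phi(v\mid S_{i-1})\ \ge\ \frac{i-1}{\rk+i-2}\,\bigl(\phi(\OPT)-\phi(S_{i-1})\bigr).
\]
Note the factor is $\frac{i-1}{\rk+i-2}$, not your overestimate $\frac{i-1}{\rk}$.

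Hence for $i\ge 2$ the potential $\phi(\OPT)-\Exp[\phi(S_i)]$ (not $\phi(\OPT)-2\Exp[\phi(S_i)]$) contracts by the factor $1-\frac{(1-\gamma)(i-1)}{\rk(\rk+i-2)}$, up to an additive $\alpha$. Iteration $1$ guarantees no progress, and that is the source of the $1-1/\rk$ in the exponent. Unrolling with $1-x\le e^{-x}$, the exponent is
\[
-(1-\gamma)\,\frac{1}{\rk}\sum_{j=1}^{\rk-1}\frac{j}{j+\rk-1}=-(1-\gamma)\Bigl(1-\frac{1}{\rk}\Bigr)\Bigl(1-\sum_{m=\rk}^{2\rk-2}\frac{1}{m}\Bigr)\le-(1-\gamma)\Bigl(1-\frac{1}{\rk}\Bigr)(1-\log 2).
\]
So the base $2/e=e^{-(1-\log 2)}$ comes from the harmonic bound $\sum_{m=\rk}^{2\rk-2}1/m\le\log 2$ (Lemma~\ref{lemma:technical_prod}), not from charging $d(S_{i-1})$ twice.
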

\ifpaper
Lemma C.4 in~\cite{full_version}
\else
\Cref{lemma:technical_prod}
\fi
shows that the approximation factor is greater than $1-2/e-\gamma-1/\rk$, and we use this linearized form for simplicity.
\Cref{thm:sample_greedy_guarantee_oblivious}, together with the privacy guarantees in \Cref{thm:sample_greedy_decomposable_non_oblivious,thm:sample_greedy_guarantee_non_oblivious} correspond to row \textit{(iii)} in \Cref{tab:results}.
By replacing the exponential mechanism with the true maximum, we get the first linear-time, non-private algorithm with a constant factor approximation.
\begin{corollary}
    Instantiated with $\mathrm{MAX}$ instead of $\ExpMech$, \DPOSG outputs a feasible set $S$ such that
    $       \Exp[\phi(S)] \ge (1-(\tfrac{2}{e})^{(1-\gamma)(1-1/\rk)}) \cdot  \phi(\OPT),
    $
    and makes $O(n\log\gamma^{-1})$ oracle  queries.
\end{corollary}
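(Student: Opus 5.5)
The proof proceeds in three parts. First we bound the oracle calls. Then we establish a per-iteration progress inequality for the oblivious objective $\phi$. Finally we solve the resulting recurrence.

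\textbf{Oracle calls.} In iteration $i$ the sampled set has size at most $\lceil |N_i|\log(1/\gamma)/g(i)\rceil$. Whenever $g(i)=\rk$ we have $|N_i|\le n$, so this is $O(\frac{n}{\rk}\log\gamma^{-1})$. Whenever $g(i)=n-i+1=|N_i|$ the size is $O(\log\gamma^{-1})$. Summing over the $\rk$ iterations gives $O(n\log\gamma^{-1})$ oracle calls.

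\textbf{Per-step gain.} We reuse \Cref{lemma:gain_lowerbound}, which holds for any $g$ and $\phi'$, instantiated with $\phi'=\phi$ and $|M_i|=g(i)$. Conditioned on $S_{i-1}$ it gives
\[
\Exp[\phi(u_i\mid S_{i-1})]\ge \tfrac{1-\gamma}{g(i)}\textstyle\sum_{v\in M_i}\phi(v\mid S_{i-1})-\alpha,\qquad \alpha = 4\Delta\eps_0^{-1}\log n .
\]
Since $M_i$ consists of the top-$g(i)$ marginals and $|C_i|\le \rk$, padding $C_i$ with elements of $N_i$ shows the average over $M_i$ is at least $\frac{1}{\rk}\sum_{v\in C_i}\phi(v\mid S_{i-1})$. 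This step uses nonnegativity of the marginals and $g(i)\le\rk$; when $g(i)<\rk$ we have $M_i=N_i\supseteq C_i$.

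Submodularity and monotonicity of $f$ give $\sum_{v\in C_i} f(v\mid S_{i-1})\ge f(\OPT\cup S_{i-1})-f(S_{i-1})$. For the diversity part I would prove the new \Cref{lemma:dist_marginal_gain} in the following form. Write $d(\OPT\cup S)-d(S)=d(C,S)+d(C)$ with $C=\OPT\setminus S$. We have $\sum_{v\in C} d(v\mid S)=d(C,S)$, so it remains to control $d(C)$.

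By the triangle inequality, for $u,v\in C$ and any $s\in S$ we have $d(u,v)\le d(u,s)+d(v,s)$. Averaging over $s\in S$ yields $d(C)\le \frac{|C|-1}{|S|}d(C,S)$. Hence
\[
\sum_{v\in C} d(v\mid S)\ \ge\ \tfrac{|S|}{|S|+|C|-1}\big(d(\OPT\cup S)-d(S)\big)\ \ge\ \tfrac{i-1}{\rk+i-2}\big(d(\OPT\cup S)-d(S)\big).
\]
The constant will be lossy. What I actually expect to obtain is a recurrence of the form
\[
\Exp[\phi(S_i)]-\Exp[\phi(S_{i-1})]\ \ge\ \tfrac{1-\gamma}{\rk}\big(\phi(\OPT)-2\,\Exp[\phi(S_{i-1})]\big)-\alpha .
\]
The factor $2$ arises because $d(\OPT\cup S)$ can be bounded below by $d(\OPT)$, while the extra cross terms must be charged against $\phi(S_{i-1})$. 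Concretely, we use $d(C)\le d(\OPT)$ together with $d(C,S)\ge d(\OPT\cup S)-d(S)-d(\OPT)$, and we control the overlap between $\OPT$ and $S$ via the triangle inequality.

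\textbf{Solving the recurrence.} Unrolling gives $\Exp[\phi(S_\rk)]\ge \frac{1}{2}\big(1-(1-\frac{2(1-\gamma)}{\rk})^{\rk}\big)\phi(\OPT)-\rk\alpha$. This has the flavor of the claimed ratio $1-(2/e)^{(1-\gamma)(1-1/\rk)}$, but the form does not match. Producing the exact $(2/e)^{\cdot}$ form will require a multiplicative recurrence in which the "gap" $\phi(\OPT)-\Exp[\phi(S_i)]$ shrinks by a factor involving $(1-\frac{1-\gamma}{\rk})$ combined with the $\frac{i-1}{\rk+i-2}$-type coefficients from the distance lemma. The product of these per-step factors should telescope to something like $\prod_i(1-\frac{(1-\gamma)}{\rk}\cdot c_i)$. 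Bounding this product by $(2/e)^{(1-\gamma)(1-1/\rk)}$ is the role of the technical lemma (\Cref{lemma:technical_prod}).

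\textbf{Main obstacle.} I expect the hardest step to be pinning down the exact distance inequality (\Cref{lemma:dist_marginal_gain}) so that the resulting product telescopes to the stated closed form. I would first try the triangle-inequality averaging above with exact coefficients, and then guess the product form by matching the target exponent $(1-\gamma)(1-1/\rk)$. The $1/\rk$ loss in that exponent suggests that the first iteration, where $S_0=\emptyset$ and there is no distance gain, is simply lost.
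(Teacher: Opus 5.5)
Your oracle-call count is fine, and so is the reduction, via \Cref{lemma:gain_lowerbound} with $\alpha=0$ under $\mathrm{MAX}$, to $\frac{1-\gamma}{\rk}\sum_{v\in C_i}\phi(v\mid S_{i-1})$. The gap is the recurrence you commit to, $\Exp[\phi(S_i)]-\Exp[\phi(S_{i-1})]\ge\frac{1-\gamma}{\rk}(\phi(\OPT)-2\Exp[\phi(S_{i-1})])$. This is the base-swap inequality from the local-search analysis (\Cref{lemma:sample_progress}), which needs $|S_{i-1}|=\rk$. It is false for a growing greedy solution. Take $f\equiv 0$, $\lambda=1$ and $i=1$: the set $S_1$ is a singleton, so the left side is $0$, while the right side is $\frac{1-\gamma}{\rk}d(\OPT)>0$. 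More generally, the inequality $d(C_i,S_{i-1})\ge d(\OPT)-2d(S_{i-1})$ that your route needs fails whenever $S_{i-1}$ is small. For example, let $\OPT$ be $\rk$ points at pairwise distance $1$ and $S_{i-1}=\{s\}$ with $d(s,c)=1/2$ for all $c\in\OPT$; this gives $\rk/2$ versus $\rk(\rk-1)/2$. Your proposed repair via $d(C)\le d(\OPT)$ and ``charging cross terms to $\phi(S_{i-1})$'' is not a derivation. Your final paragraph only guesses at a product $\prod_i(1-\frac{1-\gamma}{\rk}c_i)$ without producing it, so as written no bound is established.

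The missing step is that the inequality you already derived is not lossy. Since $\frac{i-1}{\rk+i-2}=\bigl(1+\frac{\rk-1}{i-1}\bigr)^{-1}$, it is exactly \Cref{lemma:dist_marginal_gain}. This coefficient is at most $1$, and $\sum_{v\in C_i}f(v\mid S_{i-1})\ge f(\OPT\cup S_{i-1})-f(S_{i-1})\ge 0$, so you may weaken the $f$-part by the same coefficient. Monotonicity of $\phi$ then gives $\sum_{v\in C_i}\phi(v\mid S_{i-1})\ge\frac{i-1}{\rk+i-2}\bigl(\phi(\OPT)-\phi(S_{i-1})\bigr)$ for $i\ge 2$. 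The coefficient depends only on $i$, not on the realization of $S_{i-1}$, so taking expectations yields the multiplicative gap recurrence
\[
\phi(\OPT)-\Exp[\phi(S_i)]\le\Bigl(1-\tfrac{(1-\gamma)(i-1)}{\rk(\rk+i-2)}\Bigr)\bigl(\phi(\OPT)-\Exp[\phi(S_{i-1})]\bigr),\qquad 2\le i\le\rk,
\]
together with $\phi(\OPT)-\Exp[\phi(S_1)]\le\phi(\OPT)$. Unroll this over $i=2,\dots,\rk$ and use $\log(1-x)\le -x$. Then use $\frac{1}{\rk}\sum_{j=1}^{\rk-1}\frac{j}{j+\rk-1}=\frac{\rk-1}{\rk}\bigl(1-\sum_{m=\rk}^{2\rk-2}\frac{1}{m}\bigr)\ge(1-\frac{1}{\rk})(1-\log 2)$, via the integral bound $\sum_{m=\rk}^{2\rk-2}\frac1m\le\log 2$. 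Since $e^{-(1-\log 2)}=2/e$, this gives the factor $(2/e)^{(1-\gamma)(1-1/\rk)}$, as in \Cref{lemma:technical_prod}. This multiplicative recurrence on the gap, rather than a factor-$2$ additive one, is the paper's argument. Your intuition about the $1/\rk$ loss is right: the product has only $\rk-1$ factors because iteration $1$ carries no distance guarantee.
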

The proof of \Cref{thm:sample_greedy_guarantee_oblivious} relies on the following structural progress lemma, whose proof is deferred to~\Cref{appendix:sample_greedy}.
\begin{restatable}{lemma}{DistMarginalGain}\label{lemma:dist_marginal_gain}
    For every  iteration $i\ge 2$,  
    \[
         d(\OPT\cup S_{i-1}) - d(S_{i-1}) \le \Big(1+\frac{\rk -1}{i-1}\Big)\cdot d(C_{i},S_{i-1}).
    \]
\end{restatable}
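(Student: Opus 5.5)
Write $S=S_{i-1}$ and $C=C_i=\OPT\setminus S$. The plan is to split the left-hand side into two disjoint pieces and then bound the internal diversity of $C$ by its distance to $S$, using the triangle inequality averaged over $S$.

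\textbf{Step 1 (decomposition).} Since $C$ and $S$ are disjoint and $\OPT\cup S = C\cup S$, the pairs inside $C\cup S$ split into pairs inside $S$, pairs inside $C$, and cross pairs. Hence
\[
d(\OPT\cup S)-d(S)=d(C)+d(C,S).
\]
So it suffices to show $d(C)\le \frac{\rk-1}{i-1}\,d(C,S)$. Here we use $i\ge 2$, so that $|S|=i-1\ge 1$. If $|C|\le 1$, then $d(C)=0$ and the claim is immediate, so we may assume $|C|\ge 2$.

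\textbf{Step 2 (averaged triangle inequality).} Fix distinct $u,v\in C$. For every $w\in S$ the triangle inequality for the pseudometric gives $d(u,v)\le d(u,w)+d(w,v)$. Summing over the $i-1$ elements $w\in S$ yields
\[
(i-1)\,d(u,v)\le d(u\mid S)+d(v\mid S).
\]

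\textbf{Step 3 (sum over pairs).} Sum this inequality over all unordered pairs $\{u,v\}\subseteq C$. Each $u\in C$ lies in exactly $|C|-1$ such pairs, so
\[
(i-1)\,d(C)\le (|C|-1)\sum_{u\in C} d(u\mid S)=(|C|-1)\,d(C,S).
\]
Since $C\subseteq \OPT$ and $|\OPT|=\rk$, we have $|C|-1\le \rk-1$. Dividing by $i-1$ gives $d(C)\le \frac{\rk-1}{i-1}d(C,S)$, and adding $d(C,S)$ to both sides completes the bound.

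The only delicate point is the pair-counting in Step 3, namely that each element of $C$ is counted exactly $|C|-1$ times. The rest is routine, and I do not expect a genuine obstacle.
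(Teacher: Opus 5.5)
Your proposal is correct and follows essentially the same route as the paper: the decomposition $d(\OPT\cup S_{i-1})-d(S_{i-1})=d(C_i)+d(C_i,S_{i-1})$, combined with the bound $|S_{i-1}|\,d(C_i)\le(|C_i|-1)\,d(C_i,S_{i-1})$ and then $|S_{i-1}|=i-1$, $|C_i|\le\rk$. The paper obtains that bound by citing \Cref{lemma:ravi_distance}, whose proof is exactly the averaged triangle-inequality double count you carry out inline in Steps 2 and 3.
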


\ifpaper
\begin{proof}[Proof Sketch of \Cref{thm:sample_greedy_guarantee_oblivious}]
\ifpaper
\else
The full complexity analysis is provided in \Cref{appendix:sample_greedy}. 
\fi
Intuitively, since \DPOSG evaluates $O(\frac{n}{\rk} \log \gamma^{-1})$ candidates in each iteration, the number of oracle calls across $\rk$ iterations is $O(n \log \gamma^{-1})$.  We focus here on the utility guarantee.

Let  $\alpha = 4\eps_0^{-1}\Delta \log n$. Consider any  iteration $2\le i \le \rk$ and condition on having selected set $S_{i-1}$ after iteration $i-1$. By submodularity and monotonicity,
$
        \sum_{u\in C_i} f(u\mid S_{i-1}) \ge f(\OPT\cup S_{i-1}) - f(S_{i-1}).
$        
By \Cref{lemma:dist_marginal_gain},
\[
     \sum_{v\in C_i} d(v\mid S_{i-1})  = d(C_i,S_{i-1}) \ge  \tfrac{d(\OPT\cup S_{i-1}) -d(S_{i-1})}{1+\frac{\rk -1}{i-1}}.
\]
Let \mbox{$M_i \subseteq N_i$} be a set of size $g(i)=\min\set{\rk ,\abs{N_i}}$ that maximizes 
\mbox{$
\sum_{v \in M_i} \phi(v \mid S_{i-1}).
$}
By definition of $M_i$, we have
\begin{align*}
     &\Exp[ \phi(u_i \mid S_{i-1})] \ge \tfrac{1-\gamma}{g(i)}\sum_{v\in M_i} \phi(v\mid S_{i-1}) - \alpha\\
     &\ge  \tfrac{1-\gamma}{\rk}\sum_{v\in C_i} \phi(v\mid S_{i-1}) - \alpha    
      \ge (1-\gamma)\Bigg(\tfrac{\phi(\OPT)-\phi(S_{i-1})}{\rk \Big(1+\tfrac{\rk -1}{i-1}\Big)}\Bigg) - \alpha 
\end{align*}
The  first inequality follows by an argument analogous to that in the proof of \Cref{lemma:gain_lowerbound}.
The second inequality follows from the definition of $M_i$, together with the fact that $|C_i| \leq g(i) = |M_i|\leq \rk$. Removing the conditioning on $S_{i-1}$ and taking an expectation  over all its possible realizations yields
\begin{align}
     \Exp[ \phi(u_i \mid S_{i-1})] \ge (1-\gamma)\Bigg(\tfrac{\phi(\OPT)-\phi(S_{i-1})}{\rk \Big(1+\frac{\rk -1}{i-1}\Big)}\Bigg) - \alpha  
\end{align}
Rearranging gives
    \[\phi(\OPT) - \Exp[\phi(S_i)] \le \Big(1-\tfrac{1-\gamma}{\rk (1+\frac{\rk -1}{i-1})}\Big)  \paren{ \phi(\OPT) - \phi(S_{i-1}) } + \alpha.\]

By recursively applying the last inequality after removing the conditioning on $S_{i-i}$, taking the expectation over all its realizations, we get
\begin{align*}
    \phi(\OPT) - \Exp[\phi(S_\rk)] 
    & \le  \prod_{i=2}^\rk \Big(1-\tfrac{1-\gamma}{\rk (1+\frac{\rk -1}{i-1})}\Big)  \phi(\OPT) +\rk \alpha \\
    &\le \paren{\tfrac{2}{e}}^{(1-\gamma)(1-1/\rk)} \phi(\OPT) +\rk \alpha
\end{align*}
where the second inequality is by Lemma C.4 in~\cite{full_version}. The theorem now follows by rearranging.
\end{proof}
\else

\begin{proof}[Proof of \Cref{thm:sample_greedy_guarantee_oblivious}]
The proof of the query complexity is provided in \Cref{lemma:sample_greedy_complexity}. We thus focus on the utility guarantee. Let $\alpha = 4\Delta \log (n)/\eps_0$. Consider any iteration $2 \le i \le \rk$ and condition on the set $S_{i-1}$ selected after iteration $i-1$. By the submodularity and monotonicity of $f$, we have:
\[
    \sum_{u\in C_i} f(u \mid S_{i-1}) \ge f(\OPT \cup S_{i-1}) - f(S_{i-1}) \ge f(\OPT) - f(S_{i-1}).
\]
Applying \Cref{lemma:dist_marginal_gain}, we find that:
\[
     \sum_{v\in C_i} d(v \mid S_{i-1}) = d(C_i, S_{i-1}) \ge \frac{d(\OPT \cup S_{i-1}) - d(S_{i-1})}{1 + \frac{\rk - 1}{i - 1}}.
\]
Let $M_i \subseteq N_i$ be a set of size $g(i) = \min\{\rk, |N_i|\}$ that maximizes $\sum_{v \in M_i} \phi(v \mid S_{i-1})$. Note that $M_i$ comprises the $\rk$ elements with the largest marginal contributions in iteration $i$, or is equal to $N_i$ if $|N_i| \le \rk$. We have:
\begin{align*}
     &\Exp[ \phi(u_i \mid S_{i-1})] \ge \tfrac{1-\gamma}{g(i)}\sum_{v\in M_i} \phi(v\mid S_{i-1}) - \alpha \ge \tfrac{1-\gamma}{\rk}\sum_{v\in C_i} \phi(v\mid S_{i-1}) - \alpha \\
     & \ge (1-\gamma) \Bigg( \frac{\phi(\OPT) - \phi(S_{i-1})}{\rk (1 + \frac{\rk - 1}{i - 1})} \Bigg) - \alpha.
\end{align*}
The first inequality follows from the guarantees of the exponential mechanism and the sampling procedure, analogous to the proof of \Cref{lemma:gain_lowerbound}. For the second inequality, recall that $C_i \subseteq N_i$, and since $C_i \subseteq \OPT$, it holds that $|C_i| \le \rk$. Thus, the inequality follows from the definition of $M_i$, together with the fact that $|C_i| \le g(i) \le \rk$.

Removing the conditioning on $S_{i-1}$ and taking the expectation over all possible realizations yields:
\[
     \Exp[ \phi(u_i)] \ge (1-\gamma) \Bigg( \frac{\phi(\OPT) - \Exp[\phi(S_{i-1})]}{\rk (1 + \frac{\rk - 1}{i - 1})} \Bigg) - \alpha.
\]
Rearranging terms, we obtain:
\[
    \phi(\OPT) - \Exp[\phi(S_i)] \le \left(1 - \frac{1-\gamma}{\rk (1 + \frac{\rk - 1}{i - 1})}\right) \left( \phi(\OPT) - \Exp[\phi(S_{i-1})] \right) + \alpha.
\]
By recursively applying this inequality, we get:
\begin{align*}
    \phi(\OPT) - \Exp[\phi(S_{\rk})] &\le \prod_{i=2}^{\rk} \left(1 - \frac{1-\gamma}{\rk (1 + \frac{\rk - 1}{i - 1})}\right) \phi(\OPT) + \rk \alpha \\
    &\le \left(\frac{2}{e}\right)^{(1-\gamma)(1 - 1/\rk)} \phi(\OPT) + \rk \alpha,
\end{align*}
where the last inequality follows from \Cref{lemma:technical_prod}. The theorem follows by rearranging the final expression.
\end{proof}

\fi

\section{Local Search Algorithm for Matroid Constraints}
\label{sec:matroid_local_search}
In this section, we present our algorithm for matroid constraints. \citet{borodin2017max} proposed a non-private local search achieving a $(1/2-\gamma)$-approximation in $O(n^2+\gamma^{-1}n\rk^2\log \rk)$ oracle calls\footnote{The algorithm in \cite{borodin2017max} attains a $1/2$-approximation but is not polynomial-time. The authors' proposal to perform only swaps yielding at least a $\gamma$-improvement at each iteration, rather than any improvement, leads to the stated guarantees.}  starting from an initial base containing the highest-scoring feasible pair of elements. However, privatizing this approach is non-trivial: additive DP noise prevents both verifying local optimality and guaranteed selection of strictly improving swaps.

Our algorithm, \DPSLS\ (\Cref{alg:sample_local_search}), executes a fixed number of local search iterations, using the exponential mechanism to privately select an approximately best available swap from a subsampled candidate set. To avoid forced suboptimal swaps when no improving swap exists, we introduce a dummy swap that allows the current solution to remain unchanged. Finally, we apply the exponential mechanism to privately select and output the highest-scoring solution observed throughout the execution.

Our analysis exploits the structure of the underlying metric space to establish sufficient expected progress and achieve a near-optimal approximation ratio, while avoiding higher-order error terms introduced by continuous relaxations~\cite{rafiey2020fast}. We show that whenever the expected objective value in a given iteration is low, the subsequent iteration increases it significantly. In particular, the first iteration already guarantees a sufficiently large expected objective value, enabling a multiplicative progress argument that yields a $(1/2 - \gamma)$-approximation within $O(\gamma^{-1}\rk\log \rk)$ iterations.

By basing our approach on local search, we depart from predominantly greedy methods used in DP submodular maximization. This approach yields a combinatorial proof with a tighter additive error term, while simultaneously eliminating the $O(n^2)$ factor present in the complexity of~\cite{borodin2017max}. We further reduce complexity by subsampling candidate swaps in each iteration, ultimately requiring $O(\gamma^{-1}n\rk\log \rk)$ oracle calls. This provides a substantial improvement over prior work while maintaining the same near-optimal expected approximation factor.

\begin{theorem}\label{thm:sample_local_search_guarantee}
   Suppose $\phi_D:2^V\to\R$ has sensitivity $\Delta$, and let $\cM$ be a matroid of rank $\rk$. 
For $T' = O(\gamma^{-1} \rk\log \rk)$, \DPSLS (\Cref{alg:sample_local_search}) with parameters $\eps_0 > 0$ and $\gamma \in (0,1)$ 
is $T'\eps_0$-DP, and $(\eps,\delta)$-DP for every $\delta > 0$, 
with $\eps=\sqrt{2T'\log(1/\delta)}\eps_0+T'\eps_0(e^{\eps_0}-1)$.
    Moreover, it outputs a feasible set $S$ such that
    \[
\mathbb{E}[\phi_D(S)] \ge  \paren{\tfrac{1}{2} - \gamma}\cdot \phi_D(\OPT)
- O\left( \tfrac{\Delta }{\eps_0}  \left( \rk \log  n  +\log\tfrac{1}{\gamma}) \right) \right), \]
    and makes $O(\gamma^{-1}n \rk\log \rk )$ oracle  calls.
\end{theorem}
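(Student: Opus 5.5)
We plan to prove \Cref{thm:sample_local_search_guarantee} in three parts: privacy, query complexity, and utility. The algorithm itself is not shown, so we assume it works as follows. It starts from a base $S_0$. It runs $T = O(\gamma^{-1}\rk\log\rk)$ iterations. In each iteration it samples a random subset of candidate swaps $(u,v)$ with $u\in S_{t-1}$, $v\notin S_{t-1}$ and $S_{t-1}-u+v\in\cI$, together with a dummy swap. It then selects one swap via $\ExpMech$ with quality $\phi(S_{t-1}-u+v)-\phi(S_{t-1})$, which has sensitivity $2\Delta$; the dummy swap has quality $0$. Finally, $\ExpMech$ is applied once more, over the $T+1$ observed solutions, with quality $\phi$.

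\textbf{Privacy.} The algorithm is an adaptive composition of $T'=T+1$ instances of $\ExpMech$, each $\eps_0$-DP. The subsampling is data-independent, so it can be absorbed by conditioning on the sample. Basic and advanced composition (\Cref{thm:composition}) then give exactly the stated $T'\eps_0$ and $(\eps,\delta)$ bounds.

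\textbf{Complexity.} We choose the sample size per iteration so that each iteration costs $O(n)$ oracle calls. Sampling is done per element of $S_{t-1}$: for each $u$ we take roughly $n/\rk$ random candidates $v$, which gives $O(n)$ swaps in total. Over $T$ iterations this is $O(\gamma^{-1}n\rk\log\rk)$ oracle calls.

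\textbf{Utility.} The core is a one-step expected progress inequality. Fix $S=S_{t-1}$. Since $S$ and $\OPT$ are both bases, the matroid bijective exchange property gives a bijection $\pi:\OPT\setminus S\to S\setminus\OPT$ such that $S-\pi(o)+o\in\cI$ for every $o\in\OPT\setminus S$. Following \citet{borodin2017max}, we sum the swap gains over this bijection. Submodularity of $f$ bounds the $f$-part of the sum below by $f(\OPT)-2f(S)$. The triangle inequality bounds the $d$-part, as in Borodin et al.'s local-optimality argument, below by something like $d(\OPT)-2d(S)$ up to lower-order terms. Together,
\[
\sum_{o}\big(\phi(S-\pi(o)+o)-\phi(S)\big)\ \ge\ \phi(\OPT)-2\phi(S),
\]
so the best swap gains at least $\frac{1}{\rk}(\phi(\OPT)-2\phi(S))$.

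Next we account for sampling. The sampling must hit a good swap with constant probability, or its loss must be absorbed by averaging as in \Cref{lemma:gain_lowerbound}. We sample so that each pair $(\pi(o),o)$ is included with probability $p$ that is a constant (or $1-\gamma$). Then the expected best sampled gain is at least $\frac{p}{\rk}\,\big(\phi(\OPT)-2\phi(S)\big)^+$. The dummy swap ensures the chosen gain is at least $-\alpha$ in expectation even when no improving swap exists, where $\alpha=O(\Delta\log n/\eps_0)$ from \Cref{thm:exponential_mech}. Writing $\Phi_t=\Exp[\phi(S_t)]$, and using that $x\mapsto x^+$ is convex (Jensen), we obtain
\[
\tfrac12\phi(\OPT)-\Phi_t\ \le\ \big(1-\tfrac{2p}{\rk}\big)\big(\tfrac12\phi(\OPT)-\Phi_{t-1}\big)+\alpha.
\]
We need the initial gap to be at most a constant times $\phi(\OPT)$. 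A crude bound via $\phi(S_0)\ge0$ suffices, which costs the $\log\rk$ factor. Unrolling for $T=O(\gamma^{-1}\rk\log\rk)$ iterations shrinks the multiplicative term to $\gamma\,\phi(\OPT)$. Summing the additive errors geometrically gives $\alpha\cdot\rk/(2p)=O(\rk\Delta\log n/\eps_0)$, so the error does not grow with $T$.

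\textbf{Final selection.} $\Phi_T$ bounds the value at the end, but the algorithm outputs the result of a final $\ExpMech$. That mechanism competes with $\max_t\phi(S_t)\ge\phi(S_T)$ over $T+1$ candidates, losing $O(\Delta\log T/\eps_0)=O(\Delta(\log\rk+\log\gamma^{-1})/\eps_0)$. The $\log\rk$ part is absorbed into $\rk\log n$, leaving the stated additive term.

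\textbf{Main obstacle.} The hardest step is the distance part of the swap inequality. Borodin et al.'s bound on $\sum_o[d(S-\pi(o)+o)-d(S)]$ uses the triangle inequality together with a careful count over $S\cap\OPT$. I would reproduce their argument and verify that it holds as an inequality on the summed gains, without requiring local optimality. The other delicate point is making the sampling probability uniform across the bijection pairs. Sampling per removed element $u$ should handle that.
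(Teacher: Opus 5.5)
Your privacy and complexity accounting match the paper. The gap is in how you obtain the per-iteration progress bound under subsampling.

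\textbf{The gap: sampling in the progress step.} You assert that each matched pair $(\pi(o),o)$ is sampled with a constant probability $p$. You then use the single best pair, obtaining $p\max_o g_o^+\ge\frac{p}{\rk}(\phi(\OPT)-2\phi(S))^+$, where $g_o=\phi(S-\pi(o)+o)-\phi(S)$.

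This is incompatible with $O(n)$ oracle calls per iteration. Under your own scheme ($\approx n/\rk$ candidates per $u$), a fixed pair is present with probability only about $1/\rk$. The same holds under \Cref{alg:sample_local_search}, which uses a single uniform $V_i$ of size $\lceil n/\rk\rceil$ paired with all of $S_{i-1}$. More generally, no $\OPT$-oblivious scheme that evaluates $O(n)$ of the $\approx n\rk$ possible swaps can do better on average.

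With $p\approx 1/\rk$, the single-pair bound gives only $\frac{1}{\rk}\max_o g_o\ge\frac{1}{\rk t}\sum_o g_o$, with $t=|\OPT\setminus S|$. The contraction rate can then be as small as $\Theta(1/\rk^2)$. With the algorithm's $T=\Theta(\gamma^{-1}\rk\log\rk)$, this does not close the gap to $\tfrac12\phi(\OPT)$. It also lets the geometric-sum error degrade to $O(\rk^2\alpha)$.

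\textbf{The missing idea} is the averaging in \Cref{lemma:sample_progress}:
\begin{itemize}
\item Extend the bijection to all of $S_{i-1}$ by the identity on $S_{i-1}\cap\OPT$. These pairs have gain $0$, which the dummy swap matches. Now $M_i$ has exactly $\rk$ pairs whose targets form $\OPT$.
\item $V_i$ meets $\OPT$ with probability at least $1-(1-\rk/n)^{\lceil n/\rk\rceil}\ge 1-1/e$ (\Cref{lemma:hitting}).
\item By symmetry of uniform sampling, conditioned on this event, a uniformly random sampled pair of $M_i$ is uniform over $M_i$.
\end{itemize}
Hence the expected best gain is at least $\frac{1-1/e}{\rk}\sum_{(u,v)\in M_i}[\phi(S_{i-1}-u+v)-\phi(S_{i-1})]-\alpha$. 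The constant comes from hitting the matching as a whole, not from any single pair. You allude to ``averaging as in \Cref{lemma:gain_lowerbound}'', but the step you actually write is the single-pair one.

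\textbf{Rank $2$.} The distance bound of \citet{borodin2017max} is exactly $d(\OPT)-2d(S)$, with no lower-order terms. However, it holds only for rank $\rk\ge 3$ and $t\ge 2$. The case $t\le 1$ is immediate, but rank $2$ needs a separate argument (\Cref{appendix:rank2}), which you do not address.

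\textbf{Your endgame is a valid, different route.} Given the progress inequality, it is cleaner than the paper's. The paper:
\begin{itemize}
\item discards the case $\phi(\OPT)<2\rk\alpha$;
\item shows $\Exp[\phi(S_1)]\ge\phi(\OPT)/(8\rk)$;
\item proves multiplicative growth by $1+\gamma(1-1/e)/\rk$ while $\Exp[\phi(S_{i-1})]<\phi(\OPT)/(2+\gamma)-\rk\alpha$;
\item concludes by contradiction that some iterate is good, and needs the final $\ExpMech$ to locate it.
\end{itemize}
Your linear recurrence on $\tfrac12\phi(\OPT)-\Phi_t$ needs no case split and no initial lower bound. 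It certifies the last iterate directly, so the final selection only costs $O(\Delta\log T/\eps_0)$, and it bounds the accumulated error by a geometric sum. It in fact shows that $T=O(\rk\log(1/\gamma))$ iterations suffice. So your remark that the crude bound $\phi(S_0)\ge 0$ ``costs the $\log\rk$ factor'' is mistaken: it costs nothing, and the algorithm's $T$ is more than enough.
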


\Cref{thm:sample_local_search_guarantee} corresponds to row \textit{(iv)} in \Cref{tab:results}. The stated additive error in \Cref{tab:results} is obtained using advanced composition (\Cref{thm:composition}), which intuitively implies that setting $\varepsilon_0 = O(\varepsilon / \sqrt{\gamma^{-1}\rk \log \rk\log\delta^{-1}})$ ensures $(\varepsilon, \delta)$-DP. Note that we now have $O(\gamma ^{-1}\rk \log \rk )$ compositions.
By replacing the exponential mechanism with the true maximum, we get a faster, non-private algorithm for the \MSD  problem with matroid constraints, for which the expected approximation guarantee is the same as that of the deterministic local search algorithm of \cite{borodin2017max}.
\begin{corollary}
   Instantiated with $\mathrm{MAX}$ instead of $\ExpMech$, \DPSLS  outputs a feasible set $S$ such that
    $
        \Exp[\phi(S)] \ge \paren{\tfrac{1}{2} - \gamma}  \cdot \phi(\OPT),
    $
    and makes $O(\gamma^{-1}n\rk \log \rk  )$   oracle  calls.
\end{corollary}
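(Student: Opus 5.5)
The plan is to obtain this corollary directly from \Cref{thm:sample_local_search_guarantee} rather than re-running the local search analysis. The key observation is that the proof of the utility guarantee for \DPSLS uses the exponential mechanism only as a black box, through the expected-utility bound of \Cref{thm:exponential_mech}:
\[
\Exp{\ExpMech(D,q,\cR,\eps_0)} \ge \max_{r\in\cR} q_D(r) - \tfrac{2\Delta_q\log\abs{\cR}}{\eps_0}.
\]
Replacing $\ExpMech$ by $\mathrm{MAX}$ (with any fixed tie-breaking rule) gives the same inequality with additive term $0$. It is therefore the $\eps_0\to\infty$ specialization of that bound.

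The steps, in order, are as follows.

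First, I would go through each place the analysis of \Cref{alg:sample_local_search} invokes the exponential mechanism. There are two kinds: the per-iteration selection of a swap, or the dummy swap, from the subsampled candidate set; and the final selection of the best solution observed during the execution. At each such place, I would substitute the exact maximizer. The per-iteration progress inequalities then hold with the error terms of order $\Delta\log n/\eps_0$ removed. The final selection now returns exactly the best observed solution, so it loses nothing relative to any iterate.

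Second, I would check that the remaining ingredients of the proof are untouched. These are the randomness of the subsampling of candidate swaps, the multiplicative progress argument over $T'=O(\gamma^{-1}\rk\log\rk)$ iterations, and the lower bound obtained from the first iteration. None of them depends on privacy, so the same chain of inequalities yields
\[
\Exp[\phi(S)] \ge \paren{\tfrac12-\gamma}\phi(\OPT) - O\!\left(\tfrac{\Delta}{\eps_0}\big(\rk\log n+\log\tfrac1\gamma\big)\right)
\]
with the last term identically zero. Equivalently, every bound in the proof holds for every $\eps_0$, and $\mathrm{MAX}$ corresponds to the limit $\eps_0\to\infty$, in which the additive error vanishes.

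Third, I would verify the complexity. $\mathrm{MAX}$ evaluates the quality function on exactly the same candidate sets as $\ExpMech$. Hence the number of value and independence oracle calls is identical to that in \Cref{thm:sample_local_search_guarantee}, namely $O(\gamma^{-1}n\rk\log\rk)$. The final selection among at most $T'+1$ observed solutions adds only $O(\gamma^{-1}\rk\log\rk)$ further calls, which is dominated.

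The only point requiring care, and the step I expect to be the main obstacle, is confirming that the proof of \Cref{thm:sample_local_search_guarantee} never uses a distributional property of $\ExpMech$ beyond the expected-utility bound. An example would be full support on the candidates, used to argue that the dummy swap or an improving swap is chosen with positive probability. I would check this by verifying that each use is of the form ``the expected gain of the selected candidate is at least the maximum gain minus an error''. Under that form, the deterministic maximizer dominates pointwise, so the substitution is valid.
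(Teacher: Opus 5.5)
Your proposal is correct and follows the same route as the paper. The paper obtains this corollary by rerunning the proof of \Cref{thm:sample_local_search_guarantee} with the exact maximizer in place of the exponential mechanism. This makes the per-iteration error $\alpha$ and the final-selection error $2\Delta\log T/\eps_0$ vanish, and the preliminary assumption $\tfrac{1-1/e}{\rk}\phi(\OPT)-\alpha\ge\tfrac{\phi(\OPT)}{8\rk}$ then holds automatically, while the oracle count is unchanged. Your check that only the expected-utility bound of \Cref{thm:exponential_mech} is used, never any full-support property, is exactly what makes the substitution valid.
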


\begin{remark}\label{ls_remark} In this section, we consider the broader class of $\Delta$-sensitive functions. 
The technique of \citet{gupta2010differentially} for decomposable functions relies crucially on the monotonic growth of the greedy solution sequence. 
Extending this approach to local-search algorithms, where elements are both added and removed, appears nontrivial and remains an interesting direction for future work.
\end{remark}

\begin{algorithm}
\caption{DP Sample Local Search (\DPSLS)}
\label{alg:sample_local_search}
\begin{algorithmic}[1]
\Require Dataset $D$,
ground set $V$,
submodular $f_D$,
pseudometric $d_D$, matroid $\cM$ of rank $\rk$, privacy parameter $\eps_0$, utility parameter $\gamma\in(0,1]$,
diversity parameter $\lambda$

\State Let $\phi_D(\cdot) = f_D(\cdot)+\lambda d_D(\cdot)$
\State Let $S_0$ be an arbitrary base of $\cM$.
\State Set $T \gets \lceil \tfrac{2\rk\log (8\rk)}{\gamma(1-1/e)}\rceil + 1$ 
\For{$i=1,\dots,T$}
        \State Sample a subset $V_i\subseteq V$ of size $\lceil\tfrac{n}{\rk}\rceil$ uniformly at random \plabel{localsearch_sampling}
        \State Let $W_i \gets\set{(u,v)\in S_{i-1}\times (V_i\setminus S_{i-1}) \mid S_{i-1} -u + v\in \cI}\cup\set{(w,w)}$ for an arbitrary element $w\in S_{i-1}$.
        \State For all $(u,v)\in W_i$,  define
        $q_D^i(u,v)= \phi_D(S_{i-1} -u + v)$.
\State Compute $(u_i, v_i)  \gets  \ExpMech(D, q_D^i, W_i,\eps_0)$
\State Let $S_i \gets S_{i-1} -u_i + v_i$
\EndFor
\State For all $i=1,\dots,T$, define $s_D(i) = \phi_D(S_i)$.
\State Compute $i^* \gets \ExpMech(D, s_D, [T],\eps_0)$ \plabel{localsearch_EM}
\State \Return $S_{i^*}$
\end{algorithmic}
\end{algorithm}

For the remainder of this section we prove  \Cref{thm:sample_local_search_guarantee}. For simplicity, we assume throughout that the $\cM$ has rank $\rk \ge 3$, and address the case $\rk = 2$ 
\ifpaper
in~\cite{full_version}.
\else
in~\Cref{appendix:rank2}.
\fi

Given a set $S$ and elements $u,v$, we use $S+u$, $S-u$ and $S-u+v$  as shorthands for $S\cup \set{u}$, $S\setminus\set{u}$ and $(S\setminus\set{u})\cup\set{v}$ respectively.  Let $\OPT$ be an optimal solution.  Since $\phi$ is monotone, we may assume, without loss of generality, that $\OPT$ is a base of $\cM$. For $i\in[T]$, let $S_{i-1}$ denote the base obtained after iteration $i-1$. 

\begin{lemma}\label{lemma:sum_swaps_lowerbound}
    Suppose $\cM$ has rank $\rk>2$. There exists a bijection
$h:S_{i-1}\to \OPT$ such that  $h(u)=u$ for every $u\in S_{i-1}\cap \OPT$, and $\OPT-h(u)+u\in\cI$ for every $u\in S_{i-1}$. Moreover,
    \[
        \sum_{u\in S_{i-1}}\phi(S_{i-1}-u+h(u)) \ge \phi(\OPT)+(\rk-2)\phi(S_{i-1}).
    \]
\end{lemma}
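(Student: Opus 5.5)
The plan is to build $h$ from a basis-exchange bijection and then prove the inequality separately for the relevance part $f$ and the diversity part $\lambda d$. Both parts need the same shape of bound, $\sum_{u}\psi(S-u+h(u)) \ge \psi(\OPT)+(\rk-2)\psi(S)$, and summing them gives the claim. Throughout, write $S=S_{i-1}$. Both $S$ and $\OPT$ are bases of size $\rk$.

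\textbf{Step 1: constructing $h$.} I would invoke Brualdi's symmetric basis exchange lemma. For any two bases $A,B$ of a matroid there is a bijection $\sigma:A\setminus B\to B\setminus A$ such that $B-\sigma(a)+a$ is a base for every $a\in A\setminus B$. Apply it with $A=S$ and $B=\OPT$, and extend $\sigma$ by the identity on $S\cap\OPT$ to get $h$. For $u\in S\cap\OPT$ we have $\OPT-h(u)+u=\OPT\in\cI$, so both structural requirements hold. If needed, I would include the short proof of Brualdi's lemma in the appendix (Hall's theorem applied to the exchange graph) together with the matroid background.

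\textbf{Step 2: the submodular part.} Since $f$ is submodular and $u\notin S-u$, we have $f(S-u+h(u))-f(S-u)\ge f(h(u)\mid S)$. This also holds trivially when $h(u)=u$, because then the right-hand side is $0$. Summing over $u\in S$ gives $\sum_u f(S-u+h(u))\ge \sum_u f(S-u)+\sum_u f(h(u)\mid S)$, and I would bound the two sums as follows.
\begin{itemize}
\item By submodularity and non-negativity, $\sum_u [f(S)-f(S-u)]\le f(S)-f(\emptyset)\le f(S)$, hence $\sum_u f(S-u)\ge(\rk-1)f(S)$.
\item Because $h$ is onto $\OPT$, submodularity and monotonicity give $\sum_u f(h(u)\mid S)\ge f(\OPT\cup S)-f(S)\ge f(\OPT)-f(S)$.
\end{itemize}
Together these give $f(\OPT)+(\rk-2)f(S)$.

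\textbf{Step 3: the distance part (main obstacle).} Let $A=S\setminus\OPT$, $B=\OPT\setminus S$, $C=S\cap\OPT$ and $t=|A|=|B|$. For $u\in A$,
\[
d(S-u+h(u))=d(S)-d(u\mid S-u)+d(h(u),S)-d(u,h(u)),
\]
while each $u\in C$ contributes exactly $d(S)$. The target therefore reduces to
\[
d(B,S)-d(A,S)-\sum_{u\in A}d(u,h(u))\ \ge\ d(\OPT)-2d(S),
\]
where $d(A,S)$ means $\sum_{u\in A}d(u\mid S-u)$. Expanding $d(S)=d(A)+d(C)+d(A,C)$ and $d(\OPT)=d(B)+d(C)+d(B,C)$, this becomes
\[
d(A,B)+2d(C)+d(A,C)\ \ge\ d(B)+d(C)+\sum_{u\in A}d(u,h(u)),
\]
where I use $d(B,S)=d(A,B)+d(B,C)$. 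To get there I would use the triangle inequality in two averaged forms, exploiting $\rk\ge3$ so that each pair has witnesses.
\begin{itemize}
\item For each $u\in A$, bound $d(u,h(u))\le d(u,x)+d(x,h(u))$ and average over the other points $x$.
\item For each pair $b,b'\in B$, bound $d(b,b')\le d(b,a)+d(a,b')$ and average over $a\in A$, which bounds $d(B)$ by roughly $\tfrac{t-1}{t}\,d(A,B)$.
\end{itemize}
The terms $d(C)$ and $d(A,C)$ should absorb the remaining slack. The bookkeeping in these averages is the delicate part. This is where I expect the paper follows Borodin et al.'s distance lemma for local search, and the case $\rk=2$, where the averaging has too few witnesses, is exactly why it is excluded.

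Finally, multiply the distance inequality by $\lambda\ge0$ and add it to Step 2.
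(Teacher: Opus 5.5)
\textbf{Gap: Step 3 is never proved, and the averaging you propose cannot prove it.}

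Steps 1 and 2 are fine. Step 2 is in fact a self-contained proof of the relevance inequality, part (i) of the Borodin et al.\ lemma that the paper simply cites, and it works for every $t$. Your reduction in Step 3 is also correct. The target $d(A,B)+d(C)+d(A,C)\ge d(B)+\sum_{u\in A}d(u,h(u))$ is exactly the expanded form of part (ii) of that lemma.

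Averaging $d(b,b')\le d(b,a)+d(a,b')$ over all $a\in A$ gives $d(B)\le\frac{t-1}{t}d(A,B)$. You would then still need $\sum_{u\in A}d(u,h(u))\le\frac1t d(A,B)+d(C)+d(A,C)$. When $S\cap\OPT=\emptyset$ there is no $d(C)$ or $d(A,C)$ to absorb anything, and the target can hold with equality. Take $t=\rk=3$ and three points $x,y,z$, and put $a_1=b_2=x$, $a_2=b_3=y$, $a_3=b_1=z$ with $h(a_j)=b_j$ (zero distances are allowed for a pseudometric). With $P=d(x,y)+d(y,z)+d(z,x)$ you get $d(A,B)=2P$, $d(B)=P$ and $\sum_j d(a_j,b_j)=P$, so the target is tight. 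After your averaging step you would need $\frac43P+P\le 2P$, which is false.

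Your first bullet does not help either. A two-step path $u\to x\to h(u)$ always produces a term in $d(A)$, $d(B)$ or $d(B,C)$, or one involving a point outside $S\cup\OPT$. None of these occurs on the left-hand side of the target.

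\textbf{The missing idea: exclude matched partners from the averaging.}
\begin{itemize}
\item For $t\ge 3$: for each pair $b,b'\in B$, use $d(b,b')\le d(b,a)+d(a,b')$ only for the $t-2$ elements $a\in A$ with $h(a)\notin\{b,b'\}$, and average over them. Every term $d(a,b)$ with $b\ne h(a)$ then gets total coefficient exactly $1$, and the matched terms never appear. This gives $d(B)\le d(A,B)-\sum_{u\in A}d(u,h(u))$, which is the claim without using $C$ at all.
\item For $t=2$, with $h(a_j)=b_j$: since $\rk\ge 3$ there is some $c\in C$. The path $b_1\to a_2\to c\to a_1\to b_2$ gives $d(b_1,b_2)\le d(a_1,b_2)+d(a_2,b_1)+d(A,C)$, which suffices.
\item For $t\le 1$: the claim is trivial, since $d(B)=0$ and $d(A,B)=\sum_{u\in A}d(u,h(u))$.
\end{itemize}
This is where $\rk>2$ is really used. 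Only the case $t=2$ needs a witness outside $A\cup B$, and when $\rk=2$ none exists. Alternatively, you can close Step 3 by citing the Borodin et al.\ distance lemma as a black box, which is all the paper does.
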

Using this lemma, we establish the following result, characterizing the algorithm’s progress.
\begin{lemma}\label{lemma:sample_progress}
For every iteration $i\in[T]$, we have
    \begin{align*}
    \Ex{\phi(S_{i}) -\phi(S_{i-1}) } \ge   \tfrac{1-1/e}{\rk}\paren{\phi(\OPT)-2\Ex{\phi(S_{i-1})}} - \tfrac{4\Delta \log n }{\eps_0}
\end{align*}
\end{lemma}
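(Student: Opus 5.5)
The plan is to condition on $S_{i-1}$ and prove the conditional version
\[
\Exp[\phi(S_i)\mid S_{i-1}]-\phi(S_{i-1}) \ge \tfrac{1-1/e}{\rk}\bigl(\phi(\OPT)-2\phi(S_{i-1})\bigr)-\tfrac{4\Delta\log n}{\eps_0}.
\]
The lemma then follows by the tower rule, since the right-hand side is affine in $\phi(S_{i-1})$. We split the argument into a sampling step and an exponential-mechanism step, mirroring \Cref{lemma:gain_lowerbound}.

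\textbf{Exponential-mechanism step.} Fix the realization $V_i$. The quality $q^i_D(u,v)=\phi(S_{i-1}-u+v)$ has sensitivity $\Delta$. The candidate set has $|W_i|\le \rk\lceil n/\rk\rceil+1\le n^2$ for $n\ge 2$, so \Cref{thm:exponential_mech} gives
\[
\Exp[\phi(S_i)\mid S_{i-1},V_i]\ge \max_{(u,v)\in W_i}\phi(S_{i-1}-u+v)-\tfrac{4\Delta\log n}{\eps_0}.
\]
Because the dummy swap $(w,w)$ lies in $W_i$, this maximum is at least $\phi(S_{i-1})$. It is also at least $\phi(S_{i-1}-u+h(u))$ for every $u$ with $h(u)\in V_i\setminus S_{i-1}$. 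These pairs are valid members of $W_i$: for $u\notin\OPT$ we have $h(u)\in \OPT\setminus S_{i-1}$, and $S_{i-1}-u+h(u)\in\cI$. Here I would check that \Cref{lemma:sum_swaps_lowerbound} indeed provides this direction of exchange. Its proof presumably uses the standard bijective basis exchange, which gives both directions, and I would invoke that.

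\textbf{Sampling step.} This is the main obstacle. Write $a_u=\phi(S_{i-1}-u+h(u))-\phi(S_{i-1})$ for $u\in S_{i-1}\setminus\OPT$. \Cref{lemma:sum_swaps_lowerbound} gives $\sum_{u\in S_{i-1}}a_u\ge \phi(\OPT)-2\phi(S_{i-1})$, and the terms with $u\in\OPT$ are zero because $h(u)=u$. Let $A=\{u: a_u>0\}$ and $m=|A|\le \rk$. The max gain is at least $\max\{0,\max_{u\in A:\,h(u)\in V_i}a_u\}$.

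Order $A$ by decreasing $a_u$. Let $p_j$ be the probability that the $j$-th element's partner is the top sampled one. By symmetry of uniform sampling, $\Exp[\text{gain}]\ge \sum_j p_j a_{(j)}$ with $p_1\ge p_2\ge\dots$. So it suffices to show this is at least $\frac{\Pr[V_i\cap h(A)\neq\emptyset]}{m}\sum_{u\in A}a_u$, a Chebyshev-sum or rearrangement argument identical to \Cref{lemma:gain_lowerbound}.

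It remains to bound $\Pr[V_i\cap h(A)\ne\emptyset]\ge 1-(1-m/n)^{\lceil n/\rk\rceil}\ge 1-e^{-m/\rk}\ge \frac{m}{\rk}(1-1/e)$. The last inequality uses concavity of $1-e^{-x}$ on $[0,1]$. Combining, the expected gain is at least $\frac{1-1/e}{\rk}\sum_{u\in A}a_u\ge \frac{1-1/e}{\rk}(\phi(\OPT)-2\phi(S_{i-1}))$. When the right side is negative, the bound holds trivially by the dummy swap.

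Finally, subtract the exponential-mechanism error and remove the conditioning on $V_i$ and then on $S_{i-1}$.
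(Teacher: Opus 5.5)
Your argument is correct and has the same skeleton as the paper's proof: condition on $S_{i-1}$ and $V_i$, lose $4\Delta\log n/\eps_0$ to the exponential mechanism, use the dummy swap so that losses are capped at zero, and combine the exchange bijection $h$, \Cref{lemma:sum_swaps_lowerbound}, and a hitting estimate for the sample. The difference is in the sampling step.

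The paper keeps all $\rk$ pairs of $M_i=\{(u,h(u)) : u\in S_{i-1}\}$ and lower-bounds the maximum over the sampled pairs by their average. It then moves $\max\{0,\cdot\}$ outside the expectation by Jensen, and uses symmetry to get $\Pr[G]\cdot\frac{1}{\rk}\sum_{M_i}(\cdot)$, with $\Pr[G]\ge 1-1/e$ for hitting the whole $\rk$-element set $\OPT$ (\Cref{lemma:hitting}).

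You instead discard the pairs with non-positive gain and bound the expected maximum by a rearrangement (Chebyshev) inequality. You compensate for the smaller target $h(A)$ of size $m$ through $1-e^{-m/\rk}\ge \frac{m}{\rk}(1-1/e)$. Your route avoids Jensen and makes the treatment of negative gains explicit. The paper's route needs only one hitting bound and no ordering of gains. Both give the same constant $(1-1/e)/\rk$.

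Your concern about the exchange direction is justified. For the pairs $(u,h(u))$ to belong to $W_i$ you need $S_{i-1}-u+h(u)\in\cI$. However, \Cref{lemma:sum_swaps_lowerbound} (via \Cref{lemma:matroid_property}) is stated with $\OPT-h(u)+u\in\cI$, and the paper's proof uses the former without comment.

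Do not resolve this by appealing to a single bijection that works in both directions, because none need exist. Take the graphic matroid of $K_4$ with bases $S=\{12,23,34\}$ and $O=\{13,14,24\}$ (edges). For both $a=12$ and $a=34$, the only $b$ making $S-a+b$ and $O-b+a$ both bases is $b=14$.

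Instead, apply \Cref{lemma:matroid_property} with $X=\OPT$ and $Y=S_{i-1}$ and invert the bijection. This gives $h$ with $S_{i-1}-u+h(u)\in\cI$ and $h(u)=u$ on $S_{i-1}\cap\OPT$. Then observe that the inequality in \Cref{lemma:sum_swaps_lowerbound} holds for every bijection fixing $S_{i-1}\cap\OPT$:
\begin{itemize}
\item The $f$-part uses only monotonicity, non-negativity and submodularity.
\item The $d$-part uses only the triangle inequality. For $t\ge 3$, average $d(c_j,c_l)\le d(c_j,b_m)+d(b_m,c_l)$ over $m\ne j,l$. For $t=2$, route through an element of $S_{i-1}\cap\OPT$, which exists because $\rk\ge 3$.
\end{itemize}
With this choice of $h$, your proof goes through as written.
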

Intuitively, \Cref{lemma:sum_swaps_lowerbound} implies that after obtaining a base $S_{i-1}$, a uniformly random swap from the set $M_i = \set{(u,h(u)) : u \in S_{i-1}}$ yields an expected gain of $\tfrac{1}{\rk}(\phi(\OPT) - 2\phi(S_{i-1}))$. To prove \Cref{lemma:sample_progress}, we show that the set of swaps considered at iteration $i$ intersects $M_i$ with probability at least $1-1/e$. Conditioned on this event, the swap selected by the exponential mechanism achieves, up to an additive error, the maximal gain available, which in turn is at least as large as the average gain of a swap in $M_i$. Combining these observations establishes the lemma. The full argument, which also handles potential negative gains, is presented in
~\Cref{appendix:localsearch}.
\begin{proof}[Proof Sketch of \Cref{thm:sample_local_search_guarantee}]
The privacy guarantee follows from the $\eps_0$-DP of the exponential mechanism and composition theorems (\Cref{thm:composition}) over $O(\gamma^{-1} \rk \log \rk)$ iterations. 
For complexity, each iteration requires $O(|S_{i-1}| \cdot |V_i|) = O(n)$ oracle calls due to the subsampling step. No additional calls are required in Line~\ref{localsearch_EM} as all values are computed during the preceding step. In total, the algorithm makes $O(\gamma^{-1} n \rk \log \rk)$ oracle calls.
For utility, we first obtain $\Ex{\phi(S_1)} = \Omega(\frac{\phi(\OPT)}{\rk})$ using \Cref{lemma:sample_progress}, assuming $\phi(\OPT)$ is large enough that the bound in the theorem is non-trivial. Conditioning on the selection of $S_{i-1}$, \Cref{lemma:sample_progress} implies:
\begin{align}\Ex{\phi(S_i)} \ge \paren{1+\frac{\gamma(1-1/e)}{\rk}}\Ex{\phi(S_{i-1})} \label{eq:expectation_progress1}
\end{align}
for every $i$ for which the bound $\Ex{\phi(S_{i-1})} < \phi(\OPT)/(2+\gamma) - \rk\alpha$ holds. Thus, there must exist some $i$ for which the bound fails; otherwise, recursive application of  \eqref{eq:expectation_progress1} yields $\Ex{\phi(S_{T})} > \phi(\OPT)$ in contradiction. The proof is completed by applying the law of total expectation over the sampling of $S_1, \dots, S_{T}$ and concluding that the selected $S_{i^*}$ has expected quality at least $\max_{i \in [T]}\Ex{{\phi(S_i)}}$ up to a bounded additive error. This maximum is at least $\phi(\OPT)/2$ up to the stated additive error.\end{proof}
\section{Experiments}\label{sec: exp}
In this section, we evaluate the utility and efficiency of our proposed algorithms across two data summarization applications: Uber pickup location summarization and Amazon product summarization. Based on these applications, we characterize the trade-offs between privacy, utility, and complexity.

\paratitle{Summary of Findings}
For cardinality constraints, the proposed DP algorithms achieve competitive utility compared to the non-private baseline while drastically improving efficiency. \DPG and \DPNOSG stay within a marginal $1\%$ of the non-private baseline even at a strict $\eps = 0.1$ budget. By shifting the query complexity to logarithmic or zero dependence on $k$, \DPNOSG and \DPOSG achieve speedups of $14\times$ and $50\times$ at $k=100$. For matroid constraints, \DPSLS maintains competitive utility, staying within $1\%$ of the non-private baseline at $\eps=0.1$. While \DPSLS reduces the total number of oracle calls, its fixed iteration count leads to an execution time overhead of up to $2.2\times$ compared to the non-private baseline; however, despite scaling more slowly with $k$, it exhibits the improved scaling with $n$ predicted by our analysis.

\subsection{Experimental Settings}\label{sec:settings}
We next present our settings for the experiments. 
All algorithms were implemented\footnote{
The implementation can be found at \url{https://github.com/ronzadi/Differentially-Private-Max-Sum-Diversification}.
} in Python 3.9.19 using the Pandas and NumPy libraries.
All experiments were run on an Intel Xeon CPU-based server with 24 cores
and 96 GB of RAM.  

\paratitle{Default parameters}
 Unless mentioned otherwise, the following parameters are used.
 We set $\eps = 0.1$ and $\delta=|D|^{-1.5}$ where $D$ is the sensitive dataset. For cardinality constraints, our per-iteration privacy parameter $\eps_0$ can be set with either basic composition, advanced composition (e.g., \Cref{thm:greedy_guarantee}) or the analysis for decomposable objective (e.g., \Cref{thm:greedy_decomposable}). We follow \cite{mitrovic2017differentially,chaturvedi2021differentially} and pick the initialization that achieves $(\eps,\delta)$-DP with the smallest noise scale. For the partition matroid constraint, we pick the best out of basic and advanced composition.
We fix $\gamma = 0.1$ for the utility parameter, and by default use  $\lambda=0.5$. Results are averaged over 10 runs.

\paratitle{Baselines}
As \MSD has not been previously explored in a DP setting, we compare our proposed methods against non-private, random, and DP submodular maximization baselines.

\begin{itemize}[leftmargin=*]
    \item  \textbf{\Greedy} (Cardinality): The standard non-private greedy algorithm for cardinality constraints \cite{borodin2017max}.
    \item \textbf{\LS} (Matroid): The non-private local search algorithm for general matroid constraints \cite{borodin2017max} discussed in \Cref{sec:matroid_local_search}.
    \item \textbf{\Random} (Cardinality/Matroid): Selects a random feasible subset. Since the output distribution of the exponential mechanism converges to  uniform as $\eps \to 0$, this baseline serves as a lower bound on utility for private algorithms. Following prior work~\cite{mitrovic2017differentially,chaturvedi2021differentially,chaturvedi2023streaming}, we incorporate \Random to contextualize the performance of our algorithms.
        \item \textbf{\DPGRel} (Cardinality / Matroid): The DP greedy algorithm of~\cite{mitrovic2017differentially} for DP submodular maximization. It greedily selects feasible elements using the \ExpMech, applied only to the submodular  component of our objective (i.e., $\lambda=0$). This baseline isolates the effect of removing the diversity term by optimizing relevance alone.
    \item \textbf{\DPMix} (Cardinality): A hybrid baseline combining the random baseline with \DPGRel: $k/2$ items are selected  uniformly at random, and the remaining $k/2$ are selected using \DPGRel.
\end{itemize}

\subsection{Datasets, Objective Functions, and Constraints}
We evaluate our approach on two real-world data summarization tasks: Uber location selection and Amazon product selection.

\paratitle{Uber Location Selection} 
The first application involves selecting a representative summary of Uber pickup locations in Manhattan. This setup adapts the location selection experiments from \cite{mitrovic2017differentially, chaturvedi2021differentially,mitrovic2018data}. The goal is to select a set of locations from a public candidate set, for instance, to serve as waiting spots for idle drivers.

\begin{itemize}[leftmargin=*, label={}]
\item \textbf{Dataset.}
We consider  a sensitive dataset $D$ of $m=600,000$ Uber pickup locations in Manhattan from 2014 \cite{uberDataset}. Each record consists of longitude and latitude coordinates. For the public ground set, we follow the established methodology and consider a grid of $n$ points over Manhattan, where by default $n=1000$.

\item \textbf{Objective function.}
For relevance, we adopt the Uber ``convenience'' score from~\citet{mitrovic2018data}, defined as follows. For a passenger location
    $a=(x_a,y_a)$ and a grid point $b=(x_b,y_b)$ define $c(a,b) = 2 - \frac{2}{1 + e^{-200 \norm{a-b}_1}}$, 
    where $\norm{a-b}_1= |x_a-x_b| + |y_a-y_b|$ is the Manhattan distance between the two points.
The relevance of a set $S$ is defined as $f_D(S) = \frac{1}{m} \sum_{a \in D} \min_{b\in S}c(a,b)$
with $f_D(\emptyset) = 0$. This function is monotone, submodular, and \mbox{$1/m$-decomposable} (See
\ifpaper
Appendix~F in~\cite{full_version}
\else
\Cref{appendix:sens_uber}
\fi). However, as observed in \cite{mualem2022using}, it does not inherently promote geographic diversity. For example, suppose congestion slows down traffic in and out of a certain area. If all selected waiting locations are concentrated there, drivers may struggle to reach passengers efficiently. To address this, we add a diversity term defined by the normalized sum of distances between the selected points: $d(S)=\sum_{b,b'\in S}\frac{\norm{b-b'}_1}{M}$, where $M$ is an upper bound on the distance within the area such that $\norm{b-b'}_1\in [0,1]$

\item \textbf{Constraints.} We evaluate this task under a cardinality constraint~$k$.
\end{itemize}

\paratitle{Amazon Product Selection}
The second application follows the scenario in \Cref{example:motivating} and focuses on selecting a representative summary of highly-purchased Amazon products to maximize user reach.
\begin{itemize}[leftmargin=*, label={}]
    \item \textbf{Dataset.} We utilize the \emph{Health and Household} subset of the Amazon Reviews 2023 dataset \cite{amazon}, which contains product metadata (e.g., categories, price) and user review data. We treat the metadata as the public ground set and the review/purchase history as the private dataset. Specifically, we select  $3000$ products from the \emph{Health Care} category as our ground set $V$ and discretize their prices into four bins. The dataset $D$ contains the purchase records of $1{,}317{,}193$ users.
    \item \textbf{Objective function.}
We consider the objective function from our running example (\Cref{example:decomposable}), where the goal is to select a subset $S \subseteq V$ maximizing user reach, with diversity given by the Jaccard distance over category sets.

\item \textbf{Constraints.} We consider
the intersection of a uniform matroid of rank $k$ with a partition matroid that limits selections to $\lceil k/4 \rceil$ products per price bin. We refer to the latter setting as Amazon-Partition.
We also conducted experiments under a cardinality constraint. The scalability trends were similar to those observed for Uber, which is expected since the two settings differ primarily in the cost of evaluating the objective function rather than in the algorithmic behavior. We likewise observed highly similar utility trends, and therefore defer these results to
\ifpaper
Appendix~G in~\cite{full_version}
\else
\Cref{appendix:experiments}
\fi
\end{itemize}

\begin{figure*}
    \centering
    \begin{minipage}[t]{0.24\textwidth}
        \centering
        \includegraphics[width=\linewidth]{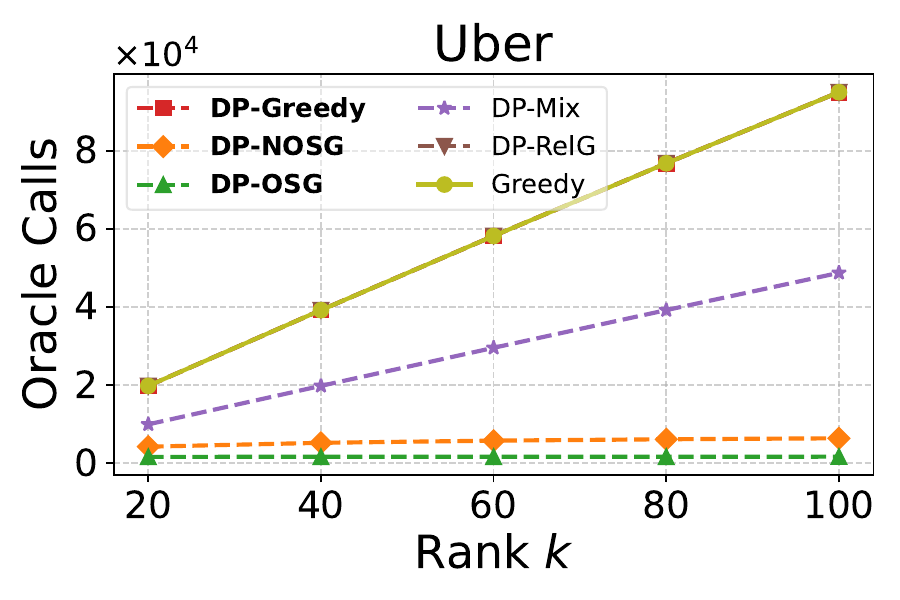} \\ \vspace{1pt}
        \includegraphics[width=\linewidth]{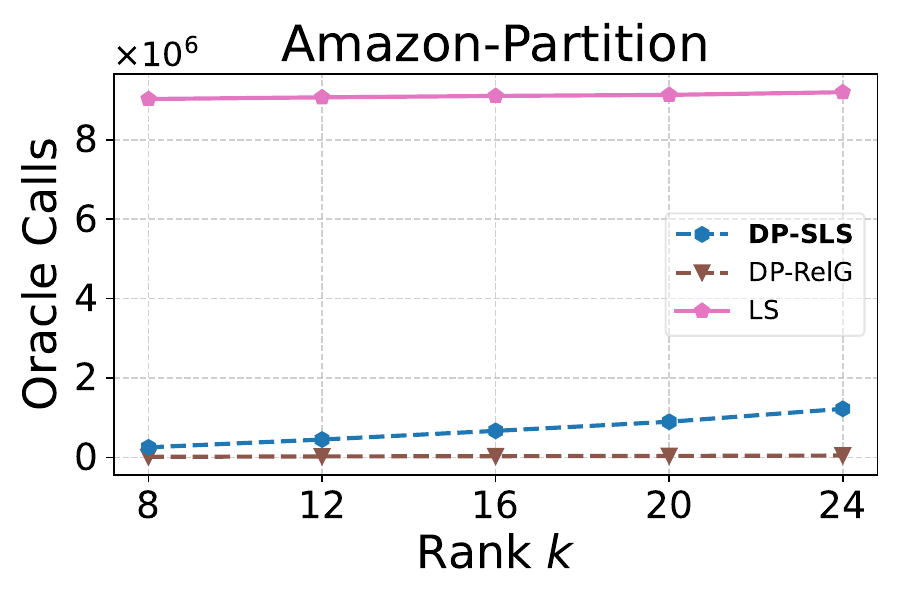} \\ \vspace{1pt}
        \caption{Number of oracle  calls as the rank $k$ varies.}
        \label{fig:queries_for_k}
    \end{minipage}
    \hfill
    \begin{minipage}[t]{0.24\textwidth}
        \centering
        \includegraphics[width=\linewidth]{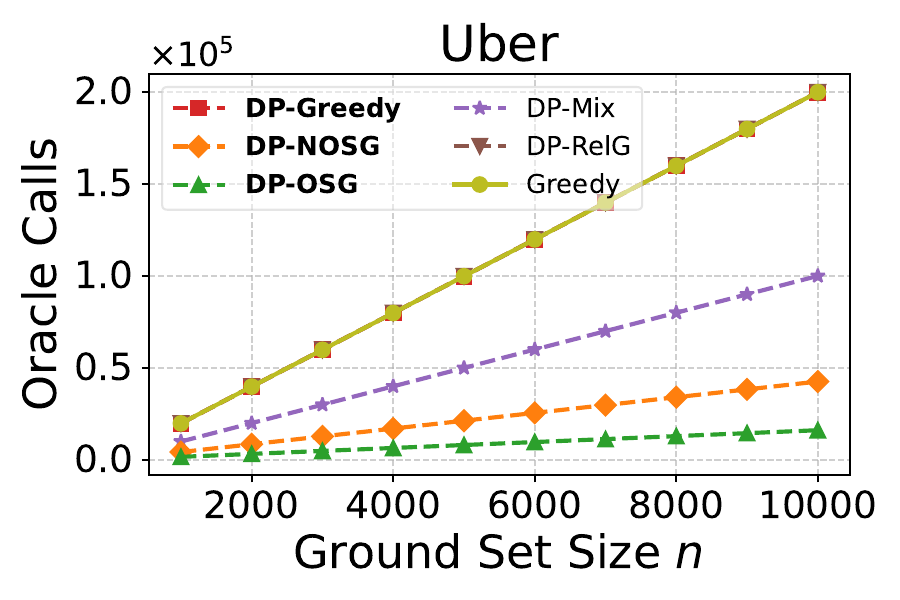} \\ \vspace{1pt}
        \includegraphics[width=\linewidth]{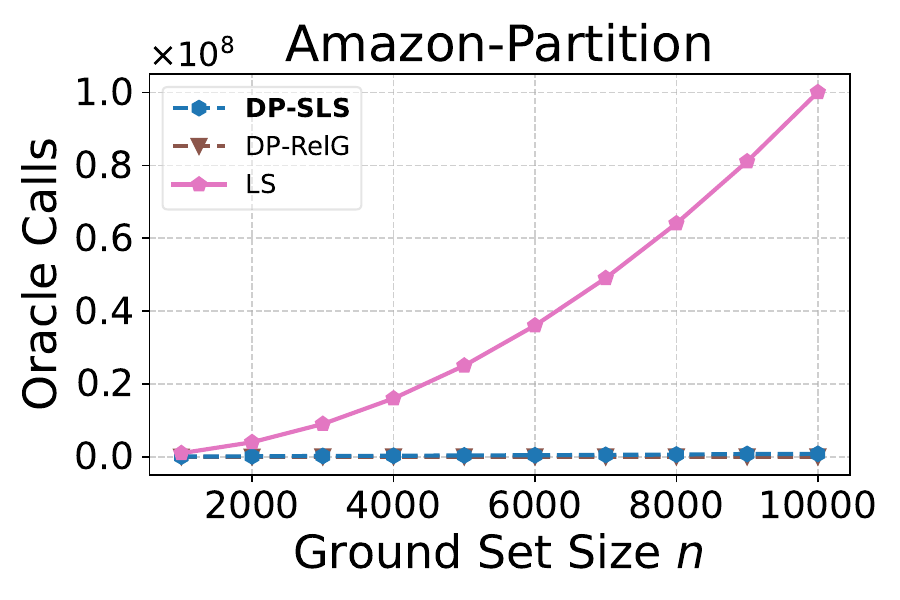} \\ \vspace{1pt}
        \caption{Number of oracle  calls as the ground set size $n$ varies.}
        \label{fig:queries_for_n}
    \end{minipage}
    \hfill
    \begin{minipage}[t]{0.24\textwidth}
        \centering
        \includegraphics[width=\linewidth]{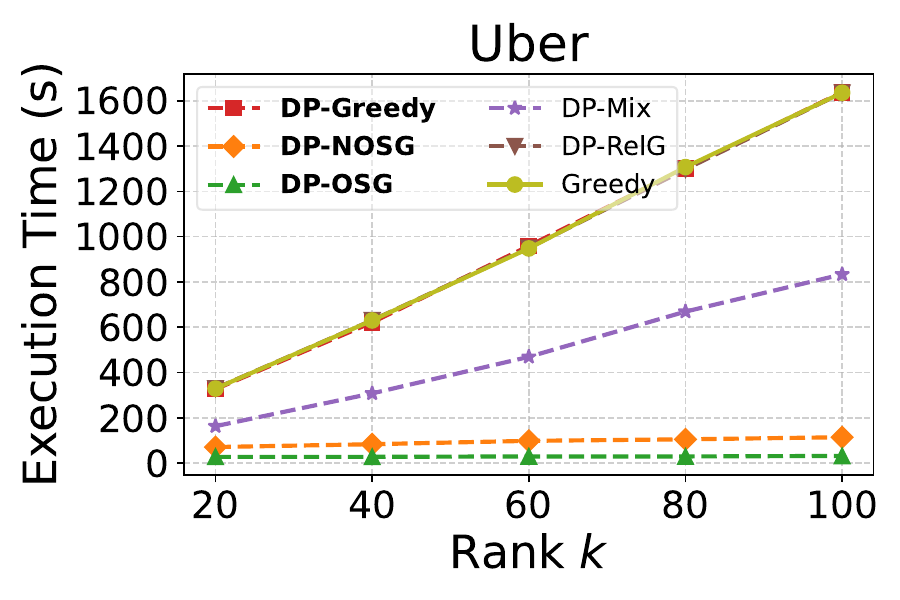} \\ \vspace{1pt}
        \includegraphics[width=\linewidth]{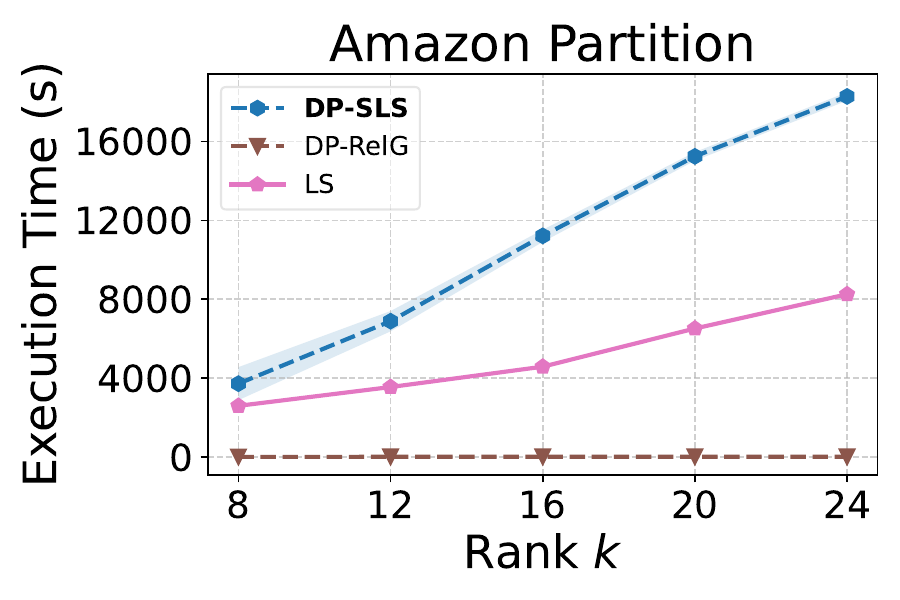} \\ \vspace{1pt}
         \caption{Execution time (seconds) as the rank $k$ varies.}
        \label{fig:time_for_k}
    \end{minipage}
    \hfill
    \begin{minipage}[t]{0.24\textwidth}
        \centering
        \includegraphics[width=\linewidth]{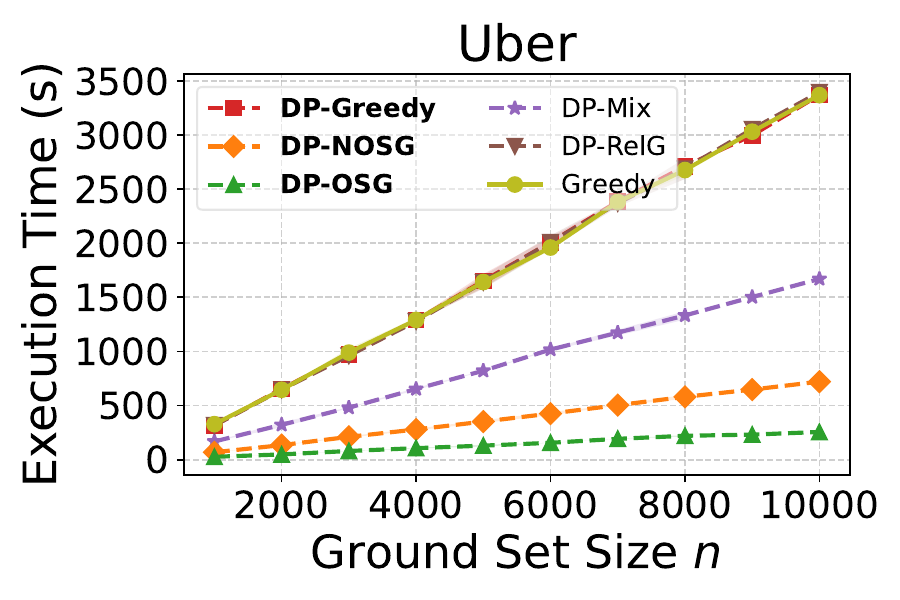} \\ \vspace{1pt}
        \includegraphics[width=\linewidth]{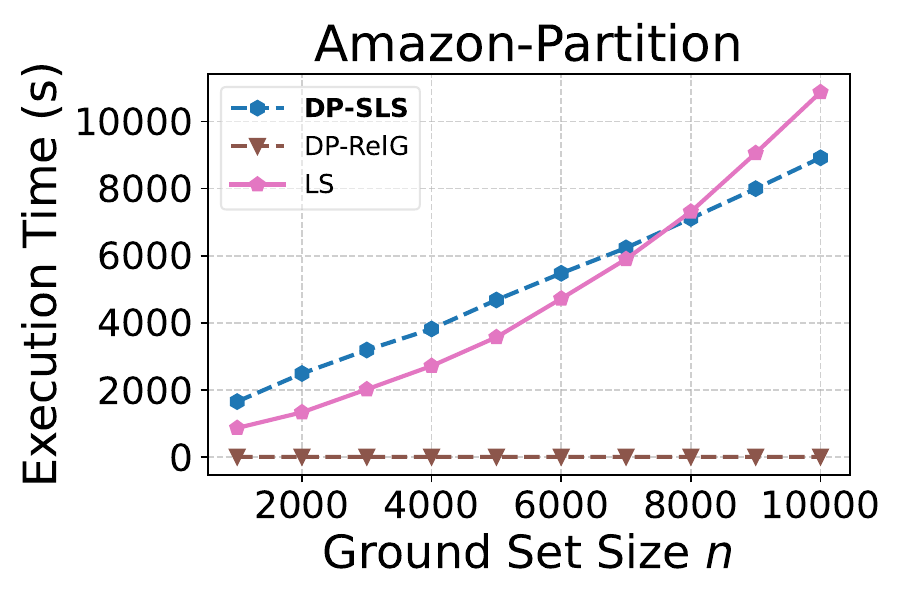} \\ \vspace{1pt}
        \caption{Execution time (seconds) as the ground set size $n$ varies.}
        \label{fig:time_for_n}
    \end{minipage}
\end{figure*}

\subsection{Utility Analysis}
\label{subsec: acc}
We examine how the selected subset size $k$ and the privacy parameter $\eps$ impact the \MSD objective value $(1-\lambda)f_D(S)+\frac{2\lambda}{k(k-1)}d(S)$.
By default, we set $k=20$ for the Uber experiment, $k=6$ for the Amazon-Partition experiment.

\paratitle{Impact of $k$}
We evaluate the impact of the rank $k$ on the objective value (\Cref{fig:val_for_k}). Overall, our algorithms perform competitively with the non-private baselines and consistently outperform the other DP baselines.
For the Uber dataset, our three algorithms remain within $3.4\%$ of the non-private \Greedy baseline on average, with the gap increasing in $k$ due to accumulated DP noise. In contrast, \DPMix is on average $17\%$ below \Greedy, \DPGRel is $20\%$ below, and \Random is $26\%$ below. The trends for Amazon-Cardinality are similar and the corresponding plot is omitted.
For Amazon-Partition, the utility of \DPSLS remains nearly identical to the non-private \LS baseline across the examined range, and is sometimes even slightly better (within $0.1\%$ on average). In contrast, the baselines \DPGRel and \Random perform substantially worse, remaining approximately $25\%$ and $30\%$ below \LS, respectively. We also note that the objective value need not increase monotonically with $k$. This is because the normalization factor of the diversity increases with $k$, and stricter partition constraints for smaller $k$ may restrict the algorithm to naturally more diverse solutions.

\paratitle{Impact of $\eps$}
We evaluate the impact of the privacy parameter $\eps$ on the resulting objective value (\Cref{fig:val_for_eps}). As expected, a larger privacy budget improves the utility of all DP algorithms.
For the Uber dataset, all our algorithms significantly outperform the DP baselines and achieve utility comparable to the non-private \Greedy baseline, reaching within $1\%$ even at $\eps=0.1$. In contrast, \DPMix, \DPGRel, and \Random remain substantially lower, at $19\%$, $27\%$, and $34\%$ below, respectively.
For Amazon-Cardinality, the trends are similar and the corresponding plot is omitted. For Amazon-Partition, we similarly find that at $\eps=0.1$, \DPSLS reaches utility within $1\%$ of the non-private \LS baseline and significantly outperforms \DPGRel and \Random, which remain $38\%$ and $48\%$ below \LS, respectively. \DPGRel slightly benefits from lower privacy budgets, as the increased randomness induces more diverse selections, resulting in performance closer to \Random in this regime.

\paratitle{Impact of $\lambda$}
We evaluate the robustness of our algorithms with respect to $\lambda$, which controls the trade-off between relevance and diversity. The results, shown in \Cref{fig:lambda}, demonstrate that the utility of our algorithms relative to the corresponding non-private baselines remains consistently high across all examined values $\lambda \in [0,0.8]$, indicating that the observed performance is largely insensitive to the choice of $\lambda$.
For the Uber dataset, the utility gap from the non-private \Greedy baseline remains below $1\%$ on average for \DPG and \DPNOSG, and below $3\%$ for \DPOSG. In contrast, the other DP baselines perform substantially worse, with \Random reaching up to $38\%$ lower utility on average.
For Amazon-Partition, the utility gap of \DPSLS from the non-private \LS baseline remains below a marginal $0.1\%$ across all examined values of $\lambda$. Similar trends are observed for Amazon-Cardinality.

As expected, the performance of relevance-only baselines deteriorates as $\lambda$ increases as diversity becomes more dominant in the objective. When $\lambda=0$, corresponding to relevance-only optimization, our algorithms achieve essentially the same utility as \DPGRel. This shows that our approach recovers the performance of prior work in this special case while naturally extending to more general relevance-diversity objectives.

\begin{figure}
    \centering
    \begin{minipage}[t]{0.23\textwidth}
        \centering
        \includegraphics[width=\linewidth]{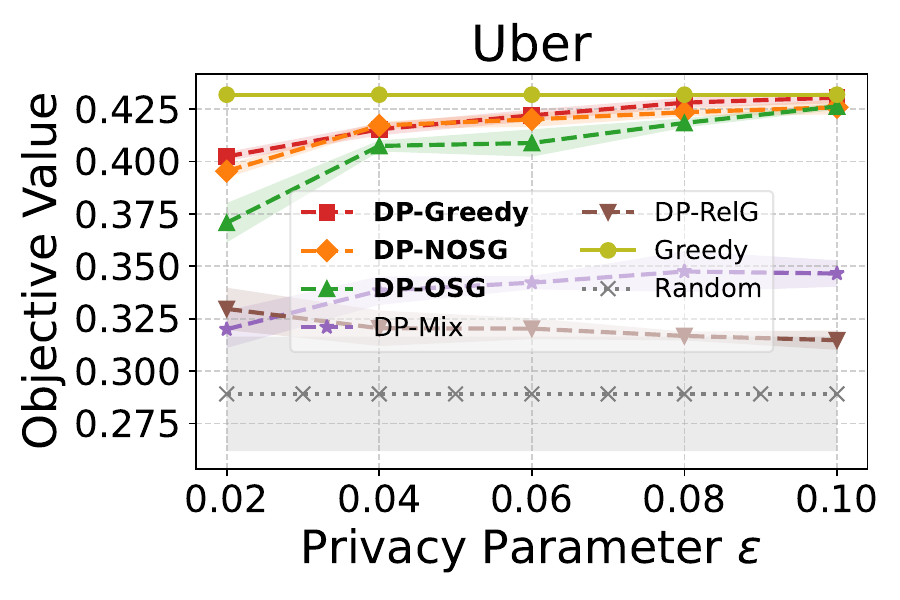} \\  
        \includegraphics[width=\linewidth]{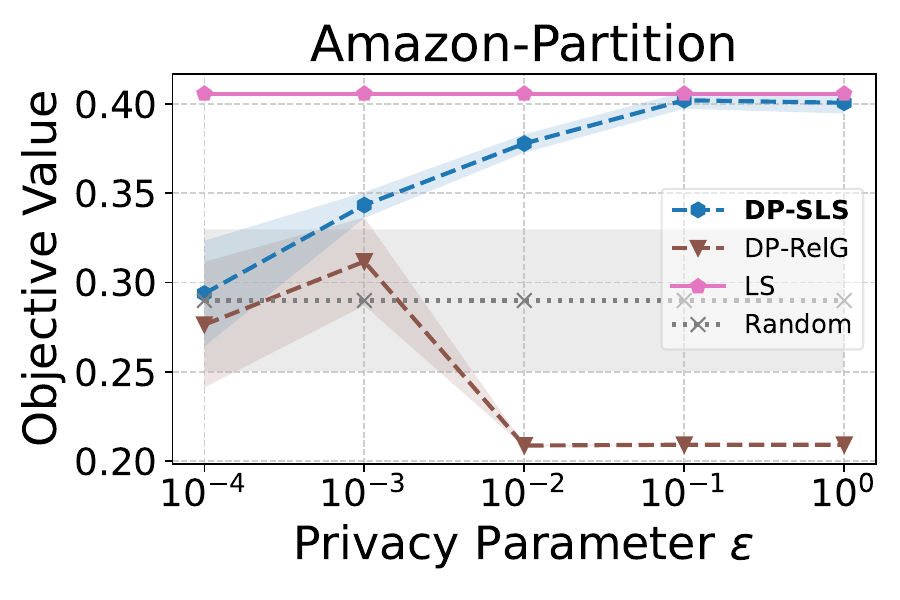} \\ 
        \caption{Objective value as the privacy parameter $\eps$ varies.}
        \label{fig:val_for_eps}
    \end{minipage}
    \hfill
    \begin{minipage}[t]{0.23\textwidth}
        \centering
        \includegraphics[width=\linewidth]{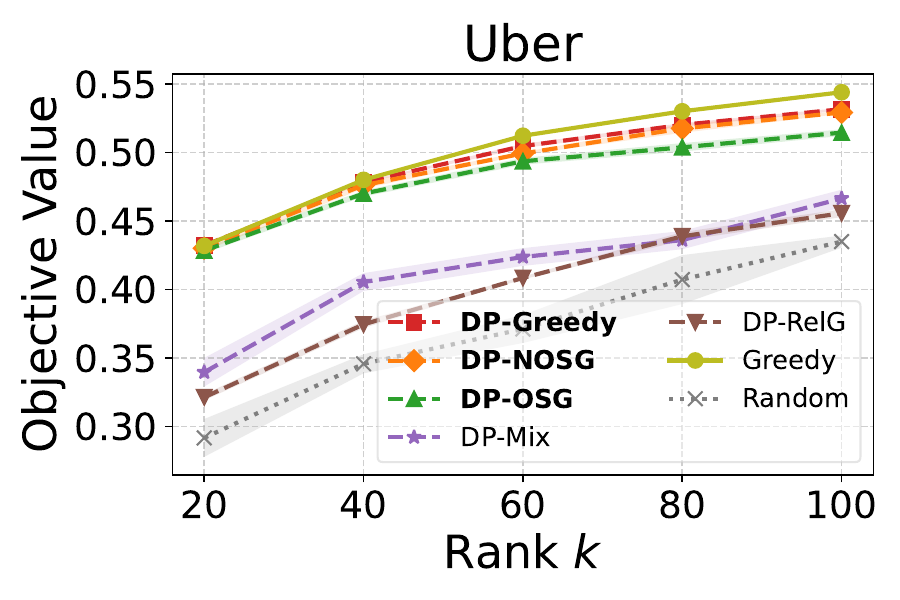} \\                
        \includegraphics[width=\linewidth]{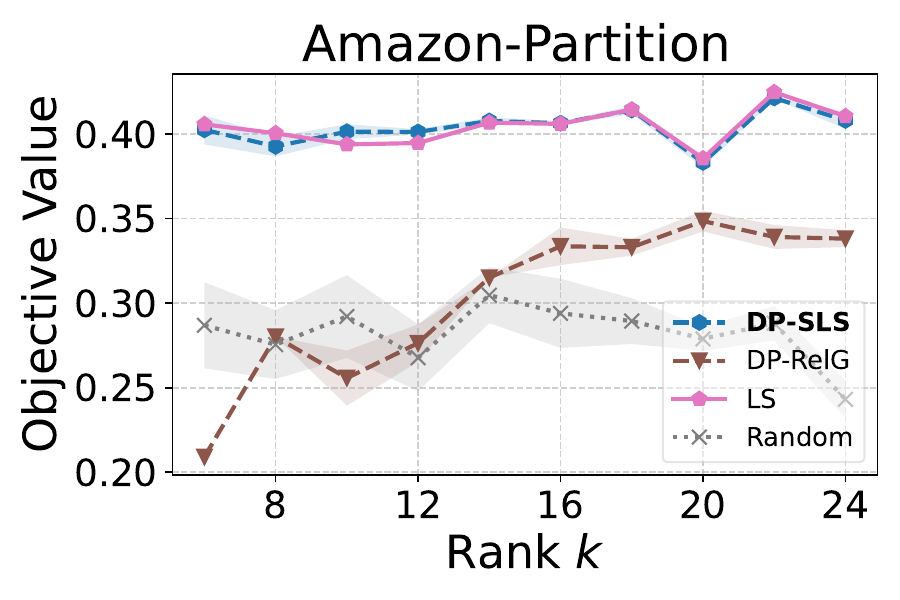} \\ 

         \caption{Objective value as the rank (maximal size of a feasible solution) $k$ varies.}
        \label{fig:val_for_k}
    \end{minipage}
\end{figure}

\begin{figure}[t]
    \centering
    \begin{minipage}[t]{0.23\textwidth}
        \centering
        \includegraphics[width=\linewidth]{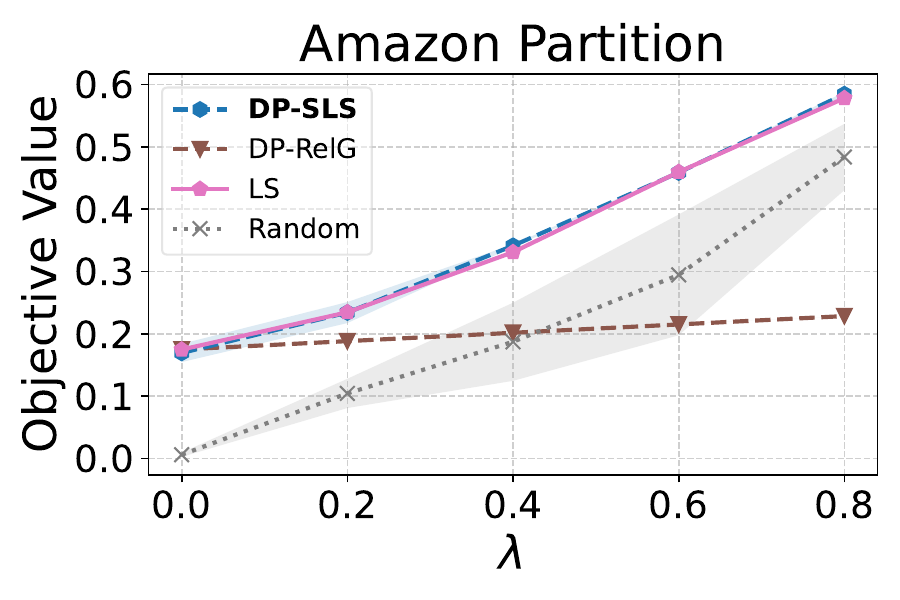}
    \end{minipage}%
    \begin{minipage}[t]{0.23\textwidth}
        \centering
        \includegraphics[width=\linewidth]{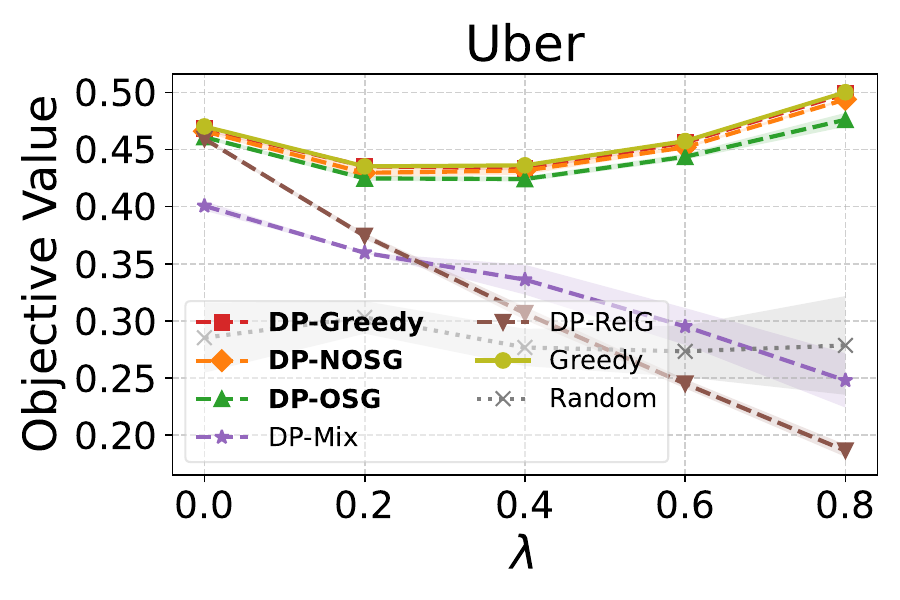}
    \end{minipage}
    \caption{Objective value as the parameter $\lambda$ varies.}
    \label{fig:lambda}
\end{figure}

\subsection{Performance Analysis}
\label{subsec:perf}
We evaluate the computational efficiency of our algorithms by measuring both query complexity and execution time (\Cref{fig:queries_for_k,fig:time_for_k}). In the submodular maximization literature, the number of oracle  calls is considered the most robust metric for performance evaluation, as it is invariant to implementation and problem-specific details~\cite{li2022submodular}. Nevertheless, we also report execution time to provide additional practical insights. Note that we do not attempt to optimize the oracle evaluation time.

\paratitle{Number of Oracle Calls}
\Cref{fig:queries_for_k} reports the number of oracle calls as $k$ varies. For cardinality constraints, \DPG matches the non-private \Greedy baseline, scaling linearly in $k$, while \DPNOSG exhibits a logarithmic dependence and significantly reduces query complexity (e.g., $15\times$ fewer calls than \Greedy at $k=100$ on Uber). \DPOSG is even more efficient, with a query count essentially independent of $k$, achieving up to a $57\times$ reduction at $k=100$ while maintaining comparable utility. 
For Amazon-Partition, the query count of the non-private \LS baseline is dominated by its initialization phase, which requires an exhaustive $\Theta(n^2)$ search over feasible pairs and is independent of $k$; in practice, the subsequent local search terminates after fewer iterations than the worst-case bound. In contrast, \DPSLS avoids this exhaustive initialization and maintains a lower query count.

\Cref{fig:queries_for_n} shows the dependence on $n$. On the Uber dataset, all algorithms exhibit linear scaling in $n$, while our \DPNOSG and \DPOSG achieve a smaller slope due to their improved dependence on $k$. For Amazon-Partition, \LS exhibits quadratic scaling in $n$ due to its initialization phase, which involves a brute-force search over pairs of elements, whereas our \DPSLS scales linearly.

\paratitle{Execution Time}
\Cref{fig:time_for_k} illustrates execution time as $k$ varies. For cardinality constraints, the trends align with our query complexity analysis. \DPNOSG exhibits a logarithmic dependence on $k$, outperforming both \Greedy and all DP baselines; on Uber ($k=100$), it is $14\times$ faster than \Greedy. \DPOSG is even more efficient, achieving a $50\times$ speedup while preserving comparable utility.

For Amazon-Partition, \DPSLS is slower than the non-private \LS baseline despite its fewer oracle calls. This discrepancy arises because the \LS bottleneck is dominated by pair evaluations, which are computationally cheaper in this application than evaluating the larger sets required for potential swaps. Moreover, the non-private baseline often converges to a local optimum in substantially fewer iterations than the fixed number required by the \DPSLS theoretical analysis.
Overall, \DPSLS is approximately $2\times$ slower than \LS on average. While both algorithms may be slow for interactive settings, \DPSLS uniquely provides formal privacy and utility guarantees. In contrast, \DPGRel is significantly faster (minutes rather than hours, and is thus omitted from the plot), but achieves substantially lower utility and lacks a constant-factor approximation guarantee.

\Cref{fig:time_for_n} depicts scalability as $n$ varies. For cardinality constraints, \DPOSG and \DPNOSG are the most scalable DP methods, and their execution-time trends match the query-complexity trends. For Amazon-Partition, \DPSLS exhibits the improved scalability in $n$ predicted by our bottleneck analysis; for sufficiently large ground sets ($n \ge 8000$), it becomes faster than the \LS baseline.

\section{Conclusion and Limitations}
\label{sec:conclusion}
We studied the max-sum diversification problem, a widely adopted formulation introduced by \citet{borodin2017max}, under differential privacy. We designed DP algorithms for both cardinality and general matroid constraints. These algorithms achieve utility guarantees comparable to those of their non-private counterparts while also offering improved complexity, making them attractive even when privacy is not required. Experiments on real-world datasets demonstrate that our approach attains high utility and substantially improves practical efficiency for cardinality constraints.

Our work has several limitations that suggest interesting directions for future research. First, as discussed in \Cref{ls_remark}, our current guarantees for matroid constraints apply to arbitrary $\Delta$-sensitive functions and do not exploit decomposability to obtain stronger utility bounds. Achieving such improvements appears to be non-trivial. Second, our fastest $O_{\gamma}(n)$ oracle-call algorithm provides a constant-factor approximation but does not attain the tight $1/2$ ratio. It remains open whether a near-optimal $(1/2-\gamma)$-approximation can be achieved for cardinality constraints with $O_{\gamma}(n)$ calls. Third, while \Cref{alg:sample_local_search} has strictly better worst-case query complexity than the non-private baseline of \citet{borodin2017max}, it can be slower in practice in some regimes due to realizing its worst-case complexity in every execution. Developing practical accelerations while preserving its high utility is an interesting direction. Finally, extending our framework to additional diversity models remains an important avenue for future work.

\begin{acks}
This research was supported by the Israel Science Foundation (ISF) under grant 2707/22 of the Breakthrough Research Grant (BRG) Program.
\end{acks}
\newpage

\balance
\bibliographystyle{ACM-Reference-Format}
\bibliography{bibl}

\ifpaper
\else
\onecolumn
\appendix
\section{Submodular Functions and Matroids}\label{appendix:matroids}
Submodular set functions have numerous applications across areas such as machine learning, databases, economics, and network analysis. Their usefulness stems from the property of diminishing returns, which naturally arises in many practical situations, and informally states that the incremental benefit of adding an element to a smaller set is larger than adding it to a larger set. This property makes submodular functions  well-suited to model the notion of relevance. 
\begin{definition}\cite{nemhauser1978analysis}
 A set function $f:2^V\to\R$ is submodular
 if $f(u\mid S) \ge f(u\mid T)$ for every two sets
$S\subseteq T\subseteq V$ and element $u\in V\setminus T$.
\end{definition}
 A set function $f:2^V\to\R$ is \emph{non negative} if $f(S)\ge 0$ for every set $S\subseteq V$, and is \emph{monotone} if $f(S)\le f(T)$ for every two sets $S\subseteq T\subseteq V$. 
We consider only non-negative, monotone submodular functions, as is the case in the non-private setting~\cite{borodin2017max}.

The problem of maximizing a submodular function subject to a \emph{matroid
independence} constraint is one of earliest and most studied problems in submodular maximization~\cite{schrijver2003combinatorial}. In the context of max-sum diversification as well, matroids allow a substantial generalization of the types of constraints that can be modeled compared to simple cardinality \cite{borodin2017max}. For example, cardinality constraint is a special case called a \emph{uniform matroid}. In a \emph{partition matroid},  the ground set $V$ is partitioned into disjoint subsets $P_i$, and the independent sets are 
$\set{S \in 2^V : |S \cap P_i| \le r_i}$, 
i.e., sets that satisfy a separate cardinality bound for each block.
As another example, a \emph{transversal matroid} models representativeness. Let $A_1, \dots, A_m$ a collection of categories, i.e, subsets of the ground set $V$ (that may overlap). A subset $S \subseteq V$ is independent if there exists an injective mapping $h: S \to [m]$ such that $u \in A_{h(u)}$ for all $u \in S$, that is, no two items represent to the same category. Additional background an examples can be found in~\cite{schrijver2003combinatorial}. We now give the formal definition.
\begin{definition}\cite{whitney1992abstract}
    A matroid is a pair $(V,\cI)$, where $\cI\subseteq 2^V$ is a collection of subsets called \emph{independent sets}, with the following properties.
    \begin{enumerate}[label=(\textit{\roman*})]
        \item if $A\subseteq B$ and $B\in \cI$ then $A\in \cI$ (Hereditary),
        \item If $A,B\in \cI $ and $|A|<|B|$ then $\exists u\in B\setminus A$ such that  $A\cup\set{u}\in \cI$ (Augmentation).
    \end{enumerate}
    
\end{definition}
A matroid is a combinatorial abstraction of the notion of independence from linear algebra. Following the terminology
used for linear spaces, independent sets that are inclusion-wise maximal are called \emph{bases}. An immediate
corollary of the augmentation property is that all bases have the same number of elements, and this number is referred to as the \emph{rank} of the matroid. 
As is standard in the literature, we assume access to the set functions and the matroid via \emph{oracles}. Specifically, the \emph{value  oracle} for $f$ (respectively, $d$) returns the value $f(S)$ (respectively, $d(S)$) given a set $S$. 
The \emph{independence oracle} associated with a matroid $(V, \cI)$ takes as input a set $S \subseteq V$ and answers whether $S \in \cI$.
The number of oracle  calls performed by an algorithm is commonly used as a proxy for
its time complexity, as the exact complexity may be application specific, or depend on implementation details such as  the data structures used to maintain sets. In our algorithms, the number of queries to the value and independence oracles are the same up to constant factors. We therefore refer to the total number of queries to both as the number of \emph{oracle  calls}.
\section{Omitted Proofs from Section \texorpdfstring{\protect\lowercase{\ref{sec:greedy}}}{SECTION \ref{sec:greedy}}}\label[appendix]{appendix:greedy}
This appendix contains the proofs of \Cref{lemma:dist_lowerbound,thm:greedy_guarantee}

We first recall the notation. Let $\OPT$ be an optimal solution, i.e.,  a subset of size at most $\rk$ that maximizes $\phi$. Since $\phi$ is monotone, we may assume that $\abs{\OPT}=\rk$. Let $S_i$ be the solution at the end of step $i$. Define $A_i = S_{i-1}\cap \OPT$, $B_i=S_{i-1}\setminus A_i$ and $C_i=\OPT\setminus A_i$. 

\begin{lemma}[\cite{ravi1994heuristic}]\label{lemma:ravi_distance}
Suppose $d$ is a pseudometric, and let $X$ and $Y$ be two disjoint subsets of~$V$. Then, $|Y|\cdot d(X)\le (|X|-1)\cdot d(Y,X)$.
\end{lemma}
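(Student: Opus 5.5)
We prove $|Y|\cdot d(X)\le (|X|-1)\cdot d(Y,X)$ using only the triangle inequality, applied to triangles that consist of one vertex of $Y$ and two vertices of $X$. We may assume $|X|\ge 2$. If $|X|\le 1$, then $d(X)=0$ and the right-hand side is nonnegative, so there is nothing to prove.

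Fix any $y\in Y$ and any unordered pair $\{u,v\}\subseteq X$. The triangle inequality for the pseudometric gives
\[
d(u,v)\le d(u,y)+d(y,v).
\]
We sum this over all $\binom{|X|}{2}$ pairs $\{u,v\}\subseteq X$, with $y$ fixed. The left-hand side becomes $d(X)$. On the right-hand side, each $u\in X$ appears in exactly $|X|-1$ pairs, so the term $d(u,y)$ is counted $|X|-1$ times. This yields
\[
d(X)\le (|X|-1)\sum_{u\in X} d(u,y) = (|X|-1)\, d(y\mid X).
\]

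Next we sum this inequality over all $y\in Y$. Because $X$ and $Y$ are disjoint, $\sum_{y\in Y}\sum_{u\in X} d(u,y)=d(Y,X)$ by the definition of $d(\cdot,\cdot)$ for disjoint sets. The left-hand side becomes $|Y|\cdot d(X)$, which gives exactly
\[
|Y|\cdot d(X)\le (|X|-1)\, d(Y,X).
\]

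The only step needing care is the counting, namely that each element of $X$ lies in exactly $|X|-1$ of the pairs. Disjointness is used only to identify the double sum with $d(Y,X)$.
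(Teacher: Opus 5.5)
Your proof is correct and essentially identical to the paper's. Both fix $y\in Y$, apply the triangle inequality $d(u,v)\le d(u,y)+d(y,v)$ to every pair in $X$, and use the fact that each $d(u,y)$ appears $|X|-1$ times to get $d(X)\le(|X|-1)\,d(y,X)$. Both then sum over $y\in Y$ to conclude.
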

\begin{proof}[Proof of \Cref{lemma:ravi_distance}]
    If either set is empty, or if $|X|=1$ then both sides equal zero and the inequality trivially holds. Thus, we assume that both sets are not empty and that $|X|\ge 2$. For distinct $x,x'\in X$ and for $y\in Y$, we have by the triangle inequality $d(x,x') \le d(x,y)+d(x',y)$. Summing over all distinct pairs $x,x'\in X$, we obtain $d(X)\le (|X|-1)d(y,X)$, because each distance $d(y,x)$ appears in the sum exactly $|X|-1$ times. Summing over all $y\in Y$, we get $|Y|d(X)\le (|X|-1)d(Y,X)$.

\end{proof}

We proceed with the proof of  \Cref{lemma:dist_lowerbound}, which we restate here for convenience. 
\distlowerbound*
\begin{proof}[Proof of \Cref{lemma:dist_lowerbound}]
    We first prove the claim for the special case of $C_i = \set{o}$ for some $o\in V\setminus S_{i-1}$, which is not covered by the proof in \cite{borodin2017max}.
    Since $|S_{i-1}|$ increases by $1$ in each round, it follows that $i=\rk$. Hence, it remains to prove that $d(o,S_{i-1}) \ge d(\OPT)/\rk$.
    Note that our assumption implies that $\OPT = S_i \cup \set{o}$. By \Cref{lemma:ravi_distance}, we have 
    \[
        d(S_{i-1}) \le(\abs{S_{i-1}}-1) d(o, S_{i-1}) \le (\rk-1) d(o, S_{i-1}).
    \]
Therefore, 
\begin{align*}
    d(\OPT) = d(S_{i-1}\cup\set{o}) = d(S_{i-1}) + d(o,S_{i-1}) \le \rk\cdot d(o, S_{i-1}).
\end{align*}
For the case $\abs{C_i}>1$, we follow the proof of \citet{borodin2017max}.
By \Cref{lemma:ravi_distance}, the following inequalities hold
\begin{align}
    &(\abs{C_i}-1)d(B_i,C_i) \ge \abs{B_i}d(C_i) \label{eq:borodin1}\\
    &(\abs{C_i}-1)d(A_i,C_i) \ge \abs{A_i}d(C_i) \label{eq:borodin2}\\
    &(\abs{A_i}-1)d(A_i,C_i) \ge \abs{C_i}d(A_i) \label{eq:borodin3}\\
    &d(A_i,C_i) + d(A_i)+d(C_i) = d(\OPT) \label{eq:borodin4}
\end{align}
where \eqref{eq:borodin4} holds since $A_i$ and $C_i$ are disjoint sets whose union equals $\OPT$. Then, we multiply equations \eqref{eq:borodin1},\eqref{eq:borodin2},\eqref{eq:borodin3},\eqref{eq:borodin4} by the following non-negative numbers respectively
\[
    \frac{1}{\abs{C_i}-1},\;\frac{\abs{C_i}-\abs{B_i}}{\rk(\abs{C_i}-1)},\; \frac{i-1}{\rk(\rk-1)},\;\frac{(i-1)\abs{C_i}}{\rk(\rk-1)}.
\]
Summing the multiplied equations, we get
\[
    d(A_i,C_i)+d(B_i,C_i) - \frac{(i-1)\abs{C_i}(\rk-\abs{C_i})}{\rk(\rk-1)(\abs{C_i}-1)}d(C_i) \ge \frac{(i-1)\abs{C_i}}{\rk(\rk-1)}d(\OPT).
\]
Since $\abs{C_i}\le \rk$ and $ d(A_i,C_i)+d(B_i,C_i)=d(C_i,S_{i-1})$,
we have
\[
d(C_i,S_{i-1}) \ge \frac{(i-1)\abs{C_i}}{\rk(\rk-1)}d(\OPT),
\]
and the proof is complete.    
\end{proof}
We are now ready to complete the proof of \Cref{thm:greedy_guarantee}.
\begin{proof}[Proof of \Cref{thm:greedy_guarantee}]
For the query complexity, observe that the algorithm performs $\rk$ iterations, each requiring $O(n)$ oracle queries. The privacy guarantee follows directly from the composition theorems (\Cref{thm:composition}), together with the $\eps_0$-DP property of the exponential mechanism. 

Thus, we proceed with the utility guarantee. Let $f'(S) = f(S)/2$, so that $\phi'(S) = f'(S) + \lambda d(S)$. Consider any iteration $i \in [\rk]$, and let us condition on the set $S_{i-1}$ selected after the first $i-1$ iterations. We have:
\begin{align*}
      & \sum_{u \in C_i} f'(u \mid S_{i-1}) \ge f'(C_i \cup S_{i-1}) - f'(S_{i-1}) \\
      &= f'(\OPT \cup S_{i-1}) - f'(S_{i-1}) \ge f'(\OPT) - f'(S_{i-1}),
\end{align*}

where the first inequality holds by the submodularity of $f'$, and the second by monotonicity. Moreover, by \Cref{lemma:dist_lowerbound}:
\[
    \sum_{u \in C_i} d(u \mid S_{i-1}) = d(C_i, S_{i-1}) \ge \frac{(i-1)|C_i|}{\rk(\rk-1)} d(\OPT).
\]
Letting $\alpha = 4\eps_0^{-1}\Delta \log n$, we obtain:
\begin{align*}
    \Exp[\phi'(u_i \mid S_{i-1})] &\ge \max_{u \in V \setminus S_{i-1}} \phi'(u \mid S_{i-1}) - \alpha\\
    &\ge \frac{1}{|C_i|} \left( \sum_{u \in C_i} \phi'(u \mid S_{i-1}) \right) - \alpha \\
    &\ge \frac{f'(\OPT) - f'(S_{i-1})}{\rk} + \frac{\lambda(i-1)}{\rk(\rk-1)} d(\OPT) - \alpha.
\end{align*}
The first inequality follows from the guarantees of the exponential mechanism, along with the fact that the quality functions $q_D^i$ have sensitivity at most $2\Delta$. The second inequality holds since $C_i \subseteq V \setminus S_{i-1}$.

Removing the conditioning on $S_{i-1}$ and taking the expectation over all its possible realizations, the last inequality yields:
\[
    \Exp[\phi'(u_i \mid S_{i-1})] \ge \frac{f'(\OPT) - \Exp[f'(S_{\rk})]}{\rk} + \frac{\lambda(i-1)}{\rk(\rk-1)} d(\OPT) - \alpha,
\]
where we have used the fact that $\Exp[f'(S_{i-1})] \le \Exp[f'(S_{\rk})]$ due to the monotonicity of $f'$. Summing over all $i \in [\rk]$, we get:
\[
    \Exp[\phi'(S_{\rk})] \ge f'(\OPT) - \Exp[f'(S_{\rk})] + \frac{\lambda}{2} d(\OPT) - \rk\alpha.
\]
Plugging in $\phi' = f/2 + \lambda d$ and $f' = f/2$, we obtain:
\[
    \frac{1}{2} \Exp[f(S_{\rk})] + \Exp[\lambda d(S_{\rk})] \ge \frac{f(\OPT) - \Exp[f(S_{\rk})] + \lambda d(\OPT)}{2} - \rk\alpha.
\]
Rearranging terms gives $\Exp[\phi(S_{\rk})] \ge \frac{1}{2} \phi(\OPT) - \rk\alpha$, which completes the proof.
\end{proof}

\section{Omitted Proofs from Section \texorpdfstring{\protect\lowercase{\ref{sec:sample_greedy}}}{\ref{sec:sample_greedy}}}
\label[appendix]{appendix:sample_greedy}
This appendix contains the missing proofs for the results presented in \Cref{sec:sample_greedy}. 

\begin{proof}[Proof of \Cref{lemma:gain_lowerbound}]
Let $\alpha = 4\eps_0^{-1}\Delta \log n$ and $i \in [k]$. Condition on a realization of $S_{i-1}$ and $V_i$. By \cref{thm:exponential_mech} and the $2\Delta$ sensitivity bound for $q^i_D$, we have
\begin{align*}
    \Ex{\phi'(u_i \mid S_{i-1})} &\ge \max_{v\in V_i}\phi'(v \mid S_{i-1}) - \alpha \ge \max_{v\in V_i\cap M_i}\phi'(v \mid S_{i-1}) - \alpha \\ 
    &\ge \sum_{v\in V_i\cap M_i}\tfrac{\phi'(v \mid S_{i-1})}{|V_i \cap M_i|} -\alpha.
\end{align*}
Note that second inequality holds even if $V_i \cap M_i = \emptyset$ because $\phi'$ is monotone and marginal gains are nonnegative. Removing the conditioning on $V_i$ and taking the expectation over all its realizations gives
\begin{equation}
\label{eq1}
\begin{split}
    \Exp[ \phi'(u_i \mid S_{i-1})] &\ge \E{V_i}\Big[\sum_{v\in V_i\cap M_i}\tfrac{\phi'(v \mid S_{i-1})}{|V_i \cap M_i|}\Big]- \alpha \\
    &\ge \tfrac{1-\gamma}{|M_i|}\sum_{v\in M_i}\phi'(v\mid S_{i-1}) - \alpha.
\end{split}
\end{equation}
To prove the last inequality, let $G$ be the event $V_i \cap M_i \neq \emptyset$. If $\log(1/\gamma) \ge g(i)$, then $\Pr[G]=1$ holds. Otherwise, the sampling in Line~\ref{cardinality_sampling} ensures
\[
    \Pr[G] = \Pr[V_i\cap M_i \neq \emptyset] \ge 1- \paren{1-\tfrac{g(i)}{|N_i|}}^{\frac{|N_i|\log(1/\gamma)}{g(i)}} \ge 1-\gamma.
\]
Let $p_v$ be the probability that $v \in M_i$ appears in $V_i$ conditioned on $G$. By the law of total expectation:
\begin{align*}
    \E{V_i}\Big[\sum_{v\in V_i\cap M_i}\tfrac{\phi'(v \mid S_{i-1})}{|V_i \cap M_i|}\Big] &= \Pr[G] \cdot \sum_{v\in M_i}p_v \phi'(v \mid S_{i-1}) \\
    &\ge \tfrac{1-\gamma}{|M_i|}\sum_{v\in M_i} \phi'(v \mid S_{i-1})
\end{align*}
where the first equality uses the fact that the inner sum is zero conditioned on $\overline{G}$. Because $V_i$ is chosen uniformly, $p_v$ is identical for all $v \in M_i$, thus $p_v = 1/|M_i|$. This yields \eqref{eq1}.
\end{proof}

The proof of the query complexity of \Cref{alg:sample_greedy} is given by the following lemma. 
\begin{lemma}\label{lemma:sample_greedy_complexity}
\Cref{alg:sample_greedy} makes $O(n \log \rk\log(1/\gamma))$ queries when $g(i)=\rk-i+1$ for all $i\in[\rk]$, and $O(n\log(1/\gamma))$ queries when $g(i)=\min\{\rk,\,n-i+1\}$ for all $i\in[\rk]$.
\end{lemma}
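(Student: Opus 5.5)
The plan is to count oracle calls per iteration and then sum over the $\rk$ iterations. In iteration $i$ of \Cref{alg:sample_greedy}, the algorithm evaluates $q^i_D(u)=\phi'_D(u\mid S_{i-1})$ for every $u\in V_i$. Each marginal gain needs $O(1)$ value-oracle calls: the quantities $f(S_{i-1}\cup\{u\})$ and $d(S_{i-1}\cup\{u\})$, with $f(S_{i-1})$ and $d(S_{i-1})$ cached from the previous iteration. The exponential mechanism itself needs no further calls. So the cost of iteration $i$ is $O(|V_i|)$, and
\[
|V_i|=\left\lceil |N_i|\cdot\min\left\{\tfrac{\log(1/\gamma)}{g(i)},1\right\}\right\rceil \le 1+\frac{n\log(1/\gamma)}{g(i)}.
\]
This uses $|N_i|\le n$ and $\min\{a,1\}\le a$. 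The total is therefore $O\big(\rk+n\log(1/\gamma)\sum_{i=1}^{\rk}1/g(i)\big)$, and it remains to bound the harmonic-type sum in each case.

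\textbf{Non-oblivious case, $g(i)=\rk-i+1$.} Here $\sum_{i=1}^{\rk}\frac{1}{\rk-i+1}=H_\rk=O(\log \rk)$. Since $\rk\le n$, the additive $\rk$ term is absorbed, which gives $O(n\log\rk\log(1/\gamma))$. One technicality is $\log\rk=0$ when $\rk=1$. I would handle it by reading the bound with $\log\rk$ replaced by $\max\{1,\log\rk\}$, as is implicit in $O$-notation, or by noting that a single iteration costs $O(n)$. The same convention applies to $\log(1/\gamma)$ when $\gamma$ is near $1$.

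\textbf{Oblivious case, $g(i)=\min\{\rk,n-i+1\}$.} This is the step needing slight care. When $g(i)=\rk$ the term is $n\log(1/\gamma)/\rk$, and summing over at most $\rk$ such iterations gives $n\log(1/\gamma)$. When $g(i)=n-i+1<\rk$, I would not use the $n/g(i)$ bound. Instead, note $|N_i|=n-i+1=g(i)$, so $|V_i|\le\lceil\log(1/\gamma)\rceil$ or simply $|V_i|\le|N_i|$. Each such iteration therefore costs $O(\log(1/\gamma)+1)$, which over at most $\rk\le n$ iterations contributes $O(n\log(1/\gamma))$. More cleanly, $|V_i|\le 1+|N_i|\log(1/\gamma)/g(i)\le 1+\tfrac{n}{\rk}\log(1/\gamma)$ in the first regime and $\le 1+\log(1/\gamma)$ in the second. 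Summing over $\rk$ iterations gives $O(\rk+n\log(1/\gamma))=O(n\log(1/\gamma))$.

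The main obstacle is this accounting: the bound must stay independent of $\rk$ even when $\rk$ is close to $n$. That is why $|N_i|$, not $n$, has to be used in the regime where $g(i)=n-i+1$. The other point to check is that the constant-cost-per-candidate claim holds, i.e., caching $\phi'(S_{i-1})$ is legitimate in the oracle model.
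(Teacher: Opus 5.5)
Your proposal is correct and follows essentially the same argument as the paper. Both bound $|V_i|\le |N_i|\log(1/\gamma)/g(i)+1$, sum over the $k$ iterations, use the harmonic bound $H_k=O(\log k)$ in the non-oblivious case, and exploit $|N_i|=n-i+1$ in the oblivious case. The paper does the last step in one line via $\frac{1}{\min\{k,n-i+1\}}\le\frac{1}{k}+\frac{1}{n-i+1}$, giving $\sum_i \frac{n-i+1}{g(i)}\le n+k$, which is equivalent to your two-regime split. The paper also charges $|V_i|+1$ calls per iteration rather than caching $\phi'(S_{i-1})$, which changes nothing asymptotically.
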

\begin{proof}
   The number of queries performed by \Cref{alg:sample_greedy} in iteration $i$ is  $|V_i|+1$, and we note that
\[
    |V_i|\le \frac{\abs{N_i}\ \log(1/\gamma)}{g(i)} + 1.
\]
Since a new element is added to $S_i$ in each iteration, we have $|N_i| = |V\setminus S_{i-1}| =  n-i+1$.
Hence, the total number of queries is bounded by
\begin{align*}
    \sum_{i=1}^\rk |V_i|+1 &\le 2\rk + \log(1/\gamma)\sum_{i=1}^\rk \frac{n-i+1}{g(i)}.
\end{align*}
If $g(i)=\rk-i+1$, the standard harmonic sum bound yields  
$\sum_{i=1}^\rk \frac{n-i+1}{g(i)}=O(n\log\rk)$, and the first part of the lemma follows. If $g(i)=\min\{\rk,\,n-i+1\}$, we have 
\[
    \sum_{i=1}^\rk \frac{n-i+1}{g(i)}\le \sum_{i=1}^\rk \frac{n-i+1}{\rk}+\sum_{i=1}^\rk \frac{n-i+1}{n-i+1} \le n+\rk=O(n),
\]
which implies the second part.
\end{proof}

We now turn to complete the proof for the privacy guarantee stated in \Cref{thm:sample_greedy_decomposable_non_oblivious}.
To this end, we utilize the following concentration bound.
\begin{lemma}[Claim 3.8 in \cite{chaturvedi2021differentially}, based on \cite{gupta2010differentially}]\label{lemma:concentration}
    Consider an $n$-round probabilistic process. In each round $i\in[n]$, an adversary chooses a distribution $\cP_i$ over $[0,1]$ and a sample $R_i$ is drawn from this distribution. Let $Z_1 =1$ and $Z_{i+1}=Z_i-R_iZ_i$. We define the random variable $Y_j = \sum_{i=j}^n Z_i\Exp[R_i]$. Then for any $j\in[n]$,
$
        \Pr[Y_j \ge qZ_j] \le \exp(3-q) .
 $ 
\end{lemma}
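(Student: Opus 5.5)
The plan is to prove the bound with an exponential supermartingale, reducing to normalized quantities. Fix $j$ and condition on the history up to round $j$, so that $Z_j>0$ is fixed (if $Z_j=0$ then $Y_j=0$ and there is nothing to prove). Write $W_i = Z_i/Z_j\in[0,1]$ and $\mu_i=\Exp[R_i]$. Both are determined by the adversary's choices before $R_i$ is drawn, so each is measurable with respect to the past. The event becomes $\sum_{i\ge j} W_i\mu_i \ge q$.

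The structural fact to exploit is the telescoping identity
\[
\sum_{i=j}^{n} W_i R_i \;=\; \sum_{i=j}^n (W_i - W_{i+1}) \;=\; 1 - W_{n+1} \;\le\; 1 ,
\]
which holds because $W_{i+1}=W_i(1-R_i)$. Thus the realized weighted sum is always at most $1$, and the task is to show that its predictable compensator $\sum_i W_i\mu_i$ rarely exceeds it by much. This is exactly the situation of \cite{gupta2010differentially}: a ``self-bounding'' martingale, whose conditional variance is controlled by the compensator itself.

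Concretely, I will consider the potential
\[
\Phi_t = \exp\Big(\sum_{i=j}^{t-1} W_i\mu_i + c\,W_t\Big)
\]
for a suitable constant $c$, and show it is a supermartingale. Conditionally on the past, the required step inequality is
\[
\Exp\big[e^{-cW_iR_i}\big] \le e^{-W_i\mu_i} .
\]
By convexity of $r\mapsto e^{-cWr}$ on $[0,1]$, the left side is at most $1-\mu_i(1-e^{-cW_i})$. So it suffices to compare $1-e^{-cW}$ with $W$ for $W\in[0,1]$, which is where concavity of $1-e^{-cW}$ enters.

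Given the supermartingale property, $\Exp[\Phi_{n+1}]\le \Phi_j = e^{c}$. Since $W_{n+1}\ge 0$, Markov's inequality then gives
\[
\Pr\Big[\sum_{i\ge j} W_i\mu_i\ge q\Big]\le e^{c-q}.
\]
The target is $c\le 3$.

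I expect the constant bookkeeping in this step inequality to be the main obstacle. With the pure exponential weight $1$ on $W_i\mu_i$, the inequality $1-e^{-cW}\ge W$ fails for $W$ close to $1$ for any finite $c$. Convexity alone therefore only yields a bound of the form $e^{1-(1-e^{-1})q}$.

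My first attempt at repairing this will be to split each round according to the size of $W_i\mu_i$. When $\mu_i$ is small, use the second-order bound $e^{-x}\le 1-x+x^2/2$, whose quadratic term is absorbed because $W_i^2\mu_i\le W_i\mu_i$ and $\sum W_i R_i \le 1$. Rounds with large $\mu_i$ can contribute only $O(1)$ in total, since $W$ shrinks geometrically in expectation through them. The slack from these $O(1)$ contributions is what the additive constant $3$ in $e^{3-q}$ is meant to absorb, following the argument of Claim~3.8 in \cite{chaturvedi2021differentially}.
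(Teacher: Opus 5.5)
\textbf{There is a genuine gap: no bound on the moment generating function of $Y_j/Z_j$ can yield $e^{3-q}$.} Your reduction to $W_i=Z_i/Z_j$ is correct, as is the telescoping identity $\sum_iW_iR_i=1-W_{n+1}\le 1$. Your diagnosis that the step inequality fails near $W=1$ is also right. The repair cannot succeed, though, because the failure is not an artifact of the convexity estimate. It affects any argument that ends with Markov's inequality applied to $\exp\big(\sum_iW_i\mu_i+O(1)\big)$, however the rounds are split.

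Let the adversary play $\mathrm{Bernoulli}(\eta)$ in every round. Then $W_i\in\{0,1\}$ and $Y_1=\eta\min\{G,n\}$ with $G$ geometric of parameter $\eta$, so
\[
\Exp\big[e^{Y_1}\big]=\eta e^{\eta}\sum_{g=0}^{n-1}\rho^{g}+\rho^{n},\qquad \rho=(1-\eta)e^{\eta}=1-\tfrac{\eta^{2}}{2}+O(\eta^{3}).
\]
As $n\to\infty$ this tends to $\eta e^{\eta}/(1-\rho)\approx 2/\eta$, which is unbounded. Since $\Phi_{n+1}\ge e^{Y_1-|c|}$, no choice of $c$ makes your $\Phi_t$ a supermartingale against all adversaries.

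This adversary uses only small $\mu_i$, which is exactly the regime your second-order bound was meant to handle. There the quadratic term $\tfrac{c^2}{2}W_i^2\Exp[R_i^2]\le\tfrac{c^2}{2}W_i^2\mu_i$ is again a compensator. The bound $\sum_iW_iR_i\le 1$ controls realized rather than expected increments, so absorbing the quadratic term only lowers the weight on $\sum_iW_i\mu_i$.

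A weight $\theta<1$ does not rescue the statement either. For these adversaries $\sup_{\eta,n}\Exp[e^{\theta Y_1}]=1/(1-\theta)$. A moment-generating-function bound followed by Markov therefore gives at best $\inf_{\theta<1}e^{-\theta q}/(1-\theta)=qe^{1-q}$, which exceeds $e^{3-q}$ once $q>e^{2}$. The true tail here is about $e^{-q}$, so the stated bound needs a potential that depends on $q$.

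\textbf{The missing idea.} Let the threshold enter through the ratio of the remaining budget to the current weight. Set $s_t=(q-\sum_{i=j}^{t-1}W_i\mu_i)/W_t$. Then $s_{t+1}=(s_t-\mu_t)/(1-R_t)$, and, with the obvious conventions when $W_t=0$, $\min\{1,e^{3-s_t}\}$ is a supermartingale. The paper itself gives no proof of this lemma; it imports it from \cite{chaturvedi2021differentially}. As far as I recall, the argument in \cite{gupta2010differentially} behind that claim is essentially the following backward induction on $j$.
\begin{itemize}
\item For $q<3$ there is nothing to prove.
\item For $q\ge 3$, condition on the history up to round $j$. If $R_j<1$, the event $Y_j\ge qZ_j$ coincides with $Y_{j+1}\ge\frac{q-\mu_j}{1-R_j}Z_{j+1}$; if $R_j=1$, it is impossible.
\item The induction hypothesis then gives $\Pr[Y_j\ge qZ_j]\le e^{3}\,\Exp[h(R_j)]$, where $h(r)=\exp\big(-\frac{a}{1-r}\big)$, $h(1)=0$, and $a=q-\mu_j\ge 2$.
\item Writing $u=1/(1-r)$, one has $h''(r)=au^{3}(au-2)e^{-au}\ge 0$, so $h$ is convex on $[0,1]$.
\item Hence $\Exp[h(R_j)]\le(1-\mu_j)h(0)=(1-\mu_j)e^{-(q-\mu_j)}\le e^{-q}$.
\end{itemize}
The requirement $a=q-\mu_j\ge 2$ with $\mu_j\le 1$ is where the constant $3$ comes from.

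A minor point: with the non-strict inequality, the case $Z_j=0$ is not vacuous, since then $Y_j=0\ge qZ_j$ holds surely. The lemma should be read for $Z_j>0$, which covers its only use, $j=1$.
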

The above bound and the argument of \citet{gupta2010differentially,chaturvedi2021differentially} yields the following lemma. We continue with the notation established in the proof of \Cref{thm:sample_greedy_decomposable_non_oblivious}.
\begin{lemma}\label{lemma:sum_expec_bound}
The following inequality holds.
\begin{align*}
\Pr\left[\prod_{i=1}^\rk \E{u\sim\cP_i}[\exp(\tfrac{\eps_0}{2}\cdot \beta^i_t (u))]\ge   (e^{\eps_0/2}-1)(3+\log(1/\delta)) \right] \le \delta
\end{align*}
\end{lemma}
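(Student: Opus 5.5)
The plan is to convert the product of moment-generating-type factors into a single exponential of a sum of expected marginal gains, and then bound that sum using the concentration bound of \Cref{lemma:concentration}. Throughout, the probability is over the output sequence $U=(u_1,\dots,u_\rk)$ produced by running \Cref{alg:sample_greedy} on $D$. I read the threshold in the statement as applying to the logarithm of the product, i.e.\ the event is that the product is at least $\exp\big((e^{\eps_0/2}-1)(3+\log(1/\delta))\big)$; this is the form used in the proof sketch of \Cref{thm:sample_greedy_decomposable_non_oblivious}.

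\textbf{Step 1: linearize each factor.} Here $x$ is the single record in which $D$ and $D'$ differ, and $\beta^i_x(u)=\phi'_x(u\mid\{u_1,\dots,u_{i-1}\})$ is its marginal gain.
\begin{itemize}
\item Since $\phi'$ is $1$-decomposable, $\phi'_x$ takes values in $[0,1]$. Since $\phi'_x$ is monotone, each marginal gain $\beta^i_x(u)$ also lies in $[0,1]$.
\item Convexity of $y\mapsto e^{ay}$ on $[0,1]$ with $a=\eps_0/2$ gives $e^{ay}\le 1+(e^{a}-1)y$.
\item Taking expectations and using $1+z\le e^z$,
\[
\textstyle\prod_{i=1}^{\rk}\E{u\sim\cP_i}\big[e^{a\beta^i_x(u)}\big]\le \exp\Big((e^{a}-1)\sum_{i=1}^{\rk}\E{u\sim\cP_i}[\beta^i_x(u)]\Big).
\]
\end{itemize}
It therefore suffices to show that $Y:=\sum_i \E{u\sim\cP_i}[\beta^i_x(u)]\ge 3+\log(1/\delta)$ holds with probability at most $\delta$.

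\textbf{Step 2: cast $Y$ as the process of \Cref{lemma:concentration}.} Define the quantities as follows.
\begin{itemize}
\item Let $Z_i=1-\phi'_x(S_{i-1})$ be the remaining headroom of $x$'s function. Then $Z_i\in[0,1]$ and $Z_1\le 1$; one can rescale, or pad with $\phi'_x(\emptyset)$, to get $Z_1=1$ exactly.
\item Let $R_i=\beta^i_x(u_i)/Z_i$, with $R_i=0$ if $Z_i=0$. Then $R_i\in[0,1]$, because $\phi'_x(S_{i-1}+u)\le 1$.
\item By construction $Z_{i+1}=Z_i-R_iZ_i$.
\end{itemize}
At round $i$ the adversary's distribution over $R_i$ is the law of $\beta^i_x(u)/Z_i$ under $u\sim\cP_i$, which is determined by the history. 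Hence $Z_i\Exp[R_i]=\E{u\sim\cP_i}[\beta^i_x(u)]$, so $Y=Y_1$. Applying \Cref{lemma:concentration} with $j=1$ and $q=3+\log(1/\delta)$ gives $\Pr[Y\ge q]\le e^{3-q}=\delta$.

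\textbf{Main obstacle.} The difficulty is justifying that $u_i$ is really drawn from $\cP_i$ in this process. In the ratio bound, $\cP_i$ is the exponential-mechanism distribution restricted to a realized candidate set $T_i$, and it is weighted by $\beta^i_D$. The actual draw of $u_i$ in a run on $D$ first samples $V_i$ and then samples from the exponential mechanism over $V_i$.

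My plan is to enlarge the adversary's power. In each round the adversary first chooses $V_i$ (sampling it uniformly, or picking $T_i$ adaptively) and then announces $\cP_i$. \Cref{lemma:concentration} allows arbitrary adaptive choices of distributions, so conditioning on the history, including the sampled $V_i$, is legitimate. I would also verify two further points. First, the weights in $\cP_i$ use $D$ (which excludes $x$, in Case 2), so that $\cP_i$ coincides with the algorithm's selection law on the dataset generating $U$. Second, the rounds where $Z_i=0$ contribute nothing to either side.
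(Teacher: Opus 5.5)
Your Steps~1 and~2 are the paper's proof essentially line for line. You linearize with the convexity bound $e^{ay}\le 1+(e^{a}-1)y$ and then $1+z\le e^{z}$. You then run the process of \Cref{lemma:concentration} with $Z_i=1-\phi'_x(S_{i-1})$, $R_i=\beta^i_x(u_i)/Z_i$ and $q=3+\log(1/\delta)$. The paper handles $Z_1\le 1$ via $\Pr[Y_1\ge q]\le\Pr[Y_1\ge qZ_1]$, which is your rescaling. Your reading of the threshold as $\exp\bigl((e^{\eps_0/2}-1)(3+\log(1/\delta))\bigr)$ is right: each factor is at least $1$, while the literal threshold can be below $1$. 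The paper proves the lemma only for a process in which each $u_i$ is, by definition, drawn from $\cP_i$. For that statement your Steps~1--2 already finish the job.

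The concern you raise is real, and the paper does not address it, but your bridge does not resolve it. Your own framing needs that bridge: the probability is over runs of $\cG(D)$, and $\cP_i$ is the exponential mechanism on the worst-case set $T_i$ of \eqref{eq:factor}.
\begin{itemize}
\item $T_i$ is a maximizer over candidate sets \emph{containing $u_i$}, so it is a function of $u_i$. No adversary can announce it before $u_i$ is drawn.
\item A process that draws $u_i$ from $\ExpMech$ on an adversarially chosen set has a different law from $\cG(D)$, so its tail bound says nothing about runs of $\cG(D)$. For the same reason, $\cP_i$ does not ``coincide with the algorithm's selection law''; that law is a mixture over the uniformly sampled $V_i$.
\item In your framing the claim is actually false. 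Take $\lambda=0$ and give $x$ and many records of $D$ the function that equals $1$ if $w\in S$ and $0$ otherwise. Until $w$ is selected, every worst-case $T_i$ contains $w$, so $\mathbb{E}_{u\sim\cP_i}[\beta^i_x(u)]\approx 1$. Under $\cG(D)$, however, $w$ enters $V_i$ only with probability about $\log(1/\gamma)/g(i)$ per round. For \DPOSG the sum is therefore of order $\rk/\log(1/\gamma)$ with constant probability.
\item Your ``sample $V_i$ uniformly'' option is a legitimate process. But it controls $\prod_i\mathbb{E}_{u\sim\ExpMech(V_i,D)}[\cdot]$ over the \emph{realized} $V_i$, which is not the product in \eqref{eq:factor}.
\end{itemize}

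What works is to change the ratio bound rather than the adversary: bound the privacy loss of the mechanism that outputs $(V_1,u_1,\dots,V_\rk,u_\rk)$.
\begin{itemize}
\item The sampling probabilities are data-independent and cancel, so each factor becomes $\Pr[\ExpMech(V_i,D)=u_i]/\Pr[\ExpMech(V_i,D')=u_i]$.
\item Both cases of the paper's argument go through with $T_i$ replaced by $V_i$.
\item Now $u_i\sim\cP_i$ holds literally. Fixing the sampling coins makes the adversary deterministic and adaptive, so \Cref{lemma:concentration} applies.
\item Post-processing then yields the bound for $U$.
\end{itemize}
This eliminates the $\max_T$ step rather than trying to justify it.
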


\begin{proof}
 For all $\eps_0 \in (0,1]$, we have
\begin{align*}
 \prod_{i=1}^\rk \E{u\sim\cP_i}[\exp(\tfrac{\eps_0}{2}\cdot \beta^i_t (u))] &\le  \prod_{i=1}^\rk \E{u\sim\cP_i}[1+(e^{\eps_0/2}-1)\cdot\beta^i_t (u)] \\
 &\le \exp\left( (e^{\eps_0/2}-1)\sum_{i=1}^\rk \E{u\sim\cP_i}[\beta^i_t (u)]\right).
\end{align*}
where the first inequality holds since $e^x\le 1+\tfrac{e^{\eps_0/2}-1}{\eps_0/2}x$ for all $x\in[0,\eps_0/2]$, and the second uses $1+x\le e^x$ for all $x$.
It remains to show that the sum of expectations is bounded with high probability.
Consider the following probabilistic process. In each round $i\in[\rk]$, $u_i$ is drawn from $\cP_i$. Let $Z_i = 1- \phi'_t(U_{i-1})$ be the total remaining marginal utility at the beginning of  iteration $i$. Define the random variable $R_{i-1}(u) =  \beta^i_t (u)/Z_{i-1}$
which is determined by the sample of $u\sim \cP_i$, and denotes the percentage increase in utility in  iteration $i$  (If $Z_i=0$, we set $R_i=0$). Since $\phi'_t$ is monotone and has range in $[0,1]$, we have $R_{i-1}(u)\in[0,1]$. Note that using these definitions, we have $Z_i = Z_{i-1}-R_{i-1} Z_{i-1}$. Define, $Y_j = \sum_{i=j}^\rk Z_i\cdot \Ex{u\sim\cP_i}{R_i(u)}$, and observe that the sum of interest is exactly $Y_1$. Letting $q= 3+\log(1/\delta) $ and using \Cref{lemma:concentration}, we conclude that 
\begin{align*}
\Pr\left[\sum_{i=1}^\rk \E{u\sim\cP_i}[\beta^i_t (u) ] \ge q\right] &= \Pr[Y_1 \ge q] \le \Pr[Y_1 \ge qZ_1]  \le \exp(3-q)   = \delta
\end{align*}
where the second inequality holds since $Z_1\in [0,1]$.
\end{proof}

\begin{proof}[Proof of \Cref{thm:sample_greedy_decomposable_non_oblivious}]
Let $D$ and $D'$ be two datasets such that $(D\setminus D')\cup (D'\setminus D)=\{x\}$. Suppose that instead of a set, \Cref{alg:sample_greedy} outputs the sequence of selected elements in their order of selection. Let $U = (u_1, \dots, u_\rk)$ be any such sequence, and $U_i = \{u_1, \dots, u_i\}$ denote the prefix set consisting of the first $i$ elements.

For any element $u \in V$, let $\beta_D^i(u) = \sum_{x \in D} \phi'_x(u \mid U_{i-1})$. By the definition of the exponential mechanism, the probability of selecting element $u$ at iteration $i$, given dataset $D$ and a sampled candidate set $V_i$, is:
\[
 \Pr[\ExpMech(q_i, V_i, D, \eps_0) = u] = \frac{\exp\left(\frac{\eps_0}{2} \beta_D^i(u)\right)}{\sum_{u' \in V_i} \exp\left(\frac{\eps_0}{2} \beta_D^i(u')\right)}.
\]
where we have used that since $\phi'_D$ is $1$-decomposable, it has a sensitivity $1/|D|$.
The expression for $D'$ follows analogously. Let $\cG$ denote \Cref{alg:sample_greedy}. By the chain rule, the probability of $\cG$ outputting sequence $U$ is:
\begin{align*}
 \Pr[\cG(D) = (u_1, \dots, u_\rk)] = \prod_{i=1}^{\rk} \Pr[\cG(D)_i = u_i \mid S_{i-1} = U_{i-1}].
\end{align*}
Let $\cV_i$ denote the collection of all subsets of $N_i$ of the size specified in Line~\ref{cardinality_sampling}. Since $V_i$ is sampled uniformly at random from $\cV_i$, for every $T \in \cV_i$, we have $\Pr[V_i = T] = p_i$, where $p_i = 1/|\cV_i|$ is independent of the dataset $D$. By the law of total probability, the $i$-th factor in the product above is:
\[
 \Pr[\cG(D)_i = u_i \mid S_{i-1} = U_{i-1}] = \sum_{T \in \cV_i: u_i \in T} \Pr[\ExpMech(q_i, T, D, \eps_0) = u_i] \cdot p_i.
\]
Note that the selection probability is zero if $u_i \notin V_i$. Using the fact that $\frac{\sum_j a_j}{\sum_j b_j} \le \max \frac{a_j}{b_j}$ for $a_j \ge 0$ and $b_j > 0$, we have:
\begin{align*}
\frac{\Pr[\cG(D)_i = u_i \mid S_{i-1} = U_{i-1}]}{\Pr[\cG(D')_i = u_i \mid S_{i-1} = U_{i-1}]} & = \frac{\sum_{T \in \cV_i, u_i \in T} \Pr[\ExpMech(q_i, T, D, \eps_0) = u_i]}{\sum_{T \in \cV_i, u_i \in T} \Pr[\ExpMech(q_i, T, D', \eps_0) = u_i]} \\
& \le \max_{T \in \cV_i: u_i \in T} \left[ \frac{\Pr[\ExpMech(q_i, T, D, \eps_0) = u_i]}{\Pr[\ExpMech(q_i, T, D', \eps_0) = u_i]} \right].
\end{align*}
Let $T_i$ be the candidate set attaining the maximum above. Taking the product over all iterations $\rk$:
\begin{align}
 \frac{\Pr[\cG(D) = U]}{\Pr[\cG(D') = U]} \le \left( \prod_{i=1}^\rk \frac{\exp(\frac{\eps_0}{2} \beta^i_D(u_i))}{\exp(\frac{\eps_0}{2} \beta^i_{D'}(u_i))} \right) \left( \prod_{i=1}^\rk \frac{\sum_{u \in T_i} \exp(\frac{\eps_0}{2} \beta^i_{D'}(u))}{\sum_{u \in T_i} \exp(\frac{\eps_0}{2} \beta^i_D(u))} \right).
\label{eq:factor}
\end{align}
We consider two cases.

\textbf{Case 1: $D = D' \cup \{x\}$.}
The first factor in \eqref{eq:factor} is bounded by \[\exp\Big(\tfrac{\eps_0}{2} \sum_{i=1}^\rk \phi'_x(u_i \mid U_{i-1})\Big) \le \exp(\tfrac{\eps_0}{2})\]
due to $1$-decomposability. The second factor is at most $1$ since $\beta_D^i(u) \ge \beta_{D'}^i(u)$.

\textbf{Case 2: $D' = D \cup \{x\}$.} The first factor is at most $1$, while the second factor becomes:
\begin{align}
 &\prod_{i=1}^\rk \frac{\sum_{u \in T_i} \exp(\frac{\eps_0}{2} \phi'_x(u \mid U_{i-1})) \exp(\frac{\eps_0}{2} \beta^i_D(u))}{\sum_{u \in T_i} \exp(\frac{\eps_0}{2} \beta^i_D(u))}\\
 &= \prod_{i=1}^\rk \mathbb{E}_{u \sim P_i} \left[ \exp\left(\tfrac{\eps_0}{2} \phi'_x(u \mid U_{i-1})\right) \right],
\label{eq:product2}
\end{align}
where $P_i$ is a distribution supported on $T_i$ with weights proportional to $\exp(\frac{\eps_0}{2} \beta^i_D(u))$. To bound this product of expectations, we apply the concentration bound from \cite{gupta2010differentially,chaturvedi2021differentially}. By \Cref{lemma:sum_expec_bound}, with probability at least $1-\delta$:
\[
 \frac{\Pr[\cG(D) = U]}{\Pr[\cG(D') = U]} \le \exp\left( (e^{\eps_0/2} - 1)(3 + \log(1/\delta)) \right).
\]
Combining both cases, for any neighboring $D, D'$ (under the change-one-element definition), with probability at least $1-\delta$:
\begin{align*}
 \frac{\Pr[\cG(D) = U]}{\Pr[\cG(D') = U]} &\le \exp \left( \frac{\eps_0}{2} + (e^{\eps_0/2} - 1)(3 + \log(1/\delta)) \right) \\
 &\le \exp \left( (e^{\eps_0/2} - 1)(4 + \log(1/\delta)) \right).
\end{align*}
Thus, \Cref{alg:sample_greedy} is $(\eps, \delta)$-DP with $\eps = (e^{\eps_0/2} - 1)(4 + \log(1/\delta))$.
\end{proof}

The remainder of this appendix is devoted to the proof of \Cref{thm:sample_greedy_guarantee_oblivious}, restated next.
\ObliviousSampleGreedy*

\DistMarginalGain*

\begin{proof}[Proof of \Cref{lemma:dist_marginal_gain}]
By \Cref{lemma:ravi_distance}, we have
$
\abs{S_{i-1}} \cdot d(C_i) \le (\abs{C_i} - 1) \cdot d(C_i, S_{i-1})
$.
Moreover, since $|S_{i-1}| = i-1$ and $|C_i| \le \rk$, we have $ d(C_i) \le \frac{\rk  - 1}{i-1} \cdot d(C_{i},S_{i-1})$. Therefore,
\begin{align*}
    d(\OPT\cup S_{i-1}) -d(S_{i-1}) &= d(C_i \cup S_{i-1}) -d(S_{i-1}) = d(C_i)+d(C_i,S_{i-1}) \le \paren{1+\frac{\rk -1}{i-1}}d(C_i,S_{i-1}).
\end{align*}
where the second equality holds because $S_{i-1}$ and $C_i$ are disjoint.
\end{proof}

\begin{lemma}\label{lemma:technical_prod}
    The following inequalities hold:
    \[
        \prod_{i=1}^{\rk -1}\paren{1-\frac{1-\gamma}{\rk (1+\frac{\rk -1}{i})}} \le \paren{\frac{2}{e}}^{(1-\gamma)(1-1/\rk)} \le \frac{2}{e}+\gamma+\frac{1}{\rk}.
    \]
\end{lemma}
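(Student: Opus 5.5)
We bound the logarithm of the product and then linearize the exponential. Write $a = 1-\gamma$ and note that $\frac{1}{\rk(1+\frac{\rk-1}{i})} = \frac{i}{\rk(i+\rk-1)}$. Using $1-x\le e^{-x}$ for each factor (all factors lie in $(0,1]$ since $\frac{i}{\rk(i+\rk-1)}\le 1$), the product is at most
\[
\exp\Big(-a\sum_{i=1}^{\rk-1}\frac{i}{\rk(i+\rk-1)}\Big).
\]
So the first inequality reduces to showing
\[
\sum_{i=1}^{\rk-1}\frac{i}{\rk(i+\rk-1)} \ge (1-\log 2)\Big(1-\frac{1}{\rk}\Big),
\]
because then the exponent is at most $-a(1-\log 2)(1-1/\rk)$, and $e^{-(1-\log 2)} = 2/e$, which gives exactly $(2/e)^{(1-\gamma)(1-1/\rk)}$.

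For the sum, write $\frac{i}{i+\rk-1} = 1-\frac{\rk-1}{i+\rk-1}$. This gives
\[
\sum_{i=1}^{\rk-1}\frac{i}{\rk(i+\rk-1)} = \frac{\rk-1}{\rk}\Big(1-\sum_{j=\rk}^{2\rk-2}\frac{1}{j}\Big).
\]
Since $1/j$ is decreasing, the integral comparison gives $\sum_{j=\rk}^{2\rk-2}\frac1j\le\int_{\rk-1}^{2\rk-2}\frac{dx}{x}=\log 2$. The sum is therefore at least $\frac{\rk-1}{\rk}(1-\log 2)$, which is what we need. This is the main step. The care needed is only in choosing the integral bounds in the right direction (left endpoints $\rk-1,\dots,2\rk-3$ dominate the terms). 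The case $\rk=1$ is trivial: the product is empty and equals $1=(2/e)^0$.

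For the second inequality, set $c=(1-\gamma)(1-1/\rk)\in[0,1]$ and $b=2/e<1$. The function $b^{c}$ is convex in $c$, so on $[0,1]$ it lies below the chord: $b^{c}\le 1-c(1-b)$. Then
\[
1-c(1-b) = b + (1-c)(1-b) \le b + (1-c),
\]
and $1-c = \gamma + \frac{1}{\rk} - \frac{\gamma}{\rk} \le \gamma + \frac{1}{\rk}$. Together these give $(2/e)^{c}\le \frac{2}{e}+\gamma+\frac1\rk$. No real obstacle is expected here.
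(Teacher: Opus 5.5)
Your proposal is correct and follows the paper's proof of the first inequality essentially step for step: the same rewriting $\frac{i}{i+k-1}=1-\frac{k-1}{i+k-1}$, the same integral bound $\sum_{j=k}^{2k-2}\frac1j\le\log 2$, and your use of $1-x\le e^{-x}$ is equivalent to the paper's $\log(1-x)\le -x$. For the second inequality you apply the convexity chord bound directly to $c\mapsto(2/e)^c$ on $[0,1]$, whereas the paper first lowers the exponent to $1-\gamma-1/k$ and then applies $e^x\le 1+(e-1)x$; both are chord bounds, and yours is slightly more direct because it avoids checking that $\log(e/2)(\gamma+1/k)\in[0,1]$ and that $\frac{2}{e}(e-1)\log(e/2)\le 1$.
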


\begin{proof}
We first evaluate the sum within the exponent. We have:
    \begin{align*}
          \frac{1}{\rk}\sum_{i=1}^{\rk -1}\frac{i}{i+\rk-1} &=   \frac{1}{\rk}\sum_{i=1}^{\rk -1}\paren{1-\frac{\rk -1}{i+\rk-1}}
          = \frac{\rk -1}{\rk} -\frac{\rk -1}{\rk}\cdot \sum_{i=1}^{\rk -1}\frac{1}{i+\rk-1}\\
          & = \frac{\rk -1}{\rk} -\frac{\rk -1}{\rk}\cdot \sum_{j=\rk}^{2\rk-2}\frac{1}{j}  
           \ge  \frac{\rk -1}{\rk} -\frac{\rk -1}{\rk}\cdot \log(2) \\
          & = \Big(1-\frac{1}{\rk}\Big)(1-\log 2),
    \end{align*}
    where in the inequality we have used the integral bound:
    \[
        \sum_{j=\rk}^{2\rk-2}\frac{1}{j} \le \int_{\rk -1}^{2\rk-2}\frac{1}{x}\,dx =  \log(2\rk-2)-\log(\rk-1) = \log(2).
    \]
   We next observe that the bound $\log(1 - x) \le -x$, which holds for all $x \in (0,1)$, together with the preceding derivation, implies that:
    \begin{align*}
         \sum_{i=1}^{\rk -1}\log\paren{1-\frac{1-\gamma}{\rk (1+\frac{\rk -1}{i})}} \le -(1-\gamma) \sum_{i=1}^{\rk -1}\frac{1}{\rk (1+\frac{\rk -1}{i})} \le -(1-\gamma)\Big(1-\frac{1}{\rk}\Big)(1-\log 2).
    \end{align*}
    The first inequality in the lemma follows by exponentiating both sides of the result above. For the second inequality, observe that:
    \begin{align*}
        \paren{2/e}^{(1-\gamma)(1-1/\rk)} &\le  \paren{2/e}^{(1-\gamma -1/\rk)} = e^{(1-\gamma - 1/\rk)\log(2/e)}
        = \frac{2}{e}\cdot e^{\log(e/2)(\gamma + 1/\rk)}\\
        &\le \frac{2}{e}\cdot (1+(e-1)\log(e/2)(\gamma+1/\rk)) \le \frac{2}{e} +\gamma +\frac{1}{\rk},
    \end{align*}
where the second inequality follows from the fact that $e^x \le 1+(e-1)x$ for all $x\in[0,1]$.
\end{proof}
\section{Omitted Proofs from Section \texorpdfstring{\protect\lowercase{\ref{sec:matroid_local_search}}}{\ref{sec:matroid_local_search}}}
\label[appendix]{appendix:localsearch}
This appendix contains the missing proofs for the results presented in \Cref{sec:matroid_local_search}. 
The following lemma is a  known property of matroids. 
\begin{lemma}[\cite{brualdi1969comments}]\label{lemma:matroid_property}
Let $X$ and $Y$ be two bases of a matroid $\cM= (V,\cI)$. Then, there exists a bijection
$h:X\to Y$ such that for every $u\in X$, $Y-h(u)+u\in\cI$,  and $h(u)=u$ for every $u\in X\cap Y$.
\end{lemma}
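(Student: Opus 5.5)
The plan is to reduce the statement to Hall's marriage theorem on a bipartite graph between $X\setminus Y$ and $Y\setminus X$. On $X\cap Y$ we set $h(u)=u$. For such $u$ we have $Y-h(u)+u=Y\in\cI$, so these elements impose no constraint. Since $X$ and $Y$ are both bases, $|X|=|Y|$, and hence $|X\setminus Y|=|Y\setminus X|$. It therefore suffices to find a perfect matching between $X\setminus Y$ and $Y\setminus X$ in the graph with an edge $(u,y)$ whenever $Y-y+u\in\cI$.

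First, we characterize the edges through fundamental circuits. For $u\in X\setminus Y$, the set $Y+u$ is dependent, because $Y$ is a base and so a maximal independent set. Standard matroid theory gives a unique circuit $C(u)\subseteq Y+u$, and $C(u)$ contains $u$. For $y\in Y$, the set $Y-y+u$ is independent if and only if $y\in C(u)$: deleting $y$ must break the unique circuit. Thus the neighbourhood of $u$ in the graph is exactly $C(u)\cap(Y\setminus X)$. Elements of $C(u)$ lying in $X\cap Y$ are deliberately excluded, since they are already fixed by $h$.

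Next, we verify Hall's condition by contradiction. Let $A\subseteq X\setminus Y$ and let $N(A)=\bigcup_{u\in A}C(u)\cap(Y\setminus X)$. For each $u\in A$ we have $C(u)-u\subseteq (X\cap Y)\cup N(A)$, so $u$ lies in the span (closure) of $(X\cap Y)\cup N(A)$. Hence the set $A\cup(X\cap Y)$ lies in this span. That span has rank at most $|X\cap Y|+|N(A)|$. On the other hand, $A\cup(X\cap Y)\subseteq X$ is independent of size $|X\cap Y|+|A|$. Therefore $|N(A)|\ge|A|$. Hall's theorem then yields a matching saturating $X\setminus Y$, which is perfect because the two sides have equal size. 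This matching together with the identity on $X\cap Y$ gives the bijection $h$.

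The main obstacle is the Hall step, and in particular the point that circuit elements in $X\cap Y$ cannot serve as matching partners. This is handled by including $X\cap Y$ in the spanning set used in the rank comparison, rather than by counting only elements of $Y\setminus X$. The remaining ingredients are standard matroid facts that we will invoke without proof: the existence and uniqueness of the fundamental circuit, and the monotonicity and submodularity of rank that underlie the closure argument.
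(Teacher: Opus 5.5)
Your proof is correct. The paper gives no argument for this lemma at all: it is cited from Brualdi (1969) and used as a black box in \Cref{lemma:sum_swaps_lowerbound}, so your proposal replaces a citation with a self-contained proof.

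Your route is the standard Hall-theorem proof. Fundamental circuits identify the admissible partners of $u\in X\setminus Y$ as $C(u)\cap(Y\setminus X)$. Hall's condition then follows from the count $|X\cap Y|+|A|\le r\bigl((X\cap Y)\cup N(A)\bigr)\le |X\cap Y|+|N(A)|$. Keeping $X\cap Y$ inside the spanning set is exactly what absorbs circuit elements that are not available as matching partners.

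The citation buys brevity. Your argument buys transparency about the direction of the exchange, and that direction matters here. The lemma as stated guarantees $Y-h(u)+u\in\cI$, so with $X=S_{i-1}$ and $Y=\OPT$ it yields $\OPT-h(u)+u\in\cI$. The proof of \Cref{lemma:sample_progress}, however, needs the swaps $S_{i-1}-u+h(u)$ to be feasible candidates for the exponential mechanism. Your argument shows how to obtain this. Run it with the roles of $X$ and $Y$ interchanged and invert the resulting bijection, which still fixes $S_{i-1}\cap\OPT$. Since the summed swap inequalities of \Cref{lemma:borodin_matroid_distance} hold for any bijection between $B_i$ and $C_i$, nothing else in the analysis changes.

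The choice of direction is not cosmetic. In general one cannot get both exchange directions from a single bijection: the graphic matroid of $K_4$ is not base orderable.
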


 We use the following notation, some of which has been introduced in \Cref{sec:matroid_local_search}. Let $S_i$ denote the base selected at iteration $i$, $C_i = \OPT \setminus S_{i-1}$, $B_i = S_{i-1} \setminus \OPT$, and let $h: S_{i-1} \to \OPT$ be a bijection satisfying the condition in \Cref{lemma:matroid_property}. Denote $B_i = \{b_1, \dots, b_t\}$, and let $c_j = h(b_j)$ for each $j \in[t]$.
The following result was proved in \cite{borodin2017max}. 
\begin{lemma}[Lemmas 5 and 7, \cite{borodin2017max}]\label{lemma:borodin_matroid_distance}
    Suppose $\cM$ has rank $\rk>2$ and $t\ge 2$.  Then,
    \begin{enumerate}[label=(\roman*)]
        \item $\sum_{j=1}^t [f(S_{i-1}-b_j +c_j) -f(S_{i-1})] \ge f(\OPT)-2f(S_{i-1})$,
        \item $\sum_{j=1}^t [d(S_{i-1}-b_j +c_j) -d(S_{i-1})] \ge d(\OPT)-2d(S_{i-1})$.
    \end{enumerate}
\end{lemma}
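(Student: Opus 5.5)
The two parts decouple, so I will prove them separately. Throughout, write $S=S_{i-1}$, $A=S\cap\OPT$, $B=B_i=\{b_1,\dots,b_t\}$, $C=C_i=\{c_1,\dots,c_t\}$ with $c_j=h(b_j)$. Since $S$ and $\OPT$ are bases of the same size and $h$ fixes $A$, $h$ maps $B$ bijectively onto $C$. In both parts I will expand each single swap into an ``add $c_j$'' term minus a ``remove $b_j$'' term, sum over $j$, and compare with the right-hand side.

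\textbf{Part (i).} For each $j$, write
\[
f(S-b_j+c_j)-f(S)=f(c_j\mid S-b_j)-f(b_j\mid S-b_j)\ge f(c_j\mid S)-f(b_j\mid S-b_j),
\]
using submodularity, since $S-b_j\subseteq S$ and $c_j\notin S$. For the positive terms, submodularity and monotonicity give $\sum_j f(c_j\mid S)\ge f(S\cup C)-f(S)=f(S\cup\OPT)-f(S)\ge f(\OPT)-f(S)$. For the negative terms, order $B$ as $b_1,\dots,b_t$. Then $f(b_j\mid S-b_j)\le f(b_j\mid A\cup\{b_1,\dots,b_{j-1}\})$ by submodularity. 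These bounds telescope, so $\sum_j f(b_j\mid S-b_j)\le f(S)-f(A)\le f(S)$ by non-negativity. Combining the two bounds gives (i).

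\textbf{Part (ii).} A direct computation gives
\[
d(S-b_j+c_j)-d(S)=d(c_j,S)-d(c_j,b_j)-d(b_j,S).
\]
Summing over $j$ and using $d(C,S)=d(C,A)+d(C,B)$ and $\sum_j d(b_j,S)=d(B,S)=2d(B)+d(A,B)$, the left side of (ii) equals
\[
d(C,A)+d(C,B)-\textstyle\sum_j d(c_j,b_j)-2d(B)-d(A,B).
\]
The right side is $d(\OPT)-2d(S)=d(C)+d(A,C)+d(A)-2d(A)-2d(B)-2d(A,B)$. Cancelling common terms, (ii) reduces to the inequality
\[
\sum_{j\ne l} d(c_j,b_l)\;\ge\; d(C)-d(A)-d(A,B),
\]
so it suffices to show $\sum_{j\ne l}d(c_j,b_l)\ge d(C)-d(A,B)$.

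The main obstacle is this cross-distance inequality, and I will split on $t$.

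\emph{Case $t\ge3$.} For distinct $j,l$ and any $m\notin\{j,l\}$, the triangle inequality gives $d(c_j,c_l)\le d(c_j,b_m)+d(b_m,c_l)$. Summing over the $t-2$ choices of $m$ and then over unordered pairs $\{j,l\}$, each cross term $d(c_j,b_m)$ with $j\ne m$ appears exactly $t-2$ times. Dividing by $t-2$ gives $d(C)\le\sum_{j\ne m}d(c_j,b_m)$, which is enough even without the slack $d(A,B)$.

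\emph{Case $t=2$.} The previous averaging is unavailable, and here I will use the slack $d(A,B)$. Because $\rk>2$, the set $A$ is nonempty, so pick $a\in A$. Then
\[
d(c_1,c_2)\le d(c_1,b_2)+d(b_2,a)+d(a,b_1)+d(b_1,c_2)\le d(c_1,b_2)+d(c_2,b_1)+d(A,B),
\]
which is exactly the needed bound. I expect this $t=2$ case to be the delicate step, since it is the only place where both the rank assumption and the extra terms $d(A)+d(A,B)$ are genuinely needed.
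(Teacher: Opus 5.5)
Your proof is correct and complete. The exchange decomposition with telescoping over $B$ gives (i), and in fact for every $t$. For (ii), your reduction to $\sum_{j\neq l} d(c_j,b_l)\ge d(C)-d(A)-d(A,B)$ is sound, and so is how you settle it: triangle-inequality averaging over a third index when $t\ge 3$, and the path $c_1\to b_2\to a\to b_1\to c_2$ through some $a\in A$ when $t=2$.

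The paper gives no proof of its own here; it imports the statement from Borodin et al.\ (their Lemmas 5 and 7). Your argument appears to follow the standard reasoning behind those lemmas. It also correctly places the only use of $\rk>2$ in the $t=2$ case, which is why the paper handles rank $2$ separately in its appendix.
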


\begin{proof}[Proof of \Cref{lemma:sum_swaps_lowerbound}]
Recall that $h(u)=u$ for every $u\in S_{i-1}\cap\OPT$. Thus, when $t = 0$, we have $S_{i-1} = \OPT$, so the inequality becomes $\rk\phi(\OPT)\ge (\rk-1)\phi(\OPT)$.  When $t = 1$, the left-hand side becomes $\phi(\OPT) + (\rk - 1)\phi(S_{i-1})$. Since the claim holds in both cases, we assume $t \ge 2$. We get
\begin{align*}
    \sum_{u\in S_{i-1}}\phi(S_{i-1}-u+h(u)) &= \sum_{i=1}^t\phi(S_{i-1}-b_i +c_i) + (\rk-t)\phi(S_{i-1}) \\
    &\ge \phi(\OPT) + (t-2)\phi(S_{i-1}) + (\rk-t)\phi(S_{i-1}) \\
    & = \phi(\OPT) +(\rk-2)\phi(S_{i-1}).
\end{align*}
where the inequality follows by summing the inequalities in \Cref{lemma:borodin_matroid_distance} 
and rearranging.
\end{proof}

\begin{lemma}\label{lemma:hitting}
Let $V_i\subseteq V$ be a uniformly random subset of size $\lceil\frac{n}{\rk}\rceil$. Then, 
\[\Pr[(S_{i-1}\times V_i) \cap M_i \neq \emptyset] \ge 1-1/e.\]
\end{lemma}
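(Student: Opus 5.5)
The plan is to reduce the event to "the random sample $V_i$ hits $\OPT$" and then use a standard estimate for sampling without replacement. Here $M_i=\set{(u,h(u)) : u\in S_{i-1}}$, where $h:S_{i-1}\to\OPT$ is the bijection of \Cref{lemma:sum_swaps_lowerbound} (equivalently \Cref{lemma:matroid_property}).

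First, a pair $(u,h(u))\in M_i$ lies in $S_{i-1}\times V_i$ exactly when $h(u)\in V_i$, since $u\in S_{i-1}$ holds automatically. Because $h$ is a bijection onto $\OPT$, we get
\[
(S_{i-1}\times V_i)\cap M_i\neq\emptyset \iff V_i\cap\OPT\neq\emptyset .
\]
So it suffices to show $\Pr[V_i\cap\OPT=\emptyset]\le 1/e$. Here $\abs{\OPT}=\rk$, since $\OPT$ is a base, and $V_i$ is uniform of size $s=\lceil n/\rk\rceil$.

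Next, I bound the miss probability directly. If $s>n-\rk$, then $V_i$ must meet $\OPT$ and the probability is $0$. Otherwise,
\[
\Pr[V_i\cap\OPT=\emptyset]=\frac{\binom{n-\rk}{s}}{\binom{n}{s}}=\prod_{j=0}^{s-1}\frac{n-\rk-j}{n-j}\le\paren{1-\frac{\rk}{n}}^{s}.
\]
The inequality holds because each factor $1-\frac{\rk}{n-j}$ is at most $1-\frac{\rk}{n}$. Using $1-x\le e^{-x}$ and $s\ge n/\rk$, this is at most $e^{-\rk s/n}\le e^{-1}$.

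Taking complements gives the claimed bound $1-1/e$. There is no real obstacle. The only point needing care is the translation between swaps and elements of $\OPT$, which relies on $h$ being a bijection onto $\OPT$, together with the degenerate case $s>n-\rk$. The fact that $W_i$ later discards $v\in S_{i-1}$ does not matter here, since the lemma is stated for $S_{i-1}\times V_i$.
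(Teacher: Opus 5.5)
Your proposal is correct and follows the paper's proof: it reduces the event to $V_i$ hitting $\set{h(u):u\in S_{i-1}}=\OPT$ and bounds the miss probability by $\binom{n-\rk}{s}/\binom{n}{s}\le(1-\rk/n)^{s}\le e^{-\rk s/n}\le e^{-1}$. It also handles the degenerate case $s>n-\rk$ separately, exactly as the paper does.
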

\begin{proof}[Proof of \Cref{lemma:hitting}]
    Consider the probability that $V_i$ does not contain any element of $\set{h(u):u\in S_{i-1}}$. If $|V_i|>n-\rk$, this probability is zero. Otherwise, it equals
    \begin{align*}
        \frac{\binom{n-\rk}{|V_i|}}{\binom{n}{|V_i|}} &= \prod_{i=0}^{|V_i|-1}\frac{n-\rk-i}{n-i} = \prod_{i=1}^{\abs{V_i}}\paren{1-\frac{\rk}{n-i+1}} \le \paren{1-\frac{\rk}{n}}^{\abs{V_i}}\le e^{-\frac{\rk\abs{V_i}}{n}} \le e^{-1}.
    \end{align*}
where the last inequality holds since $\rk\cdot\abs{V_i} = \rk\cdot \lceil\frac{n}{\rk}\rceil \ge n$.
The claim follows.
\end{proof}

\begin{proof}[Proof of \Cref{lemma:sample_progress}]
Denote  $\alpha = 4\Delta \log( n )/\eps_0$. Fix an iteration $i\in [T]$ and let us condition on having selected a base $S_{i-1}$ after iteration $i-1$, and on some realization of the set $V_i$ at iteration $i$. Let $M_i=\set{(u,h(u)):u\in S_{i-1}}$.
By \cref{thm:exponential_mech}, the exponential mechanism guarantees that
\begin{align*}
    \Ex{\phi(S_{i-1}-u+v)-\phi(S_{i-1})} &\ge \max_{(u,v)\in W_i}[\phi(S_{i-1}-u+v)-\phi(S_{i-1})] - \alpha \\
    & = \max\set{0, \max_{\substack{(u,v)\in S_{i-1}\times (V_i\setminus S_{i-1})\\S_{i-1}-u+v\in\cI}}[\phi(S_{i-1}-u+v)-\phi(S_{i-1})]} -\alpha \\
    &\ge \max\set{0, \max_{(u,v)\in (S_{i-1}\times V_i)\cap M_i}[\phi(S_{i-1}-u+v)-\phi(S_{i-1})]} -\alpha \\
    & \ge \max\bigg\{0, \sum_{(u,v)\in (S_{i-1}\times V_i)\cap M_i}\frac{\phi(S_{i-1}-u+v)-\phi(S_{i-1})}{\abs{(S_{i-1}\times V_i)\cap M_i}}\bigg\} -\alpha
\end{align*}

 where the equality in the second line holds since $W_i$ always contains a dummy swap that does not change the current solution.   
 Unfixing the implicit conditioning on $V_i$ in iteration $i$ and taking expectation over all their possible realizations, we get
 \begin{align}
    \Ex{\phi(S_{i-1}-u+v)-\phi(S_{i-1})}  &\ge \Ex{V_i}{\max\bigg\{0, \sum_{(u,v)\in (S_{i-1}\times V_i)\cap M_i}\frac{\phi(S_{i-1}-u+v)-\phi(S_{i-1})}{\abs{(S_{i-1}\times V_i)\cap M_i}}\bigg\} -\alpha}  \nonumber\\
    & \ge   \max\bigg\{0, \E{V_i}\bigg[\sum_{(u,v)\in (S_{i-1}\times V_i)\cap M_i}\frac{\phi(S_{i-1}-u+v)-\phi(S_{i-1})}{\abs{(S_{i-1}\times V_i)\cap M_i}}\bigg]\bigg\} -\alpha   \nonumber\\
    &\ge \frac{1-1/e}{\rk}\sum_{(u,v)\in M_i}[\phi(S_{i-1}-u+v)-\phi(S_{i-1})] - \alpha \label{eq7}.
\end{align}
 The second inequality follows from Jensen's inequality, as the function \mbox{$f(x)=\max\set{0,x}$} is convex. We now focus on proving the last inequality.
First, if the sum in \eqref{eq7} is negative, the final inequality holds trivially; we therefore assume without loss of generality that it is non-negative.
Consider the following probabilistic experiment. We sample  $V_i$ as in \Cref{localsearch_sampling}. If $(S_{i-1}\times V_i) \cap M_i\neq \emptyset$, we select a uniformly random  $(u,v)\in (S_{i-1}\times V_i) \cap M_i$. Otherwise, output $\bot$. Let $p_{u,v}$ denote the probability that $(u,v)\in M_i$ is selected conditioned on the event  $(S_{i-1}\times V_i) \cap M_i\neq \emptyset$, which we denote by $G$. We have
\begin{align*}
    \E{V_i}\bigg[\sum_{(u,v)\in (S_{i-1}\times V_i)\cap M_i}\frac{\phi(S_{i-1}-u+v)-\phi(S_{i-1})}{\abs{(S_{i-1}\times V_i)\cap M_i}}\bigg]&= \Pr[G] \cdot \sum_{(u,v)\in M_i}p_{u,v}\cdot [\phi(S_{i-1}-u+v)-\phi(S_{i-1})] \\
    &\ge \frac{1-1/e}{\rk}\sum_{(u,v)\in M_i} [\phi(S_{i-1}-u+v)-\phi(S_{i-1})]
\end{align*}
where in the first equality  we use the law of total expectation, observing that the inner sum is empty and thus equals $0$ conditioned on $\overline{G}$. In the second inequality, we observe that since  $V_i$ is chosen uniformly at random, the probabilities $p_{u,v}$ must be the same for all $(u,v)\in M_i$, and therefore equal $1/|M_i|=1/\rk$.
By \Cref{lemma:hitting} we have $\Pr[G]\ge 1-1/e$, and the inequality~\eqref{eq7} follows. Using \Cref{lemma:sum_swaps_lowerbound}, we get
\begin{align*}
    \Ex{\phi(S_{i-1}-u+v)} \ge\phi(S_{i-1}) + \frac{1-1/e}{\rk}\paren{\phi(\OPT)-2\phi(S_{i-1})} -\alpha.
\end{align*}
Finally, the lemma follows by unfixing $S_{i-1}$ and taking an expectation over its possible realizations.

\end{proof}

\begin{proof}[Proof of \Cref{thm:sample_local_search_guarantee}]
The privacy guarantee follows directly from the $\eps_0$-DP of the exponential mechanism and the composition theorems (\Cref{thm:composition}), noting that we now have $T' \eqdef T+1 = O(\gamma ^{-1} \rk \log \rk)$ compositions.
 For the query
complexity, \Cref{alg:sample_local_search} performs $T$ iterations, and each iteration requires only 
$O(\abs{S_{i-1}} \cdot \abs{V_i}) = O(n)$ queries due to the sub-sampling step.
Overall, the algorithm performs $O(\gamma^{-1}n\rk\log \rk)$ queries. We proceed to prove the utility guarantee.
Let $\alpha = 4\Delta \log( n )/\eps_0$. 
We may assume that
$
    \frac{1-1/e}{\rk} \cdot \phi(\OPT) - \alpha \ge \frac{\phi(\OPT)}{8\rk}
$,
since otherwise, it can be verified that \mbox{$\phi(\OPT) < 2\rk\alpha$},  and the guarantees of \Cref{thm:sample_local_search_guarantee} hold trivially.
Let $S_0$ be the first base chosen by the algorithm. 
We have
\begin{align*}
    \Ex{\phi(S_1)} &\ge  \phi(S_0)+  \frac{1-1/e}{\rk}\paren{\phi(\OPT) -2\phi(S_0)}- \alpha\\
    &\ge \frac{1-1/e}{\rk}\cdot \phi(\OPT) - \alpha \ge \frac{\phi(\OPT)}{8\rk}
\end{align*}
where the first inequality holds by \Cref{lemma:sample_progress}, and the second since $\rk\ge 2$.
Now, fix an iteration $2\le i\le T$, and let us condition on having selected some base $S_{i-1}$ after the first $i-1$ iterations.
If $\Ex{\phi(S_{i-1})} < \phi(\OPT)/(2+\gamma) -\rk\alpha$, then, by  \Cref{lemma:sample_progress} and since $\alpha \ge 0$, we obtain
\begin{align}
    \Ex{\phi(S_i)}
    \ge \paren{1+\frac{\gamma(1-1/e)}{\rk}}\Ex{\phi(S_{i-1})} \label{eq:expectation_progress}
\end{align} 
We claim that given our choice of $T$, there must exist some $i\in[T]$ for which $\Ex{\phi(S_i)} \ge \phi(\OPT)/(2+\gamma) -\rk\alpha$. Assume for contradiction that this is not the case.  Recursively applying the bound  \eqref{eq:expectation_progress} yields
\begin{align*}
    \Ex{\phi(S_T)} &\ge \Bigl(1+\frac{\gamma(1-1/e)}{\rk}\Bigr)^{\frac{2\rk\log(8\rk)}{\gamma(1-1/e)}} \Ex{\phi(S_1)} \\
    &\ge\frac{1}{8\rk} \Bigl(1+\frac{\gamma(1-1/e)}{\rk}\Bigr)^{\frac{2\rk\log(8\rk)}{\gamma(1-1/e)}} \phi(\OPT)\\
    &> \phi(\OPT).
\end{align*}

The last inequality from the bound $(1+1/x)^x \ge \exp(1-1/(2x))$ which holds for all $x>0$.
In particular, there exists some base $S$ with $\phi(S)>\phi(\OPT)$, in contradiction.
Thus, by the law of total expectation, we get
\begin{align*}
    \Ex{\phi(S_{i^*})} &= \Ex{\Ex{\phi(S_{i^*})}\mid S_1,\dots, S_T} \ge \Ex{\max_{i\in[T]}{\phi(S_{i})}} - \tfrac{2\Delta \log T} {\eps_0} \\
    &\ge \max_{i\in[T]}\Ex{\phi(S_{i})} - \tfrac{2\Delta \log T}{\eps_0} \ge \tfrac{\phi(\OPT)}{2+\gamma} -\rk \alpha -\tfrac{2\Delta \log T}{\eps_0} \\
    &= \tfrac{\phi(\OPT)}{2 + \gamma} - \tfrac{2\Delta}{\eps_0} \cdot \left( 2\rk \log n + \log\left( \tfrac{2\rk\log \rk}{\gamma(1-1/e)}\right) \right)
\end{align*}
The second inequality holds by the fact that 
$
\mathbb{E}\left[\max_i X_i\right] \geq \max_i \mathbb{E}[X_i]
$
for any jointly distributed random variables $X_1, \ldots, X_n$.
As $1/(2+\gamma) \ge 1/2 -\gamma$, the theorem follows.
\end{proof}

\section{Local Search for Rank 2 Matroid}\label[appendix]{appendix:rank2}
This appendix provides the analysis of \Cref{alg:sample_local_search} for matroids of rank~2, thereby completing the proof of \Cref{thm:sample_local_search_guarantee} for arbitrary matroid rank.
We follow the notation of \Cref{sec:matroid_local_search}. The difficulty arises because the second inequality in \Cref{lemma:borodin_matroid_distance} may fail when $\rk=2$. Nevertheless, as its proof shows, \Cref{lemma:sum_swaps_lowerbound} holds for $t=0$ or $t=1$ even with $\rk=2$, so \Cref{lemma:sample_progress} remains valid in these cases. Therefore, to complete the proof of \Cref{thm:sample_local_search_guarantee}, it suffices to verify that the inequality in \Cref{lemma:sample_progress} continues to hold in iterations with $\rk=t=2$, namely when $S_{i-1}=\{b_1,b_2\}$ and $C_i=\{c_1,c_2\}$.

\begin{lemma}
    Suppose that $\cM$ has rank $\rk=2$. Conditioned on having selected a base $S_{i-1}=\set{b_1,b_2}$ after iteration $i-1$, the following inequality holds.
     \begin{align*}
    \Ex{\phi(S_{i})} \ge \phi(S_{i-1}) + \frac{1-1/e}{\rk}\paren{\phi(\OPT)-2\phi(S_{i-1})} - \frac{4\Delta \log( n )}{\eps_0}
\end{align*}
\end{lemma}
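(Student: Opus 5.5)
The plan is to re-run the proof of \Cref{lemma:sample_progress} unchanged, replacing only its use of \Cref{lemma:sum_swaps_lowerbound}. That proof uses just three facts about $M_i$:
\begin{itemize}
\item $M_i$ consists of $\rk=2$ feasible swaps;
\item the incoming elements of these swaps are distinct and form $C_i=\{c_1,c_2\}$, so \Cref{lemma:hitting} gives $\Pr[G]\ge 1-1/e$ and symmetry gives $p_{u,v}=1/2$;
\item $\sum_{(u,v)\in M_i}[\phi(S_{i-1}-u+v)-\phi(S_{i-1})]\ge \phi(\OPT)-2\phi(S_{i-1})$.
\end{itemize}
With $\rk=2$ the last item reads $\sum_{(u,v)\in M_i}\phi(S_{i-1}-u+v)\ge\phi(\OPT)$. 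So I will exhibit a perfect matching $M_i$ between $\{b_1,b_2\}$ and $\{c_1,c_2\}$ consisting of feasible swaps and satisfying this inequality. The rest of the proof of \Cref{lemma:sample_progress} (exponential mechanism, Jensen, law of total expectation) then applies verbatim.

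The relevance part works for any such matching. If $M_i=\{(b_1,c_1),(b_2,c_2)\}$, the resulting sets are $\{b_2,c_1\}$ and $\{b_1,c_2\}$. Monotonicity gives $f(\{b_2,c_1\})+f(\{b_1,c_2\})\ge f(c_1)+f(c_2)$. Submodularity together with $f(\emptyset)\ge 0$ gives $f(c_1)+f(c_2)\ge f(\OPT)$. The same argument works for the crossed matching.

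The diversity part is the actual content. For the matching $h$ from \Cref{lemma:matroid_property} we need $d(b_2,c_1)+d(b_1,c_2)\ge d(c_1,c_2)$, and this can fail. For example, with a pseudometric, $c_1$ may coincide with $b_2$ and $c_2$ with $b_1$. The remedy is to choose the matching. Both matchings together cover all four cross distances, and the triangle inequality gives $d(c_1,c_2)\le d(c_1,b_j)+d(b_j,c_2)$ for $j=1,2$. Summing, $d(b_1,c_1)+d(b_1,c_2)+d(b_2,c_1)+d(b_2,c_2)\ge 2d(c_1,c_2)$. Hence at least one of the two matchings has cross-distance sum at least $d(\OPT)$.

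So, if both matchings consist of feasible swaps, I take $M_i$ to be the one with larger $\sum\phi(S_{i-1}-u+v)$. Averaging the two matchings, the $f$ bound holds for each and the $d$ bound holds on average, so the better one satisfies the required inequality.

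The main obstacle is the case where only the matching $h$ is feasible, say $\{b_1,c_1\}\notin\cI$. In a rank-2 matroid with no loops among these elements, dependence of a pair means $b_1$ and $c_1$ are parallel. Then $b_1$ and $c_1$ have the same independence relations, and I would exploit this parallel-class structure to find an alternative pair of feasible swaps.

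The first thing I would try is to redefine $M_i$ as the two swaps $(b_2,c_1)$ and $(b_2,c_2)$, which keep $b_1$ and add a $c_j$. These are feasible whenever $\{b_1,c_2\}$ and $\{b_1,c_1\}$ are independent, and that needs separate handling in the parallel case. If so, their incoming elements are still exactly $C_i$, so \Cref{lemma:hitting} and the uniformity argument are unaffected. Their distance sum is $d(b_1,c_1)+d(b_1,c_2)\ge d(c_1,c_2)$ by the triangle inequality.

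If $b_1$ is parallel to $c_1$, symmetric swaps that keep $b_2$ play the same role. When $b_1\parallel c_1$ and $b_2\parallel c_2$ simultaneously, only the matching $h$ survives. In that case I would argue directly: parallelism alone gives no metric information, so I would combine the triangle inequality through $b_1$ and $b_2$ with the terms $2\phi(S_{i-1})$ that $\rk=2$ leaves available. This last subcase is where I expect the real difficulty to lie, and I would check it first before writing the remaining cases.
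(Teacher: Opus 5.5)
Your reduction is the right one: it suffices to fix, as a function of $S_{i-1}$ alone, two feasible swaps with incoming elements $c_1,c_2$ whose resulting values sum to at least $\phi(\OPT)$. Your fallback $M_i=\set{(b_2,c_1),(b_2,c_2)}$, bounded via $d(b_1,c_1)+d(b_1,c_2)\ge d(c_1,c_2)$, is exactly the paper's proof.

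The paper's assumption $f(b_1)\ge f(b_2)$ is not needed. As you note, monotonicity and submodularity give $f(\set{b_j,c_1})+f(\set{b_j,c_2})\ge f(\OPT)$ for either $j$.

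You also caught something the paper's proof skips. It chooses the kept element by $f$ and never checks that $\set{b_1,c_1}$ and $\set{b_1,c_2}$ are independent. Yet the argument of \Cref{lemma:sample_progress}, which it reuses, lower-bounds the best candidate in $W_i$ by the best swap in $(S_{i-1}\times V_i)\cap M_i$. That step needs every swap in $M_i$ to be feasible. Your case analysis repairs this whenever some $b_j$ is parallel to neither $c_1$ nor $c_2$: keep that $b_j$. If no pair is parallel, your better-of-two-matchings argument also works.

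The gap is the remaining subcase $b_1\parallel c_1$, $b_2\parallel c_2$. No argument can close it, because the statement is false there. Here is a counterexample:
\begin{itemize}
\item Place $c_1,b_2,b_1,c_2$ at the points $0,1,2,3$ of the real line, with $d(x,y)=|x-y|$.
\item Let $f\equiv 0$ and $\lambda=1$.
\item Let $\cM$ be the partition matroid with blocks $\set{b_1,c_1}$ and $\set{b_2,c_2}$, each of capacity one.
\end{itemize}
Then $\OPT=\set{c_1,c_2}$ with $\phi(\OPT)=3$. From $S_{i-1}=\set{b_1,b_2}$, the only reachable bases are $S_{i-1}$, $\set{c_1,b_2}$ and $\set{b_1,c_2}$, and all three have value $1$. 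Hence $\phi(S_i)=1$ under any selection rule. The claimed lower bound is $1+\frac{1-1/e}{2}-\frac{4\Delta\log n}{\eps_0}$, which exceeds $1$ whenever $\frac{4\Delta\log n}{\eps_0}<\frac{1-1/e}{2}$, in particular for the non-private instantiation.

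Your plan of combining the triangle inequality through $b_1,b_2$ with the $2\phi(S_{i-1})$ slack cannot help. It only gives $d(c_1,c_2)\le d(c_1,b_2)+d(b_2,b_1)+d(b_1,c_2)$. So when neither feasible swap improves, $d(\OPT)$ can reach $3\,d(S_{i-1})$, while the lemma needs it to be at most $2\,d(S_{i-1})$.

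The problem is not confined to this lemma. Moving the points to $0,0.75,1.75,2.5$ makes both feasible swaps strictly worse than staying. Then even the non-private local search started at $\set{b_1,b_2}$ never moves and ends at $0.4\,\phi(\OPT)$. So the rank-$2$ case of \Cref{thm:sample_local_search_guarantee} also fails for $\gamma<0.1$.

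The lemma therefore needs the extra hypothesis that some $b_j$ is parallel to neither $c_1$ nor $c_2$. Alternatively, rank $2$ must be handled differently, for example by selecting directly among all feasible pairs, as in the best-pair initialization of the non-private algorithm.
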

\begin{proof}
    Since $\cM$ has rank $2$, our assumption implies that $\OPT=\set{c_1,c_2}$. Without loss of generality, assume that $f(b_1)\ge f(b_2)$. 
By the triangle inequality,
\begin{align*}
    d(S_{i-1}-b_2+c_1)+d(S_{i-1}-b_2+c_2) & = d(b_1,c_1) + d(b_1,c_2)  \ge d(c_1,c_2) = d(\OPT) \label{eq1: lemma:progress}.
\end{align*}
Moreover,
\begin{align*}
  f(\OPT)-f(S_{i-1}) & \le f(\set{b_1,b_2,c_1,c_2}) - f(\set{b_1,b_2})\\
  & \le f(\set{b_1,b_2,c_1}) + f(\set{b_1,b_2,c_2}) - 2\cdot f(\set{b_1,b_2}) \\
  & \le f(\set{b_1, c_1}) + f(\set{b_1,c_2}) - 2\cdot f(\set{b_1}) \\
  & =  f(S_{i-1}-b_2+c_1) + f(S_{i-1}-b_2+c_2) - 2\cdot f(\set{b_1}) 
\end{align*}
where the first inequality follows from monotonicity, and the second and third follow from submodularity. Rearranging and using the fact that  $f(S_{i-1}) \le f(\set{b_1})+f(\set{b_2}) \le 2 f(\set{b_1})$ yields
\begin{align*}
   f(S_{i-1}-b_2+c_1) + f(S_{i-1}-b_2+c_2)  \ge f(\OPT).
\end{align*}
Now, let $M_i = \set{(b_2,c_1), (b_2,c_2)}$. By summing the previous inequalities, we obtain
\[
    \sum_{(u,v)\in M_i}[\phi(S_{i-1}-u+v)-\phi(S_{i-1})] \ge \phi(\OPT) - 2\phi(S_{i-1}).
\]
The remainder of the proof is analogous to that of \Cref{lemma:sample_progress}, but uses the current definition of $M_i$. A similar derivation yields
 \begin{align*}
    \Ex{\phi(S_{i-1}-u+v)-\phi(S_{i-1})}    &\ge \frac{1-1/e}{\rk}\sum_{(u,v)\in M_i}[\phi(S_{i-1}-u+v)-\phi(S_{i-1})] - \alpha \\
                                            & \ge \frac{1-1/e}{\rk}\paren{\phi(\OPT)-2\phi(S_{i-1})} -\alpha
\end{align*}
using that $\Pr[\set{c_1,c_2} \cap V_i \neq \emptyset] \ge 1 - e^{-1}$, by an argument similar to that used in the proof of \Cref{lemma:hitting}.
\end{proof}

\section{Sensitivity Analysis for the Uber Objective}\label[appendix]{appendix:sens_uber}

In this appendix, we show that the Uber objective defined in \Cref{sec: exp} is $1/m$-decomposable. The derivation follows a similar logic to that of the Amazon product selection objective presented in \Cref{example:decomposable}.

For a customer location $a=(x_a,y_a)$ and a grid point $b=(x_b,y_b)$, the convenience score $c(a,b) \in [0,1]$ is defined as in \Cref{sec: exp}. The relevance of a set $S$ of selected grid points is given by $f_D(S) = \frac{1}{m} \sum_{a \in D} \max_{b \in S} c(a,b)$. Furthermore, the dispersion $d(S)$ depends only on the distance $d(b, b')$ between selected grid points, which is independent of the sensitive data and is also normalized to $[0,1]$.

To establish decomposability, we associate each pickup location $a \in D$ with a monotone submodular utility function $f_a(S) = \max_{b \in S} c(a,b)$. This function represents the convenience score of the set $S$ for individual $a$ and depends exclusively on that specific individual. For $\lambda \in [0,1]$, we define the per-user objective as follows:
\[ \phi_a(S) = \frac{(1-\lambda)}{m} f_a(S) + \frac{2\lambda}{mk(k-1)} d(S) \]
where $m=|D|$. The normalization ensures that $\phi_a(S) \in [0, 1/m]$. The total objective is the sum of these individual contributions:
\[ \phi(S) = \sum_{a \in D} \phi_a(S) = \frac{1-\lambda}{m} \sum_{a \in D} f_a(S) + \frac{2\lambda}{k(k-1)} d(S). \]
Consequently, $\phi$ is $1/m$-decomposable. To derive the sensitivity bound for $\phi$, we apply the same argument as in~\Cref{example:decomposable}. If $D = (D' \setminus \{x'\}) \cup \{x\}$, then for any $S \subseteq V$,
$\abs{\phi_D(S) - \phi_{D'}(S)} = \abs{\phi_x(S) - \phi_{x'}(S)} \le 1/m$,
and thus $\phi$ has sensitivity bounded by $1/m$.

\section{Additional Experimental Results}
\label[appendix]{appendix:experiments}

In this appendix, we provide supplementary experimental evaluations to further validate the robustness and scalability of our proposed algorithms.

\paratitle{Execution time by $m$}
\Cref{fig:time_by_m} shows the scaling of execution time with respect to the number of users $m = |D|$ for the Uber location selection experiment. We emphasize that $|D|$ does not directly affect the algorithmic complexity. Rather, its impact is application-specific, as it only influences the cost of evaluating the objective oracle, which may vary significantly across different instantiations of our framework. Nevertheless, we include these results for completeness.

The results demonstrate an essentially linear dependence on $m$, with a slightly sharper increase only at the largest scale. Notably, our accelerated algorithms, \DPNOSG and \DPOSG, exhibit smaller constant factors than the baselines, further validating the practical efficiency and scalability of our approach.
\begin{figure}[t]
    \centering
    \begin{minipage}[t]{0.31\textwidth}
        \centering
        \includegraphics[width=\linewidth, height=0.14\textheight, keepaspectratio]{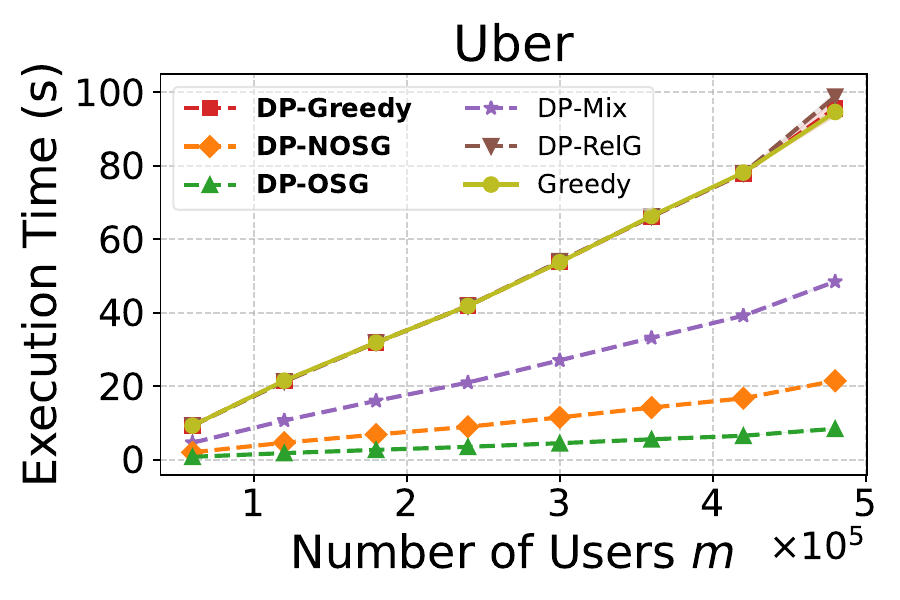}
        \caption{Execution time (seconds) as the number of users $m$ varies.}
        \label{fig:time_by_m}
    \end{minipage}
    \hfill
    \begin{minipage}[t]{0.31\textwidth}
        \centering
        \includegraphics[width=\linewidth, height=0.14\textheight, keepaspectratio]{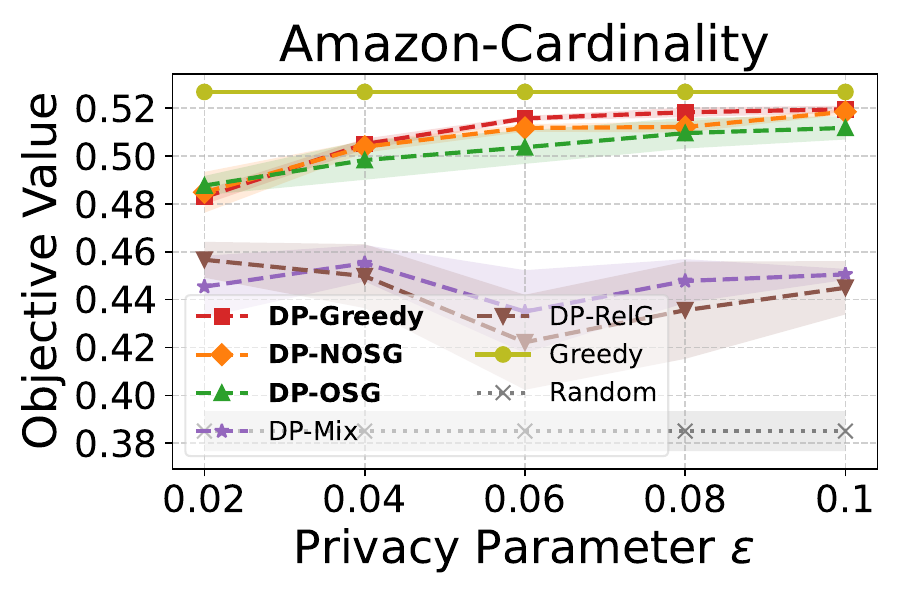}
        \caption{Objective value as the privacy parameter $\eps$ varies.}
        \label{fig:val_for_eps_amazon}
    \end{minipage}
    \hfill
    \begin{minipage}[t]{0.31\textwidth}
        \centering
        \includegraphics[width=\linewidth, height=0.14\textheight, keepaspectratio]{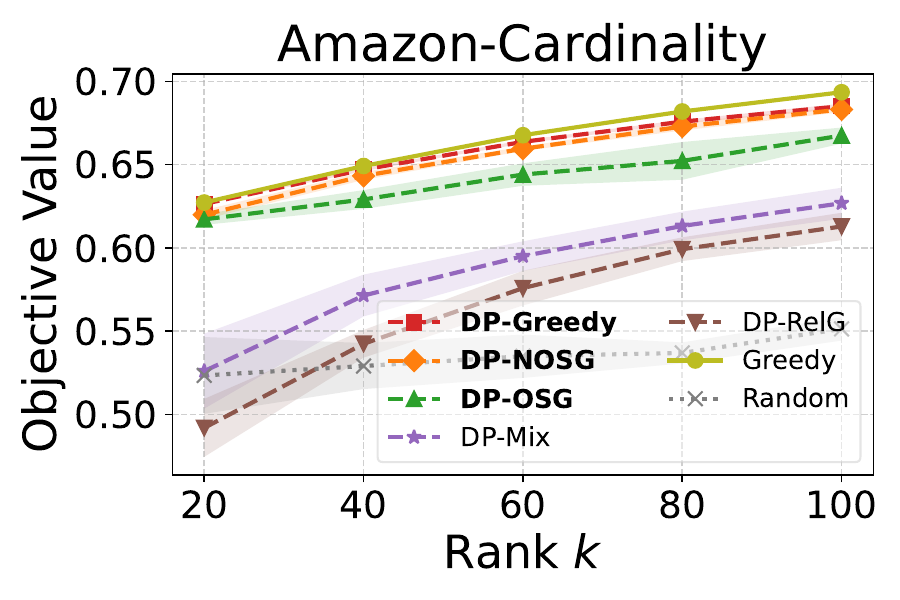}
        \caption{Objective value as the rank $k$ varies.}
        \label{fig:val_for_k_amazon}
    \end{minipage}
\end{figure}

\paratitle{Amazon-Cardinality Utility}
\Cref{fig:val_for_eps_amazon} shows the impact of the privacy parameter $\eps$ on utility for the Amazon product summarization dataset under a cardinality constraint with $k = 20$, while \Cref{fig:val_for_k_amazon} illustrates the effect of varying $k$ for $\eps = 0.1$. Overall, the results show that our algorithms consistently outperform the DP baselines and achieve utility competitive with the non-private \Greedy baseline. Performance experiments are omitted, as the observed trends are similar to those in the Uber experiment.

\fi

\end{document}